\newtheorem{theorem}{Theorem}[section]
\newtheorem{lemma}[theorem]{Lemma}
\newtheorem{proposition}[theorem]{Proposition}
\theoremstyle{definition}
\newtheorem{definition}[theorem]{Definition}
\newtheorem*{problem}{Problem}
\newtheorem{example}[theorem]{Example}
\newtheorem{examples}[theorem]{Examples}
\newtheorem{remark}[theorem]{Remark}
\newtheorem{remarks}[theorem]{Remarks}
\numberwithin{equation}{section}
\newcommand{\beq}{\begin{equation}}
\newcommand{\eeq}{\end{equation}}
\newcommand{\Real}{\mathop{\mathrm{Re}}}
\newcommand{\Imag}{\mathop{\mathrm{Im}}}
\newcommand{\rmd}{\mathrm{d}}
\newcommand{\rmi}{\mathrm{i}}
\newcommand{\rmb}{\mathrm{b}}
\newcommand{\rmf}{\mathrm{f}}
\newcommand{\CCm}{{\C^{\scriptscriptstyle (+)}_{-1/2}}}
\newcommand{\CCp}{{\C^{\scriptscriptstyle (+)}_{1/2}}}
\newcommand{\N}{\mathbb{N}}
\newcommand{\Z}{\mathbb{Z}}
\newcommand{\R}{\mathbb{R}}
\newcommand{\C}{\mathbb{C}}
\newcommand{\SSS}{\mathbb{S}}
\newcommand{\cB}{\mathcal{B}}
\newcommand{\cF}{\mathcal{F}}
\newcommand{\cC}{\mathcal{C}}
\newcommand{\cI}{\mathcal{I}}
\newcommand{\cG}{\mathcal{G}}
\newcommand{\cA}{\mathcal{A}}
\newcommand{\cW}{\mathcal{W}}
\newcommand{\cZ}{\mathcal{Z}}
\newcommand{\cL}{\mathcal{L}}
\newcommand{\hJ}{\widehat{J}}
\newcommand{\tg}{\widetilde{g}}
\newcommand{\tog}{\widetilde{\og}}
\newcommand{\tf}{\widetilde{f}}
\newcommand{\tJ}{\widetilde{J}}
\newcommand{\oc}{\mathfrak{c}}
\newcommand{\og}{\mathfrak{g}}
\newcommand{\om}{\mathfrak{m}}
\newcommand{\oC}{\mathfrak{C}}
\newcommand{\oJ}{\mathfrak{J}}
\newcommand{\bx}{\boldsymbol{x}}
\newcommand{\bPhi}{\boldsymbol{\Phi}}
\newcommand{\bPsi}{\boldsymbol{\Psi}}
\newcommand{\sds}{\strut\displaystyle}
\def\staccrel#1#2{\mathrel{\mathop{#1}\limits_{#2}}}
\newcommand{\ud}{\frac{1}{2}}
\begin{document}

\title[Holomorphic extensions]{Holomorphic extensions associated \\ with series expansions}

\author[E. De Micheli]{Enrico De Micheli}
\address{\sl IBF - Consiglio Nazionale delle Ricerche \\ Via De Marini, 6 - 16149 Genova, Italy \\
E-mail: enrico.demicheli@cnr.it}
\author[G. A. Viano]{Giovanni Alberto Viano}
\address{\sl Dipartimento di Fisica - Universit\`a di Genova\\
Istituto Nazionale di Fisica Nucleare - Sezione di Genova \\
Via Dodecaneso, 33 - 16146 Genova, Italy}
%E--mail: viano@ge.infn.it}

\subjclass[2010]{Primary 30B10, 30B40, 42A32, 81T28}
\keywords{Complex and harmonic analysis, Analytic continuation, Probability and Quantum Field Theory, KMS condition}

\date{}

\begin{abstract}
We study the holomorphic extension associated
with power series, i.e., the analytic continuation from the unit disk
to the cut-plane $\C\setminus[1,+\infty)$. Analogous results are obtained
also in the study of trigonometric series:
we establish conditions on the series coefficients which are sufficient to guarantee
the series to have a KMS analytic structure.
In the case of power series we show the connection between
the unique (Carlsonian) interpolation of the coefficients of the series and the
Laplace transform of a probability distribution.
Finally, we outline a procedure which allows us to obtain a numerical approximation of the
jump function across the cut
starting from a finite number of power series coefficients. By using the same
methodology, the thermal Green functions at real time can be numerically approximated
from the knowledge of a finite number of 
noisy Fourier coefficients in the expansion of the thermal Green functions along the 
imaginary axis of the complex time plane.
\end{abstract}

\maketitle

\section{Introduction}
\label{se:introduction}
The problem of the analytic extension associated with power series traces back to classical
results by Le Roy \cite{LeRoy}, who gives appropriate conditions on the coefficients
$g_k$ of a power series $\sum_{k=0}^\infty g_k z^k$ ($z\in\C, |z|<1$) in order to
guarantee that the function $G(z)$ to which the series converges in the open unit disk
admits a holomorphic extension from the domain $|z|<1$ up to the cut-plane
$\SSS\doteq\{z\in\C\setminus [1,+\infty)\}$. The conditions required by Le Roy's theorem are that
the coefficients $g_k$ can be regarded as the restriction to the integers of a function
$\tg(\lambda)$ ($\lambda\in\C$), holomorphic in the half-plane $\Real\lambda>-\frac{1}{2}$,
and, in addition, that there exist two constants $A$ and $N$ such that
$|\tg(\lambda)|\leqslant A(1+|\lambda|)^N$ ($\Real\lambda>-\frac{1}{2}$).
Similar results are due to Lindel\"of \cite{Lindelof} and Bieberbach \cite{Bieberbach}.
More recently, Stein and Wainger \cite{Stein} obtained very deep and general results on this topic
reconsidering the problem in the framework of the Hardy space theory. They assume the
coefficients $g_k$ to be the restriction to the integers of a function $\tg(\lambda)$
(i.e., $\tg(\lambda)|_{\lambda=k}=g_k$), holomorphic in the half-plane $\Real\lambda>-\frac{1}{2}$,
which is supposed also to belong to the Hardy space $H^2(\CCm)$ with norm
$\|\tg\|_2\doteq\sup_{\ell>-\ud}\left(\int_{-\infty}^{+\infty}|\tg(\ell+\rmi\nu)|^2\,\rmd\nu\right)^{\ud}$,
$\CCm\doteq\{\lambda\in\C\,:\,\Real\lambda>-\frac{1}{2}\}$.
Correspondingly, they consider the class of functions $G(z)$, analytic in the complex
$z$-plane ($z=x+\rmi y$; $x,y\in\R$) slit along the positive real axis from $1$ to $+\infty$.
The space of functions analytic in $\SSS$ for which 
$\sup_{y\neq 0}\left(\int_{-\infty}^{+\infty}|G(x+\rmi y)|^2\,\rmd x\right)^{\ud}<\infty$
is denoted by $H^2(\SSS)$. They prove that $G\in H^2(\SSS)$ if and only if
$g_k=\tg(\lambda)|_{\lambda=k}$ with $\tg\in H^2(\CCm)$;
moreover, the jump function across the cut $[1,+\infty)$ (i.e.,
$\lim_{y\to 0,y>0}[G(x+\rmi y)-G(x-\rmi y)]$, ($x\in [1,+\infty)$) exists in $L^2$-norm.

\vskip 0.3cm

In spite of these very important results several problems remain open, some of which are:
\begin{enumerate}
\item Given a power series of the form $\sum_{k=0}^\infty g_k z^k$ ($z\in\C$, $|z|<1$),
how can we establish if the coefficients $\{g_k\}$ are the restriction to the integers 
of a function $\tg(\lambda)$ ($\lambda\in\C$) belonging to $H^2(\CCm)$?

\item In several applications (particularly in those suggested by physical problems)
the sole property of existence of the jump function in $L^2$-norm is not sufficient,
and some smoothness condition (at least $C^0$-continuity) is often required. 
It is thus relevant to explore what conditions on the coefficients
$\{g_k\}$ are sufficient to guarantee this requirement.

\item It is very important, especially in the applications, to be able to recover the jump function
across the cut from the coefficients of the power series, which are supposed to be
known. Can we provide an algorithm capable to solve this problem?

\item In the above problem of reconstructing the discontinuity function across the cut,
only a finite number of coefficients of the power series are available in the actual numerical computation. 
Furthermore, they are necessarily affected by errors (at least, by roundoff error). 
Therefore, this problem is ill-posed in the sense of Hadamard \cite{Hadamard}, and
appropriate methods of regularization must be found. Suitable smoothness conditions turn out to be 
very useful also in this connection in order to obtain stable algorithms of reconstruction.

\item A rather natural question arises: Can appropriate conditions
on the coefficients $\{g_k\}$ be found, which are sufficient to guarantee 
the jump function across the cut to represent the density of a probability distribution? 

\item Last but not least, it would be illuminating to exhibit explicit examples of power series
which admit a holomorphic extension from the open unit disk to $\SSS$,
and, further, to reconstruct explicitly the jump function across the cut from the coefficients $\{g_k\}$.
\end{enumerate}

Another case of holomorphic extension which attracted the attention of physicists from long time
is associated with trigonometric series, and leads to the reconstruction of the Green functions
in thermal Quantum Field Theory (QFT) from imaginary time to real time. 
In particular, it is crucial to find conditions on the Fourier coefficients of the expansion 
of the thermal Green functions developed along the imaginary axis of the complex time-plane, 
which are sufficient to yield the so-called KMS (Kubo, Martin, Schwinger \cite{Haag})
analytic structure in thermal QFT, which is considered a milestone in the connection
between statistical physics and QFT. Although we are far for solving this problem in its
full generality, nevertheless we can establish appropriate conditions on the coefficients of the
trigonometric expansions which are sufficient for the series to have a KMS analytic structure.

It is always of great interest in mathematics the surprising connection that often emerges
between apparently disconnected ideas and theories. Some particularly striking instances
exist in the interaction between probability theory and analysis. One of the simplest
and deepest results in this domain is the elegant proof of Weierstrass' approximation
theorem through Bernstein's polynomials \cite{Bernstein}. Another amazing connection is the relationship
between plane Brownian motion and the theory of the harmonic functions. The interested reader
can find an excellent review on these topics written by Kahane \cite{Kahane}.
``Si parva licet componere magnis''\footnote{
Vergilius, Georgica. Recognovit O. G\"uthling. IV, 176, Teubner, Leipzig, 1904.},
one can say that any result
in this direction can deserve some interest. It is in this spirit that we study 
the connection between the jump function across the cut and the density of a probability distribution.
Strangely enough, in Kahane's review paper, in spite of the very wide spectrum of his analysis,
the role of the holomorphic extension associated with series expansions is not mentioned. 
One of the purposes of the present paper is precisely showing the interest on these topics 
in pure and applied mathematics.

The paper is organized as follows. In Section \ref{se:holo} we give preparatory lemmas,
which relate the Bernoulli trials, and, accordingly, the binomial distribution to the
Bernstein polynomials and the Hausdorff conditions. The latter will be shown to be sufficient 
to yield a unique Carlsonian interpolation of the coefficients of the series expansions being considered.
In the same section we study the main analytic properties of these Carlsonian interpolations,
which turn out to be very relevant for proving the subsequent theorems on holomorphic extensions.
Section \ref{se:kms} is devoted to the analytic continuation associated with trigonometric
series, and we shall find conditions on the Fourier coefficients which are sufficient to 
prove that the sum of these series has a KMS analytic structure.
In Section \ref{se:taylor}
the holomorphic extension associated with power series from the unit disk to the cut-plane
$\SSS$ is studied. In particular, we prove that the restriction
to non-negative integers of the Laplace transform of the jump function across the cut
yields the coefficients of the power series.
Moreover, in Subsection \ref{subse:connection-jump} the relationship between the discontinuity
function across the cut and probability theory will be addressed. 
In Section \ref{se:reconstruction} we present an algorithm which provides a regularized
numerical approximation of the
jump function across the cut starting from a finite number of coefficients of the power series.
The algorithm we present differs significantly from standard regularization methods, such as the Tikhonov procedure,
and makes no use of \emph{a-priori} global bounds on the solution.
We show that an analogous algorithm works also in the case of
trigonometric series: in particular, we can construct a regularized numerical approximation of 
the thermal Green functions at real time from those at imaginary time.
Finally, the Appendix is devoted to a rapid analysis of the KMS analytic structure for the convenience of the
readers who are not familiar with QFT.

\vskip 0.3cm

\section{Preparatory lemmas}
\label{se:holo}

\subsection{Bernoulli trials, binomial distribution, Bernstein polynomials, moment sequence, and Hausdorff conditions}
\label{subse:bernoulli}

As is well known a \textit{probability space} is a triple $(\Omega,\Sigma,P)$ of a sample space $\Omega$,
a $\sigma$-algebra $\Sigma$ of sets in it and a probability measure $P$ on $\Sigma$.
A \textit{probability measure} is a function assigning a value $P\{A\} \geqslant 0$ to each set $A\in\Sigma$
such that $P\{\Omega\}=1$, with the addition rule $P\{\cup_{n} A_n\}=\sum_{n} P\{A_n\}$ holding
for every countable collection of mutually non-overlapping sets $A_n$ in $\Sigma$.
Let $x\in[0,1]$ and $X$ be a random variable on the probability space $(\Omega,\Sigma,P)$
taking only the value $0$ and $1$ according to the Bernoulli law: $P\{X=1\}=x$ and
$P\{X=0\}=1-x$ (i.e., $x$ is the probability of ``success'' and $(1-x)$ of ``failure'').
From Bernoulli's theorem it follows that the probability of $k$ successes in $n$
Bernoullian trials is given by: $\binom{n}{k} x^k (1-x)^{n-k}$ ($k=0,1,2,\ldots,n$).
Suppose now that a random variable $X$ assumes the values $0,1,2,\ldots,n$, where $n$ is a fixed
positive integer, and suppose that the probability $P\{X=k\}$ is given by:
\beq
P\{X=k\} = \binom{n}{k} x^k (1-x)^{n-k} \qquad (k=0,1,2,\ldots,n).
\nonumber\label{1.0}
\eeq
This assignment of probabilities is permissible because the sum of all the point probabilities is
\beq
\sum_{k=0}^n P\{X=k\} = \sum_{k=0}^n \binom{n}{k} x^k (1-x)^{n-k} = 1.
\nonumber\label{1.1}
\eeq
The corresponding distribution function $F_X$ is said to be a binomial distribution with
parameters $n$ and $x$. Its values may be computed by the summation formula
\beq
F_X(t) = \sum_{0\leqslant k\leqslant t} \binom{n}{k} x^k (1-x)^{n-k}.
\nonumber\label{1.1bis}
\eeq
Next, we introduce the Bernstein polynomials $B_{n,f}(x)$ for a function $f(x)$ defined
on $[0,1]$ (see \cite{Lorentz})
\beq
B_{n,f}(x) \doteq \sum_{k=0}^n f\left(\frac{k}{n}\right)\binom{n}{k} x^k (1-x)^{n-k}.
\nonumber\label{1.2}
\eeq
Evidently, if $f(x) \equiv 1$, we have:
\beq
B_{n,1}(x) = \sum_{k=0}^n \binom{n}{k} x^k (1-x)^{n-k} = 1.
\label{1.3}
\eeq
Now, let $F$ be a probability distribution concentrated on the closed interval $[0,1]$; the
$k$th moment $\mu_k$ of $F$ is defined by
\beq
\mu_k = E\left\{X^k\right\} \doteq \int_0^1 x^k \,F\{\rmd x\},
\label{1.4}
\eeq
where $E$ denotes the expectation value and $X$ is the coordinate variable. 
Taking differences, it is seen that
\beq
\Delta\mu_k \doteq \mu_{k+1}-\mu_k = -E\left\{X^k(1-X)\right\},
\nonumber\label{1.5}
\eeq
By induction, we obtain:
\beq
(-1)^r \Delta^r\mu_k = E\left\{X^k(1-X)^r\right\},
\label{1.6}
\eeq
where
\beq
\Delta^r\mu_k = \underbrace{\Delta\cdot\Delta\cdots\Delta}_{r-\mathrm{times}}\mu_k,
\label{1.7}
\eeq
$\Delta^0$ being the identity operator. Then, in view of \eqref{1.3}, the expectation of the
Bernstein polynomial $B_{n,1}(x)$ is given by
\beq
\begin{split}
E\{B_{n,1}\}
&=\int_0^1\left[\sum_{k=0}^n \binom{n}{k} x^k (1-x)^{n-k}\right]\, F\{\rmd x\} \\
&=\sum_{k=0}^n \binom{n}{k}\int_0^1 x^k (1-x)^{n-k}\,F\{\rmd x\}=1.
\end{split}
\label{1.8}
\eeq
Next, taking into account \eqref{1.6}, we obtain the formula:
\beq
E\{B_{n,1}\}=\sum_{k=0}^n \binom{n}{k} (-1)^{n-k} \Delta^{n-k}\mu_k,
\label{1.9}
\eeq
which leads us to introduce the following definition.

\begin{definition}
We denote by $w_k^{(n)}\{\mu\}$ the terms appearing on the right hand side (r.h.s.) 
of formula \eqref{1.9}: i.e.,
\beq
w_k^{(n)}\{\mu\} \doteq \binom{n}{k} (-1)^{n-k} \Delta^{n-k}\mu_k \qquad (k=0,1,\ldots,n),
\label{1.10}
\eeq
and call them the \emph{Bernstein weights} for the sequence $\{\mu_k\}_{k=0}^\infty$.
\label{def:1}
\end{definition}

It follows from formulae \eqref{1.8}, \eqref{1.9}, and \eqref{1.10} that
\beq
\sum_{k=0}^n w_k^{(n)}\{\mu\} = 1.
\label{1.11}
\eeq
We can now state the following proposition due to Hausdorff.

\begin{proposition}
\label{pro:1}
A sequence of real numbers
$\{\mu_k\}_{k=0}^\infty$ ($\mu_0=1$) represents the moments
\eqref{1.4} of some probability distribution $F$ concentrated on $[0,1]$ if and only if
the Bernstein weights $w_k^{(n)}\{\mu\}$ satisfy the condition
\beq
w_k^{(n)}\{\mu\} \geqslant 0 \qquad (n = 0,1,2,\ldots; \, k=0,1,\ldots,n).
\label{1.12}
\eeq
\end{proposition}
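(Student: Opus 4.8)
The plan is to treat the two implications separately; the ``only if'' part is immediate, whereas the ``if'' part --- the genuine content of Hausdorff's theorem --- will be obtained by manufacturing, out of the Bernstein weights themselves, an explicit family of discrete probability measures on $[0,1]$ and passing to a weak limit. For necessity, suppose $\{\mu_k\}$ are the moments \eqref{1.4} of a probability distribution $F$ concentrated on $[0,1]$; taking $r=n-k$ in \eqref{1.6} gives $(-1)^{n-k}\Delta^{n-k}\mu_k=E\{X^k(1-X)^{n-k}\}=\int_0^1 x^k(1-x)^{n-k}\,F\{\rmd x\}\geqslant 0$, since the integrand is nonnegative on $[0,1]$ and $F$ is a positive measure, and multiplying by $\binom{n}{k}>0$ and recalling \eqref{1.10} yields $w_k^{(n)}\{\mu\}\geqslant 0$.

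For sufficiency, assume \eqref{1.12}. The key step is the purely algebraic identity
\beq
\sum_{k=0}^n\binom{k}{m}\,w_k^{(n)}\{\mu\}=\binom{n}{m}\,\mu_m\qquad(0\leqslant m\leqslant n),
\nonumber
\eeq
which follows by elementary finite-difference algebra: writing $b_{n,k}(x)\doteq\binom{n}{k}x^k(1-x)^{n-k}$ for the Bernstein basis polynomials, the substitution $x^\ell\mapsto\mu_\ell$ turns $b_{n,k}(x)$ into $w_k^{(n)}\{\mu\}$ (expand $(1-x)^{n-k}$ and compare with \eqref{1.10}), and turns the identity $\sum_{k=0}^n\binom{k}{m}b_{n,k}(x)=\binom{n}{m}x^m$ (a consequence of the binomial theorem and of $\binom{n}{k}\binom{k}{m}=\binom{n}{m}\binom{n-m}{k-m}$) into the asserted relation, whose $m=0$ case is \eqref{1.11}. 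Next, by \eqref{1.11} and \eqref{1.12},
\beq
F_n\doteq\sum_{k=0}^n w_k^{(n)}\{\mu\}\,\delta_{k/n}
\nonumber
\eeq
is a probability measure supported in $[0,1]$, where $\delta_t$ denotes unit mass at the point $t$. Since the probability measures on the compact interval $[0,1]$ form a weak-$*$ compact set, Helly's selection theorem furnishes a subsequence $F_{n_j}$ converging weakly to a probability measure $F$ on $[0,1]$. It remains to identify its moments. For fixed $m$ the $m$-th moment of $F_n$ is $\sum_{k=0}^n(k/n)^m w_k^{(n)}\{\mu\}$; since $\sum_{k=0}^n w_k^{(n)}\{\mu\}=1$ and $|(k/n)^m-\binom{k}{m}/\binom{n}{m}|=O(1/n)$ uniformly in $0\leqslant k\leqslant n$, the identity above gives $\sum_{k=0}^n(k/n)^m w_k^{(n)}\{\mu\}\to\mu_m$ as $n\to\infty$. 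As $x\mapsto x^m$ is continuous and bounded on $[0,1]$, weak convergence yields $\int_0^1 x^m\,F\{\rmd x\}=\lim_{j\to\infty}\int_0^1 x^m\,F_{n_j}\{\rmd x\}=\mu_m$, which is precisely \eqref{1.4}.

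The main obstacle is the sufficiency direction, and within it the two ingredients that drive the limiting argument: the algebraic identity linking the Bernstein weights to the raw moments, and the uniform comparison of $(k/n)^m$ with the descending-factorial ratio $\binom{k}{m}/\binom{n}{m}$. A logically equivalent, more functional-analytic route avoids the measures $F_n$: one introduces the linear map $L$ on polynomials determined by $L(x^\ell)=\mu_\ell$, observes that $L(b_{n,k})=w_k^{(n)}\{\mu\}$, and --- by approximating any polynomial $p\geqslant 0$ on $[0,1]$ uniformly by its Bernstein polynomials $B_{n,p}=\sum_k p(k/n)b_{n,k}$, which are nonnegative combinations of the $b_{n,k}$ because $p(k/n)\geqslant 0$, and using \eqref{1.12} --- shows that $L(p)\geqslant 0$ whenever $p\geqslant 0$ on $[0,1]$; then $L$ extends, via Weierstrass' theorem and the bound $|L(p)|\leqslant\sup_{x\in[0,1]}|p(x)|$, to a positive linear functional on $C([0,1])$, and the Riesz representation theorem produces $F$. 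Either way, density of the polynomials in $C([0,1])$ makes the Hausdorff moment problem on $[0,1]$ determinate, so $F$ is unique and in fact the whole sequence $F_n$ converges weakly to it --- a remark worth recording even though it is not needed for the statement.
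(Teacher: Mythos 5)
Your proof is correct and is, in substance, exactly the argument the paper delegates to its reference: the paper's ``proof'' of Proposition \ref{pro:1} is only a citation of Feller, and your construction of the discrete laws $F_n=\sum_k w_k^{(n)}\{\mu\}\,\delta_{k/n}$, the identity $\sum_k\binom{k}{m}w_k^{(n)}\{\mu\}=\binom{n}{m}\mu_m$, and the Helly selection plus moment-convergence step reproduce that classical proof (the necessity half via \eqref{1.6} is likewise the standard one). The only caveat is in your closing functional-analytic aside: to get $L(B_{n,p})\to L(p)$ from uniform convergence alone you would need continuity of $L$ in the sup norm, which is what is being proved; the aside is salvageable because $B_{n,p}$ has degree at most $\deg p$ and converges to $p$ coefficientwise, but as written that step is not justified --- fortunately it is not used in your main argument.
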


\begin{proof}
The proof is given in \cite[Ch. VII, Section 3, Theorem 2]{Feller}.
\end{proof}

So far we have considered a moment sequence induced by
probability theory through formula \eqref{1.4}, and, accordingly,
we have introduced Bernstein weights and Proposition \ref{pro:1}. 
However, the concept of \emph{moment sequence} can be formulated in a more general setting.

\begin{definition}
A sequence of (real) numbers $\{\mu_k\}_{k=0}^\infty$ is a moment sequence if there exists a bounded
signed measure $\alpha$ on $\left([0,1],\cB_{[0,1]}\right)$ (where $\cB_{[0,1]}$ is the Borel field consisting
of all the ordinary Borel subsets in $[0,1]$) such that for every integer $k\geqslant 0$:
\beq
\mu_k=\int_0^1 t^k\,\rmd\alpha(t).
\nonumber\label{ZZ:1}
\eeq
\label{def:2}
\end{definition}

With a small abuse of language and notation, we continue to call \emph{Bernstein weights} the
terms $w_k^{(n)}\{\mu\}$ defined in \eqref{1.10}, even when the sequence $\{\mu_k\}_{k=0}^\infty$
is induced by a bounded signed measure. 
Of course, the definition $\Delta\mu_k\doteq\mu_{k+1}-\mu_k$ and, accordingly,
equality \eqref{1.7} and the quantity $\binom{n}{k}(-1)^{n-k}\Delta^{n-k}\mu_k$ are 
well defined, but in general equality \eqref{1.11} does not necessarily hold. \\

Next, we can state the following proposition due to Hausdorff.

\begin{proposition}
\label{pro:2}
If the Bernstein weights for the sequence $\{\mu\}_{k=0}^\infty$ satisfy the condition
\beq
\sum_{k=0}^n \left|w_k^{(n)}\{\mu\}\right| < L \qquad (n = 0,1,2,\ldots; \ L = \mathrm{positive~constant}),
\nonumber\label{1.13}
\eeq
then there exists a bounded signed measure $\alpha$ on $\left([0,1],\cB_{[0,1]}\right)$,
such that for every integer $k\geqslant 0$:
\beq
\mu_k = \int_0^1 t^k \,\rmd \alpha(t).
\label{1.14}
\eeq
\end{proposition}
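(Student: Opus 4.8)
The plan is to follow the classical Hausdorff moment-problem strategy, now adapted to the signed (finite total variation) setting. First I would introduce, for each $n$, the discrete signed measure $\alpha_n$ on $[0,1]$ that places mass $w_k^{(n)}\{\mu\}$ at the point $k/n$, i.e.\ $\alpha_n \doteq \sum_{k=0}^n w_k^{(n)}\{\mu\}\,\delta_{k/n}$. The hypothesis $\sum_{k=0}^n |w_k^{(n)}\{\mu\}| < L$ says precisely that the total variation norms $\|\alpha_n\|$ are uniformly bounded by $L$, so the $\alpha_n$ live in a fixed ball of the dual space $C[0,1]^*$. By the Banach--Alaoglu theorem (Helly's selection theorem in this one-dimensional concrete guise) there is a subsequence $\alpha_{n_j}$ converging weak-$*$ to some bounded signed measure $\alpha$ on $\left([0,1],\cB_{[0,1]}\right)$, with $\|\alpha\|\leqslant L$.

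Next I would identify the moments of $\alpha$ with the given sequence. The key computation is that integrating $t^m$ against $\alpha_n$ reproduces (for $n\geqslant m$) the expectation of the Bernstein polynomial of $t^m$, which by the algebra behind \eqref{1.6}--\eqref{1.9} is a known linear combination of $\mu_0,\dots,\mu_n$; a direct finite-difference identity shows
\beq
\int_0^1 t^m\,\rmd\alpha_n(t) = \sum_{k=0}^n \left(\frac{k}{n}\right)^m w_k^{(n)}\{\mu\}
   = \mu_m + O\!\left(\frac{1}{n}\right),
\nonumber
\eeq
where the error term is controlled uniformly by $L$ (this is the usual estimate that $B_{n,f}\to f$ with the Bernstein operator applied to the monomial $f(t)=t^m$, here only needing the finitely many low-order moments). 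Since $t\mapsto t^m$ is continuous on $[0,1]$, weak-$*$ convergence along $n_j$ gives $\int_0^1 t^m\,\rmd\alpha(t) = \lim_j \int_0^1 t^m\,\rmd\alpha_{n_j}(t) = \mu_m$ for every $m\geqslant 0$, which is exactly \eqref{1.14}.

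The main obstacle, and the place where care is needed, is the moment-matching step rather than the compactness step: one must verify that $\sum_{k=0}^n (k/n)^m\, w_k^{(n)}\{\mu\}$ genuinely converges to $\mu_m$, using only the definition \eqref{1.10} of the Bernstein weights together with the uniform bound $L$, and not the positivity that was available in Proposition \ref{pro:1}. This is done by writing $(k/n)^m$ in the factorial (falling-power) basis, so that $\sum_k (k)_{(r)}/(n)_{(r)}\,w_k^{(n)}\{\mu\}$ telescopes via the finite-difference structure to $\mu_r$ exactly, and then estimating the change of basis; the error between $(k/n)^m$ and its factorial expansion is $O(1/n)$ uniformly on the grid, and multiplying by a quantity bounded in total variation by $L$ keeps the error $O(L/n)\to 0$. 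One should also remark that, in contrast to \eqref{1.11}, here $\mu_0$ need not equal $1$, so $\alpha$ is a genuine signed measure of total mass $\mu_0$; and uniqueness of $\alpha$ (which is not asserted in the statement) would follow separately from the fact that polynomials are dense in $C[0,1]$, should it be wanted later.
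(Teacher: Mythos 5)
Your argument is correct and is essentially the classical Hausdorff--Widder proof that the paper itself invokes only by citation (its ``proof'' of Proposition \ref{pro:2} is a pointer to Widder, Theorem 2b, p.~103): one places the masses $w_k^{(n)}\{\mu\}$ at the nodes $k/n$, uses the uniform total-variation bound $L$ to extract a weak-$*$ (Helly) limit $\alpha$, and matches moments via the exact algebraic identity $\sum_{k=0}^n \binom{k}{r}\binom{n}{r}^{-1}w_k^{(n)}\{\mu\}=\mu_r$ together with the uniform $O(1/n)$ comparison between $(k/n)^r$ and the falling-factorial ratio. No gaps; the approach coincides with the cited source.
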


\begin{proof}
The proof is given in \cite[Theorem 2b, p. 103]{Widder}.
\end{proof}

\begin{proposition}
\label{pro:3}
If there exists a constant $M>0$ such that, for a given number $p>1$:
\beq
(n+1)^{(p-1)}\sum_{k=0}^n \left|w_k^{(n)}\{\mu\}\right|^p < M \qquad (n = 0,1,2,\ldots),
\label{1.15}
\eeq
then $\alpha(t)$ in representation \eqref{1.14} is the integral of a function of class $L^p$: i.e., 
there exists $\varphi \in L^p(0,1)$ such that
\beq
\mu_k = \int_0^1 t^k \varphi(t)\,\rmd t \qquad (k=0,1,2,\ldots).
\label{1.16}
\eeq
\end{proposition}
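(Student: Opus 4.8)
The plan is to build on Proposition~\ref{pro:2}, which already produces a bounded signed measure $\alpha$ with $\mu_k=\int_0^1 t^k\,\rmd\alpha(t)$ once the weaker $\ell^1$-type bound $\sum_k|w_k^{(n)}\{\mu\}|<L$ holds; note that by H\"older's inequality the hypothesis \eqref{1.15} implies
\beq
\sum_{k=0}^n \left|w_k^{(n)}\{\mu\}\right|
\leqslant (n+1)^{1-1/p}\left(\sum_{k=0}^n \left|w_k^{(n)}\{\mu\}\right|^p\right)^{1/p}
< M^{1/p},
\nonumber
\eeq
so Proposition~\ref{pro:2} applies and $\alpha$ exists. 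The remaining task is to upgrade $\alpha$ from a bounded signed measure to one that is absolutely continuous with an $L^p(0,1)$ density. First I would recall the classical approximation, already implicit in \eqref{1.8}--\eqref{1.10}, that the Bernstein weights are the "discrete derivatives" whose associated step functions reconstruct $\alpha$: explicitly, the piecewise-constant functions built from the $w_k^{(n)}\{\mu\}$, namely $\varphi_n(t)\doteq (n+1)\sum_{k=0}^n w_k^{(n)}\{\mu\}\,\mathbf{1}_{[k/(n+1),(k+1)/(n+1))}(t)$, satisfy $\int_0^1 t^j\,\varphi_n(t)\,\rmd t \to \mu_j$ for each fixed $j$ as $n\to\infty$ (this is the standard Hausdorff inversion used in the proof of Proposition~\ref{pro:2} in Widder).

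The key step is a uniform $L^p$ bound on these approximants: a direct computation gives
\beq
\int_0^1 |\varphi_n(t)|^p\,\rmd t
= (n+1)^{p}\sum_{k=0}^n \left|w_k^{(n)}\{\mu\}\right|^p \cdot \frac{1}{n+1}
= (n+1)^{p-1}\sum_{k=0}^n \left|w_k^{(n)}\{\mu\}\right|^p < M,
\nonumber
\eeq
so $\{\varphi_n\}$ is bounded in $L^p(0,1)$ by $M^{1/p}$. Since $p>1$, the space $L^p(0,1)$ is reflexive, hence the bounded sequence $\{\varphi_n\}$ has a weakly convergent subsequence $\varphi_{n_j}\rightharpoonup\varphi$ with $\varphi\in L^p(0,1)$ and $\|\varphi\|_p\leqslant M^{1/p}$. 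Testing the weak convergence against the continuous (hence $L^{p'}$, $p'=p/(p-1)$) functions $t\mapsto t^j$ yields $\int_0^1 t^j\varphi_{n_j}(t)\,\rmd t \to \int_0^1 t^j\varphi(t)\,\rmd t$; combining with the moment convergence $\int_0^1 t^j\varphi_{n_j}(t)\,\rmd t\to\mu_j$ established above gives $\mu_j=\int_0^1 t^j\varphi(t)\,\rmd t$ for every $j\geqslant 0$, which is exactly \eqref{1.16}. (That this $\varphi$ is a density for the $\alpha$ of \eqref{1.14} follows since a bounded signed measure on $[0,1]$ is determined by its moments, by Weierstrass approximation.)

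The main obstacle I anticipate is not the functional-analytic extraction of the weak limit — that part is routine once reflexivity of $L^p$ is invoked — but rather pinning down the correct normalization and the precise form of the reconstructing step functions $\varphi_n$, and verifying rigorously that $\int_0^1 t^j\varphi_n(t)\,\rmd t\to\mu_j$. This is the genuinely computational heart of the Hausdorff moment inversion: one must show the finite differences $(-1)^{n-k}\Delta^{n-k}\mu_k$, reassembled with the binomial weights and the scaling $(n+1)$, really do recover the measure in the appropriate topology. I would handle this by quoting the inversion formula from Widder (the same source cited for Proposition~\ref{pro:2}), so that the novelty here is confined to the observation that the $\ell^p$ hypothesis \eqref{1.15} is precisely the condition making the approximants $L^p$-bounded, after which weak compactness finishes the argument.
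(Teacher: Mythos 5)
The paper does not prove this proposition itself --- its ``proof'' is a one-line citation of Widder's Theorem 5 (p.~110) --- and your argument is a correct reconstruction of precisely that classical proof: the hypothesis \eqref{1.15} is exactly a uniform $L^p$ bound on the Hausdorff step-function approximants built from the Bernstein weights, after which reflexivity of $L^p(0,1)$ for $p>1$ and the standard moment-convergence lemma (the exact identity $\sum_k \binom{k}{j}\binom{n}{j}^{-1}w_k^{(n)}\{\mu\}=\mu_j$ plus the bound $\sum_k|w_k^{(n)}\{\mu\}|\leqslant M^{1/p}$) yield the density $\varphi$. Your H\"older reduction to Proposition~\ref{pro:2} and the $L^p$-norm computation for $\varphi_n$ are both correct, so nothing further is needed.
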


\begin{proof}
The proof is given in \cite[Theorem 5, p. 110]{Widder}.
\end{proof}

\vskip 0.3cm

\begin{remark}
Condition \eqref{1.15} implies that there exists a constant $C$ such that
\beq
|\mu_n| < \frac{C}{(n+1)^{\frac{p-1}{p}}} \qquad (n=0,1,2,\ldots;\ p>1).
\label{1.16a}
\eeq 
This bound is easily obtained by observing that from definition \eqref{1.10} we have:
$w_n^{(n)}\{\mu\}=\mu_n$; 
now, condition \eqref{1.15} requires that, for every $n\in\N$ and $k\leqslant n$, 
$\left|w_k^{(n)}\{\mu\}\right|^p < M/(n+1)^{(p-1)}$ which, evaluated for $k=n$, yields bound \eqref{1.16a}
with $C=M^{1/p}$.
\label{rem:222}
\end{remark}

\vskip 0.2cm

\begin{proposition}
\label{pro:4}
If both conditions required by Propositions \ref{pro:1} and \ref{pro:3} hold,
then the sequence $\{\mu_k\}_{k=0}^\infty$ can be represented by an integral of the form \eqref{1.16}
where $\varphi(t)$ enjoys the following properties:
\begin{itemize}\setlength\itemsep{0.4em}
\item[(i)] it belongs to $L^p(0,1)$; 
\item[(ii)] it represents the density function of a probability distribution
concentrated on the closed interval $[0,1]$.
\end{itemize}
\end{proposition}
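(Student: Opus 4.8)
The plan is to extract from the two hypotheses the two representations of $\{\mu_k\}_{k=0}^\infty$ that they separately guarantee, and then to glue them by appealing to the determinacy of the Hausdorff moment problem. First, because the condition \eqref{1.12} of Proposition \ref{pro:1} is assumed to hold (with $\mu_0=1$), that proposition directly supplies a probability distribution $F$ concentrated on $[0,1]$ with $\mu_k=\int_0^1 x^k\,F\{\rmd x\}$ for all $k\geqslant 0$. Second, because condition \eqref{1.15} of Proposition \ref{pro:3} is assumed to hold for some $p>1$, that proposition supplies a function $\varphi\in L^p(0,1)$ with $\mu_k=\int_0^1 t^k\varphi(t)\,\rmd t$ for all $k\geqslant 0$; this immediately gives property (i). One should note in passing that \eqref{1.15} entails, by H\"older's inequality, the uniform bound $\sum_{k=0}^n\left|w_k^{(n)}\{\mu\}\right|<M^{1/p}$ required in Proposition \ref{pro:2}, so that representation \eqref{1.14} is indeed available and Proposition \ref{pro:3} merely sharpens it to an $L^p$ density.

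The second --- and only genuinely substantive --- step will be to identify the measures $F\{\rmd x\}$ and $\varphi(t)\,\rmd t$. Both are bounded Borel measures on the compact interval $[0,1]$ having the same moments, so their difference $\nu\doteq F-\varphi\,\rmd t$ is a bounded signed Borel measure on $[0,1]$ that annihilates every monomial $t^k$, hence every polynomial. Invoking the Weierstrass approximation theorem (polynomials are dense in $C([0,1])$ with the sup norm) together with the Riesz representation theorem, I would conclude $\nu=0$, i.e.\ $F$ is absolutely continuous with density $\varphi$. This determinacy argument is where the real content lies; everything else is formal.

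Finally, property (ii) is read off from the identification: since $F$ is a nonnegative measure and $F\{A\}=\int_A\varphi(t)\,\rmd t\geqslant 0$ for every Borel set $A\subseteq[0,1]$, it follows that $\varphi\geqslant 0$ almost everywhere on $[0,1]$, while $\int_0^1\varphi(t)\,\rmd t=\mu_0=1$; hence $\varphi$ is the density of a probability distribution concentrated on $[0,1]$. Combined with (i) from the first step, this yields the claim. The main obstacle is not any single step in isolation but ensuring that the uniqueness of the representing measure on $[0,1]$ is invoked cleanly, so that the $L^p$ density produced analytically in Proposition \ref{pro:3} is certified to coincide with the probabilistic density furnished by Proposition \ref{pro:1}.
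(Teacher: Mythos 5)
Your proof is correct, and it takes the only natural route: the paper's own proof is a one-line assertion that the statement ``follows obviously'' from conditions \eqref{1.12} and \eqref{1.15} together with $\mu_0=1$, so you have in effect supplied the details it suppresses. In particular you correctly isolate the one genuinely substantive point — that the probability measure $F$ furnished by Proposition \ref{pro:1} and the $L^p$ density $\varphi$ furnished by Proposition \ref{pro:3} represent the same measure, by determinacy of the Hausdorff moment problem on $[0,1]$ (Weierstrass approximation plus the Riesz representation theorem) — from which $\varphi\geqslant 0$ a.e.\ and $\int_0^1\varphi(t)\,\rmd t=\mu_0=1$ follow at once.
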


\begin{proof}
Since the sequence $\{\mu_k\}_{k=0}^\infty$ satisfies both conditions \eqref{1.12} and \eqref{1.15}, and $\mu_0=1$,
then the statement of the proposition follows obviously.
\end{proof}

\vskip 0.3cm

\subsection{Hausdorff conditions and Hardy spaces}
\label{subse:hausdorff}

In the lemmas that follow we study the interpolation of sequences of numbers which, within our program
aiming at characterizing holomorphic extensions associated with series of functions, 
will be the coefficients of trigonometric series
(see Section \ref{subse:derivation}) and of power series (see Section \ref{subse:taylor}).

\begin{lemma}
\label{lem:1}
Assume that the sequence of numbers $\{g_k\}_{k=0}^\infty$ satisfies condition \eqref{1.15} of Proposition \ref{pro:3}
with $p=2+\varepsilon$ ($\varepsilon>0$ arbitrarily small). Then they are the restriction to the integers 
of the following Laplace transform:
\beq
\label{1.17}
\tg(\lambda) = \int_0^{+\infty} e^{-(\lambda+\ud) v}\,e^{-\frac{v}{2}} \, \varphi(e^{-v})\,\rmd v
\qquad \left(\lambda\in\C,\,\Real\lambda\geqslant -\frac{1}{2}\right),
\eeq
i.e., $\tg(\lambda)|_{\lambda=k}=g_k$ ($k=0,1,2,\ldots$), where
$e^{-v/2}\varphi(e^{-v}) \in L^1(0,+\infty) \cap L^2(0,+\infty)$, and
$\tg(\lambda)$ is the unique Carlsonian interpolation of the sequence $\{g_k\}_{k=0}^\infty$.
Moreover, $\tg(-\frac{1}{2}+\rmi\nu)$ ($\nu\in\R$) is a continuous function which tends to zero as $\nu\to\pm\infty$.
\end{lemma}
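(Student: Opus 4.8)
The plan is to start from the integral representation $\mu_k = g_k = \int_0^1 t^k\,\varphi(t)\,\rmd t$ provided by Proposition~\ref{pro:3} (applicable since $p=2+\varepsilon>1$), and then perform the substitution $t=e^{-v}$ ($v\in(0,+\infty)$), under which $\rmd t = -e^{-v}\,\rmd v$ and $t^k = e^{-kv}$. This immediately turns $g_k$ into $\int_0^{+\infty} e^{-kv}\,e^{-v}\,\varphi(e^{-v})\,\rmd v$, which I rewrite as $\int_0^{+\infty} e^{-(k+\frac12)v}\bigl(e^{-v/2}\varphi(e^{-v})\bigr)\,\rmd v$ so that the kernel matches the claimed $\tg(\lambda)$ at $\lambda=k$. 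Defining $\tg(\lambda)$ by \eqref{1.17} for $\Real\lambda\geqslant-\tfrac12$ then gives $\tg(\lambda)|_{\lambda=k}=g_k$ by construction.

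Next I would verify the integrability claim $e^{-v/2}\varphi(e^{-v})\in L^1(0,+\infty)\cap L^2(0,+\infty)$. For $L^1$: by Hölder with exponents $p$ and $p'=p/(p-1)$, $\int_0^{+\infty} e^{-v/2}|\varphi(e^{-v})|\,\rmd v \leqslant \bigl(\int_0^1 |\varphi(t)|^p\,\rmd t\bigr)^{1/p}\bigl(\int_0^{+\infty} e^{-p' v/2}\,\rmd v\bigr)^{1/p'}$ after the same change of variables turns $\int_0^{+\infty}e^{-v/2}|\varphi(e^{-v})|^{\cdots}$ back into a weighted integral over $(0,1)$; one checks the exponents work out because $p>2$ makes the relevant power of $t$ integrable near $0$. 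For $L^2$: write $\int_0^{+\infty} e^{-v}|\varphi(e^{-v})|^2\,\rmd v = \int_0^1 |\varphi(t)|^2\,\rmd t<\infty$ since $\varphi\in L^p(0,1)\subset L^2(0,1)$ on a finite interval. Once $e^{-v/2}\varphi(e^{-v})\in L^1$, the integral \eqref{1.17} converges absolutely and uniformly for $\Real\lambda\geqslant-\tfrac12$, so $\tg$ is holomorphic in $\Real\lambda>-\tfrac12$ and continuous up to the boundary line; moreover $|\tg(\lambda)|\leqslant\|e^{-v/2}\varphi(e^{-v})\|_1$ is bounded, and the $L^2$-membership places $\tg$ in the Hardy class $H^2(\CCm)$ by the Paley--Wiener theorem.

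For the Carlsonian uniqueness, the key point is that $\tg$ is holomorphic in $\Real\lambda>-\tfrac12$, bounded there (in fact it tends to $0$ in suitable directions), hence of exponential type zero in the half-plane and certainly $o(e^{\pi|\Imag\lambda|})$; by Carlson's theorem any two such interpolations of $\{g_k\}$ coincide, so $\tg$ is \emph{the} Carlsonian interpolation. Finally, for the boundary behaviour, $\tg(-\tfrac12+\rmi\nu)=\int_0^{+\infty} e^{-\rmi\nu v}\bigl(e^{-v/2}\varphi(e^{-v})\bigr)\,\rmd v$ is exactly the Fourier transform of the $L^1$ function $h(v)\doteq e^{-v/2}\varphi(e^{-v})$ (extended by $0$ to $v<0$); continuity of $\tg(-\tfrac12+\rmi\nu)$ in $\nu$ and the decay $\tg(-\tfrac12+\rmi\nu)\to0$ as $\nu\to\pm\infty$ are then precisely the statement of the Riemann--Lebesgue lemma.

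The main obstacle I anticipate is the $L^1$ estimate: one has to be careful that the Hölder exponent $p'=p/(p-1)$ pairs with the exponential weight so that the resulting power of $t$ near $t=0$ is integrable, and this is exactly where the hypothesis $p>2$ (rather than merely $p>1$) is used --- with $p=2$ the borderline case the argument degenerates and one only gets $L^2$, not $L^1$, which is why the $\varepsilon>0$ slack appears in the statement. Everything else (the change of variables, absolute convergence of the Laplace integral, holomorphy, Carlson, Riemann--Lebesgue) is routine once that integrability is in hand.
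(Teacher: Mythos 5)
Your proposal is correct and follows essentially the same route as the paper: the substitution $t=e^{-v}$ in the representation from Proposition \ref{pro:3}, the $L^2$ bound giving $\tg\in H^2(\CCm)$ via Paley--Wiener, Carlson's theorem for uniqueness of the interpolation, H\"older for the $L^1$ bound that lets you reach $\Real\lambda=-\frac12$, and Riemann--Lebesgue for continuity and decay on that line. The only blemish is your displayed H\"older inequality, whose two factors are written over mismatched measure spaces (the clean version, as in the paper, is $\int_0^1 t^{-1/2}|\varphi(t)|\,\rmd t\leqslant\|\varphi\|_{L^{2+\varepsilon}(0,1)}\bigl(\int_0^1 t^{-\frac{2+\varepsilon}{2+2\varepsilon}}\,\rmd t\bigr)^{\frac{1+\varepsilon}{2+\varepsilon}}$), but you correctly identify that the convergence of the singular factor is precisely where $p>2$ is used and that the argument degenerates at $p=2$, which is exactly the paper's computation.
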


\begin{proof}
If the given sequence $\{g_k\}_{k=0}^\infty$ satisfies the Hausdorff condition \eqref{1.15} with $p=2+\varepsilon$
($\varepsilon>0$ arbitrarily small), then (see \eqref{1.16})
\beq
\label{1.18}
g_k = \int_0^1 t^k \varphi(t)\,\rmd t \qquad (k = 0,1,2,\ldots),
\eeq
where $\varphi(t) \in L^{(2+\varepsilon)}(0,1)$. Put now $t=e^{-v}$ in \eqref{1.18}; we get
\beq
\label{1.19}
g_k = \int_0^{+\infty} e^{-(k+\ud)v}\,e^{-\frac{v}{2}}\,\varphi(e^{-v})\,\rmd v
\qquad (k=0,1,2,\ldots). \nonumber
\eeq
Therefore, the numbers $\{g_k\}_{k=0}^\infty$ can be formally regarded as the restriction to the 
integers of the following Laplace transform:
\beq
\label{1.20}
\tg(\lambda) = \int_0^{+\infty} e^{-(\lambda+\ud)v}\,e^{-\frac{v}{2}}\,\varphi(e^{-v})\,\rmd v
\qquad \left(\Real\lambda>-\frac{1}{2}\right).
\eeq
In fact, $\tg(\lambda)|_{\lambda=k}=g_k$. Now,
$\int_0^{+\infty}\left|e^{-v/2}\varphi(e^{-v})\right|^2\,\rmd v
=\int_0^{1}\left|\varphi(t)\right|^2\,\rmd t < \infty$
since $\varphi(t)\in L^{(2+\varepsilon)}(0,1)$ and, \emph{a-fortiori}, $\varphi(t)$ belongs also to $L^2(0,1)$.
Then, in view of the Paley-Wiener theorem \cite{Hoffman}, we can say that $\tg(\lambda)\in H^2(\CCm)$,
which is the Hardy space introduced in Section \ref{se:introduction}. 
Use can then be made of the Carlson theorem \cite{Boas},
which guarantees that $\tg(\lambda)$ represents the unique Carlsonian interpolation 
of the sequence $\{g_k\}_{k=0}^\infty$.
We are left to show that $e^{-v/2}\varphi(e^{-v})$ belongs also to $L^1(0,+\infty)$. 
This amounts to proving that
\beq
\int_0^{+\infty}\left|e^{-v/2}\varphi(e^{-v})\right|\,\rmd v 
= \int_0^{1}\left|\varphi(t)/\sqrt{t}\right|\,\rmd t < \infty. \nonumber
\label{add1}
\eeq
By H\"older's inequality, we have
\beq
\label{1.22}
\int_0^1 \frac{|\varphi(t)|}{\sqrt{t}}\,\rmd t
\leqslant\left(\int_0^1\left|\varphi(t)\right|^{2+\varepsilon}\,\rmd t\right)^{\frac{1}{(2+\varepsilon)}}
\!\!\left(\int_0^1 t^{-\frac{2+\varepsilon}{2+2\varepsilon}}\,\rmd t\right)^{\frac{(1+\varepsilon)}{(2+\varepsilon)}}
\!\!<\infty, \nonumber
\eeq
where the rightmost integral converges since $\frac{2+\varepsilon}{2+2\varepsilon}<1$ for any $\varepsilon>0$, and
$\varphi\in L^{(2+\varepsilon)}(0,1)$; it is then proved that $e^{-v/2}\varphi(e^{-v})\in L^1(0,+\infty)$.
Therefore, equality \eqref{1.20} may be extended up to $\Real\lambda=-\frac{1}{2}$.
Finally, setting $\lambda=-\frac{1}{2}+\rmi\nu$ ($\nu\in\R$) in \eqref{1.20}, we obtain
\beq
\tg\left(-\frac{1}{2}+\rmi\nu\right)=\int_0^{+\infty}e^{-\rmi\nu v}\left[e^{-\frac{v}{2}}
\varphi(e^{-v})\right]\,\rmd v
=\cF\left\{h(v)e^{-\frac{v}{2}}\varphi(e^{-v})\right\} \quad (\nu\in\R),
\label{1.22.1}
\nonumber
\eeq
where $h(v)$ is the Heaviside step function,
and $\cF$ denotes the Fourier integral operator. Since $e^{-v/2}\varphi(e^{-v})\in L^1(0,+\infty)$,
the Riemann-Lebesgue theorem guarantees that $\tg(-\frac{1}{2}+\rmi\nu)$ ($\nu\in\R$) is a continuous function,
which tends to zero as $\nu\to\pm\infty$.
\end{proof}

\begin{lemma}
\label{lem:2}
Assume that the sequence of numbers $\{f_k\}_{k=0}^\infty$, with $f_k \doteq (k+1) g_k$, 
satisfies condition \eqref{1.15}
of Proposition \ref{pro:3} with $p=2+\varepsilon$ ($\varepsilon>0$ arbitrarily small). Then:
\begin{itemize}\setlength\itemsep{0.4em}
\item[(i)] There exists a unique Carlsonian interpolation $\tg(\lambda)$ ($\Real\lambda\geqslant -\frac{1}{2}$)
of the sequence $\{g_k\}_{k=0}^\infty$, 
which is holomorphic in the half-plane $\Real\lambda>-\frac{1}{2}$.
\item[(ii)] $\lambda\tg(\lambda)\in L^2(-\infty,+\infty)$ for any fixed value of $\Real\lambda\geqslant-\frac{1}{2}$.
\item[(iii)] $\lambda\tg(\lambda)$ tends uniformly to zero as $\lambda\to\infty$ inside any fixed half-plane of the form
$\Real\lambda\geqslant\delta>-\frac{1}{2}$.
\item[(iv)] $\tg(\lambda)\in H^2\left(\CCm\right)$.
\item[(v)] $\tg(\lambda) \in L^1(-\infty,+\infty)$ for any fixed value of $\Real\lambda\geqslant-\frac{1}{2}$.
\item[(vi)] $\tg(-\frac{1}{2}+\rmi\nu)$ is a continuous function which tends to zero as $\nu\to\pm\infty$.
\item[(vii)] $\sup_{\ell>-1/2,\scriptstyle\nu\in\R}\left|\tg(\ell+\rmi\nu)\right|=\left|\tg\left(-\frac{1}{2}+\rmi\nu\right)\right|$.
\end{itemize}
\end{lemma}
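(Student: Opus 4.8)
The plan is to reduce Lemma~\ref{lem:2} to Lemma~\ref{lem:1} applied to the sequence $\{f_k\}$ and then to divide the resulting interpolant by the linear factor $\lambda+1$. Since $\{f_k\}_{k=0}^\infty$ satisfies \eqref{1.15} with $p=2+\varepsilon$, Lemma~\ref{lem:1} (used with $\{f_k\}$ in the role of $\{g_k\}$) produces a density $\psi\in L^{2+\varepsilon}(0,1)$ with $f_k=\int_0^1 t^k\psi(t)\,\rmd t$ and the unique Carlsonian interpolant $\tf(\lambda)=\int_0^{+\infty}e^{-(\lambda+\ud)v}w(v)\,\rmd v$, where $w(v)\doteq e^{-v/2}\psi(e^{-v})$ lies in $L^1(0,+\infty)\cap L^2(0,+\infty)$, $\tf(k)=f_k$, $\tf(-\ud+\rmi\nu)$ is continuous and vanishes at $\pm\infty$, and, directly from the representation, $|\tf(\lambda)|\leqslant\|w\|_1$ throughout $\Real\lambda\geqslant-\ud$. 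I then define $\tg(\lambda)\doteq\tf(\lambda)/(\lambda+1)$. Because $\tf$ is holomorphic in $\Real\lambda>-\ud$ and $\lambda+1$ has no zero there, $\tg$ is holomorphic in the same half-plane, and $\tg(k)=\tf(k)/(k+1)=f_k/(k+1)=g_k$, so $\tg$ interpolates $\{g_k\}$; this gives the first half of (i).

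The crux is an integral representation for $\tg$. Writing $(\lambda+1)^{-1}=\int_0^{+\infty}e^{-(\lambda+\ud)u}e^{-u/2}\,\rmd u$ for $\Real\lambda>-\ud$, and using that a product of two Laplace transforms is the Laplace transform of the convolution, I obtain $\tg(\lambda)=\int_0^{+\infty}e^{-(\lambda+\ud)v}W(v)\,\rmd v$ with $W\doteq K*w$, $K(v)=e^{-v/2}$ --- valid for large real $\lambda$ by Fubini, hence for all $\Real\lambda>-\ud$ by analytic continuation. Young's inequality gives $W\in L^1(0,+\infty)$ (convolution of two $L^1$ functions) and $W\in L^2(0,+\infty)$ (convolution of an $L^1$ with an $L^2$ function); moreover $W(v)=e^{-v/2}\varphi(e^{-v})$ with $\varphi(t)\doteq t^{-1/2}W(\log(1/t))\in L^2(0,1)$, so $\tg$ has exactly the form \eqref{1.17}. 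Consequently the argument of Lemma~\ref{lem:1} transfers verbatim to $\tg$: the Paley--Wiener theorem yields (iv) $\tg\in H^2(\CCm)$; Carlson's theorem then completes (i) (uniqueness); and the Riemann--Lebesgue theorem applied to $\tg(-\ud+\rmi\nu)=\cF\{h(v)\,W(v)\}(\nu)$ yields (vi). Also $|\tg(\ell+\rmi\nu)|\leqslant\int_0^{+\infty}e^{-(\ell+\ud)v}|W(v)|\,\rmd v\leqslant\|W\|_1$ for every $\ell\geqslant-\ud$, so $\tg$ is bounded on the closed half-plane.

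For (ii), (iii), (v), (vii) I would start from $\lambda(\lambda+1)^{-1}=1-(\lambda+1)^{-1}$, whence $\lambda\tg(\lambda)=\tf(\lambda)-\tg(\lambda)=\int_0^{+\infty}e^{-(\lambda+\ud)v}\bigl(w(v)-W(v)\bigr)\,\rmd v$ with $w-W\in L^1(0,+\infty)\cap L^2(0,+\infty)$. On a vertical line $\Real\lambda=\ell\geqslant-\ud$ this is $\cF\{e^{-(\ell+\ud)v}(w-W)(v)\}(\nu)$ with $e^{-(\ell+\ud)v}(w-W)(v)\in L^2(0,+\infty)$, so Plancherel's theorem gives (ii). For (iii) I would prove the general fact that the Laplace transform $\Phi(\lambda)=\int_0^{+\infty}e^{-(\lambda+\ud)v}u(v)\,\rmd v$ of any $u\in L^1(0,+\infty)$ tends to $0$ uniformly as $\lambda\to\infty$ inside $\{\Real\lambda\geqslant\delta>-\ud\}$: the bound $|\Phi(\lambda)|\leqslant\int_0^\eta|u|+e^{-(\delta+\ud)\eta}\|u\|_1$ (first choose $\eta$ small, then $\Real\lambda$ large) handles the direction $\Real\lambda\to+\infty$, while for $\ell$ in a compact interval $[\delta,M]$ the functions $v\mapsto e^{-(\ell+\ud)v}u(v)$ depend Lipschitz-continuously on $\ell$ in $L^1$-norm --- here $\delta>-\ud$ is used, so that $\sup_{v\geqslant0}v\,e^{-(\delta+\ud)v}<\infty$ --- hence form a compact subset of $L^1$ on which Riemann--Lebesgue converges uniformly, handling $|\Imag\lambda|\to\infty$; taking $u=w-W$ gives (iii). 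For (v), on $\Real\lambda=\ell\geqslant-\ud$ we have $\tg(\ell+\rmi\nu)=\tf(\ell+\rmi\nu)\cdot(\ell+1+\rmi\nu)^{-1}$, a product of the function $\nu\mapsto\tf(\ell+\rmi\nu)$, which lies in $L^2(\R)$ (Plancherel, from $w\in L^2$), and $\nu\mapsto(\ell+1+\rmi\nu)^{-1}$, which lies in $L^2(\R)$ since $\ell+1\geqslant\ud>0$; hence it lies in $L^1(\R)$ by the Cauchy--Schwarz inequality. Finally, for (vii): $\tg$ is holomorphic in $\Real\lambda>-\ud$, continuous and bounded on $\Real\lambda\geqslant-\ud$, and on the arc $|\lambda|=R$ one has $|\tg(\lambda)|=|\tf(\lambda)|/|\lambda+1|\leqslant\|w\|_1/(R-1)$; applying the maximum modulus principle on the half-disks $\{|\lambda|\leqslant R\}\cap\{\Real\lambda\geqslant-\ud\}$ and letting $R\to\infty$ forces $\sup_{\Real\lambda\geqslant-\ud}|\tg(\lambda)|$ to be attained on the boundary line $\Real\lambda=-\ud$, which is (vii).

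The step I expect to be the main obstacle is (iii), i.e.\ the \emph{uniform} decay of $\lambda\tg(\lambda)$ at infinity inside $\Real\lambda\geqslant\delta>-\ud$. The two ways of going to infinity --- large $\Real\lambda$, and large $|\Imag\lambda|$ with $\Real\lambda$ bounded --- call for genuinely different estimates (a dominated-convergence/absolute-continuity bound and a Riemann--Lebesgue bound), which must then be glued; the uniformity of the second over a strip $\delta\leqslant\Real\lambda\leqslant M$ is exactly what makes the compactness-of-a-Lipschitz-family argument necessary, and it is also what explains why (iii) is restricted to half-planes bounded away from $-\ud$, the Lipschitz estimate degenerating as $\delta\to-\ud$.
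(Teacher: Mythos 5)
Your proof follows the same skeleton as the paper's: apply Lemma \ref{lem:1} to $\{f_k\}$ to get the Carlsonian interpolant $\tf(\lambda)=\int_0^{+\infty}e^{-(\lambda+\ud)v}w(v)\,\rmd v$ with $w\in L^1\cap L^2$, then set $\tg(\lambda)=\tf(\lambda)/(\lambda+1)$ and transfer the properties. Where you differ is in how the properties are transferred, and your version is in several places more complete. The convolution identity $\tg=\cL(K*w)$ with $K(v)=e^{-v/2}$, giving $\tg$ \emph{itself} the form \eqref{1.17} with an $L^1\cap L^2$ density, is not in the paper; the paper instead proves (iv) by the crude bound $|\lambda+1|\geqslant\ud$ and (vi) by writing $\tf(-\ud+\rmi\nu)=(\ud+\rmi\nu)\tg(-\ud+\rmi\nu)$ as a Fourier transform and dividing. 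Your route buys a uniform treatment (Paley--Wiener, Carlson, Riemann--Lebesgue all apply verbatim to $\tg$) at the cost of one Young-inequality computation. For (iii) the paper simply cites \cite{Hoffman} for the uniform decay of $\tf$ and declares (ii)--(iii) to ``follow immediately'' from $\tf=(\lambda+1)\tg$; your two-regime argument (absolute continuity for $\Real\lambda\to+\infty$, uniform Riemann--Lebesgue over a Lipschitz-compact family in $L^1$ for large $|\Imag\lambda|$ in a strip) is a correct self-contained substitute, and your identity $\lambda\tg=\tf-\tg=\cL(w-W)$ handles (ii) as cleanly as the paper's. For (v) both proofs are the same Cauchy--Schwarz argument.

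One point deserves flagging: statement (vii) as printed equates a quantity independent of $\nu$ with one depending on $\nu$, so it must be interpreted. You read it as $\sup_{\Real\lambda\geqslant-\ud}|\tg(\lambda)|=\sup_{\nu}|\tg(-\ud+\rmi\nu)|$ and prove that correctly by maximum modulus on expanding half-disks. The paper, however, later uses (vii) (see the bound \eqref{T:14bis} in the proof of Theorem \ref{the:6}) as the \emph{pointwise-in-$\nu$} inequality $|\tg(\ell+\rmi\nu)|\leqslant|\tg(-\ud+\rmi\nu)|$ for each fixed $\nu$, which your argument does not give --- and which is in fact false for general $w\in L^1\cap L^2$ (take $w$ a difference of indicators so that $\tf(-\ud+\rmi\nu_0)=0$ while $\tf(\ell+\rmi\nu_0)\neq0$ for some $\ell>-\ud$). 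The paper's own justification of the pointwise version (asserting $\sup_{\ell>-1/2,\nu}|\tf(\ell+\rmi\nu)|=|\tf(-\ud+\rmi\nu)|$ without argument) is not rigorous either, so this is a defect of the source rather than of your proposal; but be aware that proving the global-sup reading, as you do, does not by itself support the use the paper later makes of (vii).
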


\begin{proof}
By Lemma \ref{lem:1}, the numbers $f_k$ are the restriction to the integers of the Laplace transform
\beq
\tf(\lambda) = \int_0^{+\infty} e^{-(\lambda+\ud)v}\,e^{-\frac{v}{2}}\,\psi(e^{-v})\,\rmd v,
\qquad
\left(\Real\lambda\geqslant -\frac{1}{2}\right),
\label{popo:1}
\eeq
where $e^{-v/2}\,\psi(e^{-v})\in L^1(0,+\infty) \cap L^2(0,+\infty)$.
Then, $\tf(\lambda)\in H^2(\CCm)$ (hence, holomorphic in $\CCm$) and, accordingly,
$\tf(\ell+\rmi\nu) \in L^2(-\infty,+\infty)$ ($\ell\doteq\Real\lambda,\nu\doteq\Imag\lambda$) for any fixed
value of $\ell\geqslant -\frac{1}{2}$. Finally, $\tf(\lambda)$ tends uniformly to zero as $\lambda\to\infty$
inside any fixed half-plane $\Real\lambda\geqslant\delta>-\frac{1}{2}$ (see \cite{Hoffman}).
Now, we define $\tg(\lambda)\doteq\tf(\lambda)/(\lambda+1)$, which is holomorphic in the space $\CCm$ and Carlsonian
since $\tf(\lambda)$ is. Evidently, the restriction to the integers of $\tg(\lambda)$ is:
$\tg(\lambda)|_{\lambda=k}=\tf(k)/(k+1)=f_k/(k+1)=g_k$, ($k=0,1,2,\ldots$), and therefore
we can conclude that $\tg(\lambda)$ is the unique Carlsonian interpolation of the
sequence $\{g_k\}_{k=0}^\infty$, and statement (i) is proved.
Statements (ii) and (iii) follow immediately from the properties of $\tf(\lambda)=(\lambda+1)\tg(\lambda)$ 
given above. \\
Since $\tf(\lambda)\in H^2(\CCm)$ we can say that 
$\int_{-\infty}^{+\infty}\left|\tf(\ell+\rmi\nu)\right|^2\,\rmd\nu$ is bounded, for $\ell>-\frac{1}{2}$, by a
constant $M$ independent of $\ell$. Then we have
\beq
\int_{-\infty}^{+\infty}\left|\tg(\ell+\rmi\nu)\right|^2\,\rmd\nu\leqslant
4\int_{-\infty}^{+\infty}\left|\tf(\ell+\rmi\nu)\right|^2\,\rmd\nu\leqslant 4M, \nonumber
\eeq
which proves statement (iv). 
Regarding statement (v), by using the Schwarz inequality we have
\beq
\begin{split}
& \int_{-\infty}^{+\infty}\left|\tg(\ell+\rmi\nu)\right|\,\rmd\nu=
\int_{-\infty}^{+\infty}\left|\frac{\tf(\ell+\rmi\nu)}{(\ell+1)+\rmi\nu}\right|\,\rmd\nu \\
&\quad\leqslant
\left(\int_{-\infty}^{+\infty}\frac{1}{|(\ell+1)+\rmi\nu|^2}\,\rmd\nu\right)^{\ud}
\left(\int_{-\infty}^{+\infty}|\tf(\ell+\rmi\nu)|^2\,\rmd\nu\right)^{\ud}<\infty
\quad \left(\ell\geqslant-\frac{1}{2}\right).
\end{split}
\label{1.22.2}
\nonumber
\eeq
In order to prove point (vi) we consider \eqref{popo:1} for $\Real\lambda=-\frac{1}{2}$:
\beq
\begin{split}
\tf\left(-\frac{1}{2}+\rmi\nu\right)&=\left(\frac{1}{2}+\rmi\nu\right)\tg\left(-\frac{1}{2}+\rmi\nu\right) \\
&=\int_0^{+\infty}e^{-\rmi\nu v}e^{-\frac{v}{2}}\psi(e^{-v})\,\rmd v
=\cF\left\{h(v)e^{-\frac{v}{2}}\psi(e^{-v})\right\}<\infty.
\end{split}
\label{1.22.4}
\nonumber
\eeq
By the Riemann-Lebesgue theorem it follows that
$\tg\left(-\frac{1}{2}+\rmi\nu\right)$ ($\nu\in\R$) is a continuous function tending to zero as $\nu\to\pm\infty$,
and statement (vi) then follows. 
Finally, in order to prove statement (vii), we note that the Laplace transform \eqref{popo:1} holds also for 
$\Real\lambda=-\frac{1}{2}$ since $e^{-v/2}\psi(e^{-v})\in L^1(-\infty,+\infty)\cap L^2(-\infty,+\infty)$. 
Then,
\beq 
\sup_{\staccrel{\ell>-1/2}{\scriptstyle\nu\in\R}}\left|\tf(\ell+\rmi\nu)\right|
=\left|\tf(-\frac{1}{2}+\rmi\nu)\right|. \nonumber
\label{add2}
\eeq
Therefore, recalling that $\tg(\lambda)=\tf(\lambda)/(\lambda+1)$, we have
\beq
\begin{split}
\sup_{\ell>-1/2,{\scriptstyle\nu\in\R}}\left|\tg(\ell+\rmi\nu)\right|=
\sup_{\ell>-1/2,{\scriptstyle\nu\in\R}}\frac{\left|\tf(\ell+\rmi\nu)\right|}{|\ell+1+\rmi\nu|}
=\frac{\left|\tf\left(-\frac{1}{2}+\rmi\nu\right)\right|}{\sqrt{\nu^2+\frac{1}{4}}}
=\left|\tg\left(-\frac{1}{2}+\rmi\nu\right)\right|.
\label{coo:3}
\end{split}
\nonumber
\eeq
\end{proof}

\vspace*{0.2cm}

\begin{remark}
In the next section we shall study the holomorphic extension associated with trigonometric series of the form
$\frac{1}{2\pi}\sum_{k=1}^\infty g_k \exp(-\rmi k\tau)$ in order to derive the KMS analytic structure 
of the thermal Green functions from properties of the sequence $\{g_k\}$.
Thus, we shall be led to consider sequences $\{f_k\}_{k=1}^\infty$ ($f_k=kg_k$), i.e., 
starting from $k=1$ up to $\infty$, instead of from $k=0$. 
We may equivalently consider the enlarged sequence $\{f_k\}_{k=0}^\infty$ ($f_k=kg_k$), i.e., 
starting from $k=0$, with $f_0=0$ and $g_0$ arbitrary. Assuming that condition \eqref{1.15} 
of Proposition \ref{pro:3} with $p=2$
is satisfied by the sequence of numbers $\{f_{k}\}_{k=0}^\infty$, we can write
$f_{k}=\int_0^1 t^{k} \theta(t)\,\rmd t$, ($k=0,1,2,\ldots$), with $\theta(t) \in L^2(0,1)$.
Then, by using arguments close to those followed in the previous lemmas, the numbers $f_k$ 
are the restriction to the non-negative integers of the Laplace transform
\beq
\begin{split}
& \tf(\lambda)=\int_0^{+\infty} e^{-(\lambda+\ud)v}\,e^{-\frac{v}{2}}\,\theta(e^{-v})\,\rmd v
\quad \left(\Real\lambda>-\frac{1}{2}\right); \\
& \tf(\lambda)|_{\lambda=k}=f_k,
\qquad (k=0,1,2,\ldots).
\end{split}
\label{KL.31}
\eeq
However, since we shall be dealing with only the numbers $f_k$, with $k=1,2,\ldots$
then \eqref{KL.31} needs to hold only for $\Real\lambda\geqslant\frac{1}{2}$.
Therefore, it may be rewritten as
\beq
\begin{split}
& \tf(\lambda)=\int_0^{+\infty} e^{-(\lambda-\ud)v}\,e^{-\frac{3v}{2}}\,\theta(e^{-v})\,\rmd v
\quad \left(\Real\lambda\geqslant\frac{1}{2}\right); \\
& \tf(\lambda)|_{\lambda=k}=f_k \quad (k=1,2,3,\ldots),
\end{split}
\label{KL.31bis}
\eeq
where the function $e^{-3v/2}\theta(e^{-v})$ is easily seen to belong to $L^1(0,+\infty)\cap L^2(0,+\infty)$.
Setting now $\Real\lambda=\frac{1}{2}$ in \eqref{KL.31bis}, we have
\beq
\tf\left(\frac{1}{2}+\rmi\nu\right)=\int_0^{+\infty} e^{-\rmi\nu v} e^{-\frac{3v}{2}} \theta(e^{-v})\,\rmd v
= \cF\left\{h(v)e^{-\frac{3v}{2}}\theta(e^{-v})\right\}
\qquad (\nu\in\R),
\label{KL.33}
\nonumber
\eeq
which, by the Riemann-Lebesgue theorem, allows us to state that $\tf(\frac{1}{2}+\rmi\nu)$ ($\nu\in\R$)
is a continuous function which tends to zero as $\nu\to\pm\infty$.
\label{rem:1}
\end{remark}

\vspace*{0.4cm}

\noindent
Then, we can prove the following lemma.

\begin{lemma}
\label{lem:3}
Assume that the sequence $\{f_k\}_{k=0}^\infty$, with $f_k \doteq k g_k$,
satisfies condition \eqref{1.15} of Proposition \ref{pro:3} with $p=2$. Then:
\begin{itemize}\setlength\itemsep{0.5em}
\item[(i)] There exists a unique Carlsonian interpolation $\tg(\lambda)$ ($\Real\lambda\geqslant\frac{1}{2}$) 
of the coefficients $g_k$ ($k=1,2,\ldots$), which is holomorphic in the half-plane $\Real\lambda>\frac{1}{2}$.
\item[(ii)] $\lambda\tg(\lambda)\in L^2(-\infty,+\infty)$ for any fixed value of $\Real\lambda\geqslant\frac{1}{2}$.
\item[(iii)] $\lambda\tg(\lambda)$ tends uniformly to zero as $\lambda\rightarrow\infty$ inside any fixed half-plane of the form
$\Real\lambda\geqslant\delta>\frac{1}{2}$.
\item[(iv)] $\tg(\lambda)\in H^2\left(\CCp\right)$.
\item[(v)] $\tg(\lambda)\in L^1(-\infty,+\infty)$ for any fixed value of $\Real\lambda\geqslant\frac{1}{2}$.
\item[(vi)] $\tg(\frac{1}{2}+\rmi\nu)$ is a continuous function which tends to zero as $\nu\to\pm\infty$.
\item[(vii)] $\sup_{\ell> 1/2,{\scriptstyle\nu\in\R}}\left|\tg(\ell+\rmi\nu)\right|
=|\tg(\frac{1}{2}+\rmi\nu)|$.
\end{itemize}
\end{lemma}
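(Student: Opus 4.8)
The plan is to mirror the proofs of Lemmas~\ref{lem:1} and~\ref{lem:2}, the only changes being that the basic Laplace representation is now the one already prepared in Remark~\ref{rem:1}, living in the half‑plane $\Real\lambda\geqslant\frac{1}{2}$ instead of $\Real\lambda\geqslant-\frac{1}{2}$, and that the auxiliary factor is $\lambda$ rather than $\lambda+1$. Since $\{f_k\}_{k=0}^\infty$ (with $f_k=kg_k$, $f_0=0$) satisfies condition~\eqref{1.15} of Proposition~\ref{pro:3} with $p=2$, that proposition gives $f_k=\int_0^1 t^k\theta(t)\,\rmd t$ with $\theta\in L^2(0,1)\subset L^1(0,1)$, and, exactly as in Remark~\ref{rem:1}, the numbers $f_k$ ($k\geqslant 1$) are the restriction to the integers of the Laplace transform~\eqref{KL.31bis}, whose kernel $e^{-3v/2}\theta(e^{-v})$ belongs to $L^1(0,+\infty)\cap L^2(0,+\infty)$ (it is precisely the shift of the exponent by one extra unit that makes $p=2$ suffice here, in place of the $p=2+\varepsilon$ needed in Lemma~\ref{lem:1}). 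By the Paley--Wiener theorem $\tf\in H^2(\CCp)$, hence $\tf$ is holomorphic in $\Real\lambda>\frac{1}{2}$, $\tf(\ell+\rmi\nu)\in L^2(-\infty,+\infty)$ for every fixed $\ell\geqslant\frac{1}{2}$, and $\tf(\lambda)\to 0$ uniformly as $\lambda\to\infty$ inside any half‑plane $\Real\lambda\geqslant\delta>\frac{1}{2}$ (see \cite{Hoffman}); by Carlson's theorem \cite{Boas}, $\tf$ is the unique Carlsonian interpolation of $\{f_k\}_{k\geqslant 1}$.

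Next I would set $\tg(\lambda)\doteq\tf(\lambda)/\lambda$. Since $0$ lies neither in $\CCp$ nor on its boundary line $\Real\lambda=\frac{1}{2}$, the quotient $\tg$ is holomorphic in $\CCp$, it is Carlsonian because $\tf$ is, and $\tg(\lambda)|_{\lambda=k}=\tf(k)/k=f_k/k=g_k$ for $k\geqslant 1$; this proves~(i). Statements~(ii) and~(iii) are just the properties of $\tf(\lambda)=\lambda\tg(\lambda)$ recalled above. For~(iv), if $M\doteq\sup_{\ell>1/2}\int_{-\infty}^{+\infty}|\tf(\ell+\rmi\nu)|^2\,\rmd\nu<\infty$, then for $\ell>\frac{1}{2}$ one has $|\ell+\rmi\nu|^{-2}\leqslant\ell^{-2}<4$, hence $\int_{-\infty}^{+\infty}|\tg(\ell+\rmi\nu)|^2\,\rmd\nu\leqslant 4M$, i.e.\ $\tg\in H^2(\CCp)$. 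For~(v), the Schwarz inequality gives, for $\ell\geqslant\frac{1}{2}$,
\[
\int_{-\infty}^{+\infty}\left|\tg(\ell+\rmi\nu)\right|\,\rmd\nu
\leqslant\left(\int_{-\infty}^{+\infty}\frac{\rmd\nu}{\ell^2+\nu^2}\right)^{\ud}
\left(\int_{-\infty}^{+\infty}\left|\tf(\ell+\rmi\nu)\right|^2\,\rmd\nu\right)^{\ud}<\infty,
\]
since the first factor equals $(\pi/\ell)^{1/2}<\infty$.

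For~(vi) I would evaluate \eqref{KL.31bis} on the line $\Real\lambda=\frac{1}{2}$, obtaining, as in Remark~\ref{rem:1}, $\tf\left(\frac{1}{2}+\rmi\nu\right)=\left(\frac{1}{2}+\rmi\nu\right)\tg\left(\frac{1}{2}+\rmi\nu\right)=\cF\{h(v)e^{-3v/2}\theta(e^{-v})\}$ with $e^{-3v/2}\theta(e^{-v})\in L^1(0,+\infty)$; the Riemann--Lebesgue theorem then makes $\tf\left(\frac{1}{2}+\rmi\nu\right)$ continuous and vanishing as $\nu\to\pm\infty$, and dividing by $\frac{1}{2}+\rmi\nu$ (which never vanishes and diverges) transfers both properties to $\tg\left(\frac{1}{2}+\rmi\nu\right)$. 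Statement~(vii) follows by repeating verbatim the argument of Lemma~\ref{lem:2}(vii): the representation \eqref{KL.31bis} is valid up to the boundary $\Real\lambda=\frac{1}{2}$ because $e^{-3v/2}\theta(e^{-v})\in L^1(0,+\infty)\cap L^2(0,+\infty)$, so the supremum of $|\tf|$ over $\Real\lambda>\frac{1}{2}$ is attained on that line, and then, using $\tg=\tf/\lambda$ and that $|\lambda|$ is minimal on $\Real\lambda=\frac{1}{2}$ for each fixed $\nu$,
\[
\sup_{\ell>1/2,\,\nu\in\R}\left|\tg(\ell+\rmi\nu)\right|
=\frac{\left|\tf\left(\frac{1}{2}+\rmi\nu\right)\right|}{\sqrt{\nu^2+\frac{1}{4}}}
=\left|\tg\left(\frac{1}{2}+\rmi\nu\right)\right|.
\]

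I do not anticipate a genuine obstacle: the entire argument is a transcription of Lemmas~\ref{lem:1}--\ref{lem:2} after the shift $\lambda\mapsto\lambda-\ud$ already carried out in Remark~\ref{rem:1}. The two small points that deserve attention are (a) checking that the pole of $1/\lambda$ stays strictly to the left of the closed half‑plane $\Real\lambda\geqslant\frac{1}{2}$, so that $\tg=\tf/\lambda$ stays holomorphic (and bounded) up to and on the boundary line; and (b) verifying, as in Remark~\ref{rem:1}, that membership of the kernel $e^{-3v/2}\theta(e^{-v})$ in $L^1(0,+\infty)\cap L^2(0,+\infty)$ follows from $\theta\in L^2(0,1)$ alone, which is exactly why the hypothesis $p=2$ — rather than $p=2+\varepsilon$ — is enough here.
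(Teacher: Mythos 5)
Your proposal is correct and follows essentially the same route as the paper: the paper's own proof of this lemma simply invokes the Laplace representation \eqref{KL.31bis} prepared in Remark \ref{rem:1} and then repeats the arguments of Lemma \ref{lem:2} with the divisor $\lambda+1$ replaced by $\lambda$ and the half-plane shifted to $\Real\lambda\geqslant\frac{1}{2}$, which is exactly what you have written out in detail. Your added observations --- that $\theta\in L^2(0,1)$ alone makes $e^{-3v/2}\theta(e^{-v})\in L^1(0,+\infty)\cap L^2(0,+\infty)$ (so $p=2$ suffices here, unlike in Lemma \ref{lem:1}), and that the pole of $1/\lambda$ lies strictly outside the closed half-plane --- are the right points to check and are consistent with the paper.
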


\begin{proof}
Since $\{f_k\}_{k=0}^\infty$ satisfies condition \eqref{1.15} with $p=2$,
and taking into account what stated in Remark \ref{rem:1},
statements (i)-(vii) can be proved by following arguments analogous to those used in the proof
of points (i)-(vii) of Lemma \ref{lem:2}. 
\end{proof}

\noindent
\begin{examples}
We now give a few simple examples which aim at illustrating the properties proved above.
Consider the sequences (with $k=0,1,2,\ldots$):
\begin{align}
\qquad & g_k = \int_0^1 t^k (t-t^2)\,\rmd t = \frac{1}{(k+2)(k+3)},
\label{1.27} \\
\qquad & f_k = \int_0^1 t^k (2t^2-t)\,\rmd t = \frac{k+1}{(k+2)(k+3)}=(k+1) \, g_k, 
\label{1.28} \\
\qquad & \og_k = \int_0^1 t^k \,\rmd t = \frac{1}{(k+1)}.
\label{1.29}
\end{align}
Putting $t=e^{-v}$ in \eqref{1.27}, we obtain:
\beq
g_k = \int_0^{+\infty} e^{-(k+\ud)v}\,e^{-\frac{v}{2}}\,(e^{-v}-e^{-2v})\,\rmd v \qquad (k=0,1,2,\ldots).
\label{1.30}
\nonumber
\eeq
The sequence $\{g_k\}_{k=0}^\infty$ can be regarded as the restriction to the integers of the
following Laplace transform:
\beq
\tg(\lambda) = \int_0^{+\infty} e^{-(\lambda+\ud)v}\,e^{-\frac{v}{2}}\,(e^{-v}-e^{-2v})\,\rmd v 
\qquad \left(\Real\lambda\geqslant-\frac{1}{2}\right),
\label{1.31}
\nonumber
\eeq
where the function $g(v)=e^{-v/2}(e^{-v}-e^{-2v})$ belongs to
$L^1(0,+\infty)\cap L^2(0,+\infty)$. Obviously $g(v)$ tends to zero as $v\to +\infty$, and $g(0)=0$.

Analogously, in \eqref{1.28} we see that the numbers $f_k$ can be regarded as the restriction to the
integers of the Laplace transform
\beq
\tf(\lambda)=\int_0^{+\infty} e^{-(\lambda+\ud)v}\,e^{-\frac{v}{2}}\,(2e^{-2v}-e^{-v})\,\rmd v 
\qquad \left(\Real\lambda\geqslant-\frac{1}{2}\right),
\label{1.32}
\nonumber
\eeq
where again $f(v)=e^{-v/2}(2e^{-2v}-e^{-v})$ belongs to
$L^1(0,+\infty)\cap L^2(0,+\infty)$. Evidently, $f(v)$ tends to zero as $v\to +\infty$, but in this case $f(0)=1$.

Finally, consider the example in \eqref{1.29}. Again the sequence $\{\og_k\}_{k=0}^\infty$ can be
regarded as the restriction to the integers of the following Laplace transform:
\beq
\tog(\lambda) = \int_0^{+\infty} e^{-(\lambda+\ud)v}\,e^{-\frac{v}{2}}\,\rmd v 
\qquad \left(\Real\lambda\geqslant-\frac{1}{2}\right).
\label{1.32.2}
\nonumber
\eeq
The function $\og(v)=e^{-v/2} \in L^1(0,+\infty) \cap L^2(0,+\infty)$, and $\og(0)=1$.
\end{examples}

\vskip 0.4cm

\section{Holomorphic extension associated with trigonometric series and the KMS analytic structure}
\label{se:kms}

\subsection{Holomorphic extension associated with trigonometric series}
\label{subse:derivation}

In Thermal Quantum Field Theory the following problem plays a key role (see the Appendix).

\vskip 0.3cm

\begin{problem}
\label{prob:1}
Given the trigonometric series
\beq
\cG(\tau;\cdot)=\frac{1}{\beta}\sum_{n=-\infty}^{+\infty}g(\zeta_n;\cdot)\, e^{-\rmi\zeta_n\tau}
\quad \!\!\left(
\begin{aligned}
& \zeta_n=n\,\frac{\pi}{\beta}, \ n= \mathrm{even\ or\ odd\ integer;} \\
& \beta=(\mathrm{Temperature})^{-1}; \,\tau=(u+\rmi v)\in\C
\end{aligned}
\right),
\label{4.1}
\eeq
find the conditions on the coefficients $g(\zeta_n;\cdot)$ (where the dot denotes the
extra ``spectator variables''; see the Appendix), which are sufficient to guarantee that the
function $\cG(\tau;\cdot)$ satisfies the KMS analytic structure, i.e.,
\begin{itemize}
\item[(a)] it is analytic in the open strips $k\beta<u<(k+1)\,\beta$ \, ($v\in\R$, $k\in\Z$),
and continuous at the boundaries;
\item[(b)] it is periodic (resp., antiperiodic) for bosons (resp., fermions) with period $\beta$:
\beq
\cG(\tau+\beta;\cdot)=
\left\{
\begin{array}{rl}
\cG(\tau;\cdot) & \mathrm{for~bosons;~} (\tau\in\C), \\[+5pt]
-\cG(\tau;\cdot) & \mathrm{for~fermions;~} (\tau\in\C).
\end{array}
\right.
\label{4.2}
\eeq
\end{itemize}
\end{problem}

We introduce in the complex plane of the variable $\tau=u+\rmi v$ ($u,v\in\R$)
the following domains:
the half-planes $\cI_\pm\doteq\{\tau\in\C\,:\,\Imag\tau\gtrless 0\}$, and
the cut-domains $\cI_+\setminus\Xi_+^{(\beta)}$ and $\cI_-\setminus\Xi_-^{(\beta)}$,
where the cuts are given by
\beq
\Xi_\pm^{(\beta)}\doteq\{\tau\in\C\,:\,\tau=k\beta\pm\rmi v;\, v>0, k\in\Z\}. \nonumber
\label{add3}
\eeq
In this context, and following the standard Matsubara approach, the variable $v$ represents the
real time (see the Appendix).
Moreover, we denote by $\dot{A}$ any subset of $\C$, which is invariant under the translation
by $k\beta$, $k\in\Z$. Accordingly, the cut $\tau$-plane
$\C\setminus\left(\dot{\Xi}_+^{(\beta)} \cup \dot{\Xi}_-^{(\beta)}\right)$ will be denoted by $\dot{\Pi}^{(\beta)}$.

Let us begin by considering a system of bosons; the $\beta$-periodicity of $\cG(\tau;\cdot)$ (see \eqref{4.2}) 
is satisfied by putting $\zeta_n=n\,\frac{\pi}{\beta}$ with $n=2k$, $k\in\Z$, i.e.,
\beq
\cG^{(\rmb)}(\tau)=\frac{1}{\beta}\sum_{k=-\infty}^{+\infty}g\left(\frac{2k\pi}{\beta}\right)\, 
e^{-\rmi k\frac{2\pi}{\beta}\tau},
\label{4.1-boson}
\eeq
where the superscript `$\rmb$' stands for recalling that we are considering a system of bosons
(notice that, in order to make the notation lighter, from now on we omit to write the ``spectator variables'').
Now, by re-scaling suitably the variable $\tau$: i.e, $\frac{2\pi}{\beta}\tau\to\tau$, so that the period $\beta$ 
is re-scaled to $2\pi$, and defining $g_k^{(\rmb)} \doteq \frac{2\pi}{\beta}\, g(k)$ for $k=0, \pm 1,\pm 2,\ldots,$
we are left to analyze the $2\pi$-periodic function
\beq
\cG^{(\rmb)}(\tau)=\frac{1}{2\pi}\sum_{k=-\infty}^{+\infty}g_k^{(\rmb)}\, e^{-\rmi k\tau}. \nonumber
\label{add4}
\eeq
Finally, it is convenient to write the latter series as
\beq
\cG^{(\rmb)}(\tau)=\cG^{(+;\rmb)}(\tau)+\cG^{(-;\rmb)}(\tau)+g^{(\rmb)}_0/2\pi, \nonumber
\label{add5}
\eeq
where:
\begin{subequations}
\label{4.4}
\begin{align}
\cG^{(+;\rmb)}(\tau) =& \frac{1}{2\pi}\sum_{k=1}^{\infty} g_k^{(+;\rmb)}\,e^{-\rmi k\tau};
\qquad
g_k^{(+;\rmb)} \doteq g_{k}^{(\rmb)} \quad (k=1,2,\ldots), \label{4.4a} \\[+6pt]
\cG^{(-;\rmb)}(\tau) =& \frac{1}{2\pi}\sum_{k=-1}^{-\infty} g_k^{(\rmb)}\,e^{-\rmi k\tau}
=\frac{1}{2\pi}\sum_{k=1}^{\infty} g_k^{(-;\rmb)}\,e^{\rmi k\tau};
\qquad
g_k^{(-;\rmb)} \doteq g_{-k}^{(\rmb)} \quad (k=1,2,\ldots). \label{4.4b}
\end{align}
\end{subequations}
Now, we can prove the following theorem stating the properties of the function $\cG^{(+;\rmb)}(\tau)$
(similar results hold for $\cG^{(-;\rmb)}(\tau)$).

\vskip 0.3cm

\begin{theorem}
\label{the:6}
Suppose that the sequence of numbers $\{f_k\}_{k=0}^\infty$, with $f_k \doteq k g_k^{(+;\rmb)}$,
satisfies condition \eqref{1.15} of Proposition \ref{pro:3} with $p=2$. Then the trigonometric 
series in \eqref{4.4a} enjoys the following properties:
\begin{enumerate}\setlength\itemsep{0.4em}
\item It converges uniformly on any compact subdomain of the half-plane $\cI_-$ to a function
$\cG^{(+;\rmb)}(\tau)$ holomorphic in $\cI_-$, continuous on the axis $v\doteq\Imag\tau=0$.
\item $\cG^{(+;\rmb)}(\tau)$ admits a holomorphic extension to $\cI_+\setminus\dot{\Xi}_+^{(2\pi)}$.
\item The jump function $J^{(+;\rmb)}(v)$, which represents the discontinuity of $~\rmi\cG^{(+;\rmb)}(\tau)$ 
across the cut $\dot{\Xi}_+^{(2\pi)}$ (i.e., the difference between the values on the upper and the lower lip 
of the cut),
\beq
\begin{split}
J^{(+;\rmb)}(v) \doteq \rmi\lim_{\staccrel{\varepsilon\to 0}{\scriptstyle\varepsilon>0}}
& \left[\cG^{(+;\rmb)}(2k\pi+\varepsilon+\rmi v) - \cG^{(+;\rmb)}(2k\pi-\varepsilon+\rmi v)\right] \\
& \hspace{4.5cm} (k\in\Z,v\in\R^+),
\end{split}
\label{4.6}
\eeq
is a continuous function satysfying the following bound:
\beq
\left|J^{(+;\rmb)}(v)\right|\leqslant
\left\|\tg^{(+;\rmb)}_\ell\right\|_1\,e^{\ell v}
\qquad \left(\ell\geqslant\frac{1}{2}; v\in\R^+\right),
\label{4.5}
\eeq
where
\beq
\left\|\tg^{(+;\rmb)}_\ell\right\|_1 \doteq
\frac{1}{2\pi}\int_{-\infty}^{+\infty}\left|\tg^{(+;\rmb)}(\ell+\rmi\nu)\right|\,\rmd\nu<\infty
\qquad \left(\ell\geqslant\frac{1}{2}\right),
\label{4.7}
\eeq
and $\tg^{(+;\rmb)}(\lambda)$ ($\lambda=\ell+\rmi\nu$, $\ell\geqslant \ud$, $\nu\in\R$)
is the unique Carlsonian interpolation of the Fourier coefficients
$\{g^{(+;\rmb)}_k\}_{k=1}^\infty$.
\item $J^{(+;\rmb)}(v) = o(e^{v/2})$ as $v\to+\infty$, and $J^{(+;\rmb)}(0) = 0$.
\item The jump function can be represented by the following transform:
\beq
J^{(+;\rmb)}(v)=
\frac{e^{\frac{v}{2}}}{2\pi}\int_{-\infty}^{+\infty}
\tg^{(+;\rmb)}\left(\frac{1}{2}+\rmi\nu\right)e^{\rmi\nu v}\,\rmd\nu
\qquad (v\in\R^+).
\label{4.7.bis}
\eeq
\item The Laplace transform of the jump function across the cut, $\tJ^{(+;\rmb)}(\lambda)$, is given by:
\beq
\tJ^{(+;\rmb)}(\lambda) \doteq \int_0^{+\infty} J^{(+;\rmb)}(v) e^{-\lambda v}\,\rmd v
= \tg^{(+;\rmb)}(\lambda)\qquad \left(\Real\lambda>\frac{1}{2}\right),
\label{4.8}
\eeq
which is holomorphic in the half-plane $\Real\lambda>\frac{1}{2}$.
\item The following Plancherel-type equality holds true:
\beq
\int_{-\infty}^{+\infty}\left|\tg^{(+;\rmb)}(\ell+\rmi\nu)\right|^2\,\rmd\nu=
2\pi\int_{0}^{+\infty}\left|J^{(+;\rmb)}(v) e^{-\ell v}\right|^2\,\rmd v
\qquad \left(\ell\geqslant\frac{1}{2}\right).
\label{4.9}
\nonumber
\eeq
\end{enumerate}
\end{theorem}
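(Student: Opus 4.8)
The strategy is to leverage Lemma~\ref{lem:3}, which hands us directly the Carlsonian interpolation $\tg^{(+;\rmb)}(\lambda)$ with all its Hardy-space and integrability properties, and then transport those properties to the trigonometric series \eqref{4.4a} by a contour-shifting / Watson--Sommerfeld type argument. The one-sided series $\cG^{(+;\rmb)}(\tau)=\frac{1}{2\pi}\sum_{k\geqslant 1}g^{(+;\rmb)}_k e^{-\rmi k\tau}$ converges for $\Imag\tau<0$ because, by Remark~\ref{rem:222}, condition \eqref{1.15} on $\{f_k=kg^{(+;\rmb)}_k\}$ forces $|g^{(+;\rmb)}_k|=O(k^{-1-(p-1)/p})$ (so in particular $\sum_k|g^{(+;\rmb)}_k|<\infty$), which gives item~(1) at once: uniform convergence on compacts of $\cI_-$, holomorphy there, and continuity down to $v=0$ by Weierstrass $M$-test plus the absolute summability.

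\textbf{Main step: the contour representation.} For items (2)--(6) I would write, for $\tau$ in the lower half-plane, the series as a Sommerfeld--Watson integral
\beq
\cG^{(+;\rmb)}(\tau)=\frac{1}{2\pi}\sum_{k=1}^{\infty}g^{(+;\rmb)}_k e^{-\rmi k\tau}
=-\frac{1}{4\pi\rmi}\int_{\cC}\tg^{(+;\rmb)}(\lambda)\,\frac{e^{-\rmi\lambda\tau}}{\sin\pi\lambda}\,e^{\rmi\pi\lambda}\,\rmd\lambda,
\nonumber
\eeq
where $\cC$ encircles the positive integers; the residues of $1/\sin\pi\lambda$ reproduce the coefficients, and the insertion of $\tg^{(+;\rmb)}$ is legitimate because it interpolates $g^{(+;\rmb)}_k$ (Carlson uniqueness). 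One then opens $\cC$ onto the vertical line $\Real\lambda=\ud$: this is where Lemma~\ref{lem:3}(ii)--(iii) is essential — the uniform decay of $\lambda\tg^{(+;\rmb)}(\lambda)$ in half-planes $\Real\lambda\geqslant\delta>\ud$ kills the arcs at infinity, and the $L^1$/$L^2$ bounds (iv)--(v) make the line integral absolutely convergent. The resulting representation on $\Real\lambda=\ud$ is holomorphic in $\tau$ throughout $\cI_+\setminus\dot\Xi^{(2\pi)}_+$ (the factor $e^{-\rmi\lambda\tau}=e^{-\rmi(\ud+\rmi\nu)\tau}$ is controlled for $\Real\tau$ away from $2k\pi$), giving item~(2), and its $2\pi$-periodicity in $\tau$ is manifest from the integrand.

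\textbf{The jump and its transform.} For item~(3), computing the discontinuity across $\dot\Xi^{(2\pi)}_+$ reduces — after the $\tau\mapsto 2k\pi+\varepsilon+\rmi v$ substitution and letting $\varepsilon\to 0^+$ — to picking up exactly the contribution of the line $\Real\lambda=\ud$, yielding
\beq
J^{(+;\rmb)}(v)=\frac{e^{v/2}}{2\pi}\int_{-\infty}^{+\infty}\tg^{(+;\rmb)}\!\left(\ud+\rmi\nu\right)e^{\rmi\nu v}\,\rmd\nu,
\nonumber
\eeq
which is item~(5); continuity of $J^{(+;\rmb)}$ and the bound \eqref{4.5} follow because $\tg^{(+;\rmb)}(\ud+\rmi\nu)\in L^1$ (Lemma~\ref{lem:3}(v)), with $\|\tg^{(+;\rmb)}_\ell\|_1$ as in \eqref{4.7} and the freedom to shift $\ell\geqslant\ud$ coming from \eqref{KL.31bis}; the $e^{\ell v}$ factor is the price of moving the contour to $\Real\lambda=\ell$. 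Item~(4): $J^{(+;\rmb)}(v)=o(e^{v/2})$ is Riemann--Lebesgue applied to the Fourier integral above (Lemma~\ref{lem:3}(vi)), and $J^{(+;\rmb)}(0)=0$ follows from evaluating \eqref{4.7.bis} at $v=0$ and invoking that $\tg^{(+;\rmb)}(\ud+\rmi\nu)$ is, via \eqref{KL.31bis}, the Fourier transform of a function supported on $(0,\infty)$, whose integral over $\nu$ vanishes by the one-sidedness (equivalently, by analyticity in $\Real\lambda>\ud$ and decay, shifting the contour to $+\infty$). Item~(6): inverting the Laplace transform of $J^{(+;\rmb)}$ — legitimate since $J^{(+;\rmb)}(v)e^{-\ell v}\in L^1\cap L^2$ for $\ell>\ud$ — returns $\tg^{(+;\rmb)}(\lambda)$ by Fourier inversion on \eqref{4.7.bis}, and holomorphy in $\Real\lambda>\ud$ is Lemma~\ref{lem:3}(i). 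Finally item~(7) is Parseval/Plancherel for the Fourier pair $\big(\tg^{(+;\rmb)}(\ell+\rmi\nu),\,2\pi J^{(+;\rmb)}(v)e^{-\ell v}\big)$, using Lemma~\ref{lem:3}(iv) to guarantee $L^2$ on the line.

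\textbf{Main obstacle.} The delicate point is the rigorous justification of the Sommerfeld--Watson deformation: establishing that the arc contributions vanish as the contour is pushed to $\Real\lambda=\ud$ for \emph{all} $\tau\in\cI_+\setminus\dot\Xi^{(2\pi)}_+$ (not merely $\cI_-$), and that the interchange of $\sum_k$ with the limiting contour integral is valid. This is exactly where the uniform decay of $\lambda\tg^{(+;\rmb)}(\lambda)$ in half-planes (Lemma~\ref{lem:3}(iii)) — rather than mere membership in $H^2$ — must be used, together with a careful estimate of $e^{-\rmi\lambda\tau}/\sin\pi\lambda$ on large semicircular arcs in the right half-plane, where $|1/\sin\pi\lambda|$ is bounded away from the integers but $|e^{-\rmi\lambda\tau}|$ can grow like $e^{\Imag\lambda\cdot\Real\tau}$ unless one is careful to route the arcs through regions where $\Real\tau$ stays in a bounded interval modulo $2\pi$. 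Everything else is bookkeeping on the Paley--Wiener/Riemann--Lebesgue side already packaged in Lemmas~\ref{lem:1}--\ref{lem:3}.
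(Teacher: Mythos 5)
Your overall architecture is the paper's: get the Carlsonian interpolant and its $H^2$, $L^1$, and uniform-decay properties from Lemma \ref{lem:3}, resum the series by a Watson--Sommerfeld integral, open the contour onto a vertical line $\Real\lambda=\ell\geqslant\ud$, read off the jump, and then obtain (4)--(7) by Riemann--Lebesgue, Laplace inversion and Plancherel. Item (1) and items (4)--(7) are handled correctly (your contour-shifting argument for $J^{(+;\rmb)}(0)=0$ is more roundabout than the paper's, which simply invokes continuity of $\cG^{(+;\rmb)}$ on the real axis, but it does work).

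The genuine gap is in your main step. Your single representation has kernel $e^{-\rmi\lambda\tau}e^{\rmi\pi\lambda}/\sin\pi\lambda=e^{-\rmi\lambda(\tau-\pi)}/\sin\pi\lambda$, and on the line $\lambda=\ell+\rmi\nu$ one has $|e^{-\rmi\lambda(u-\pi)}/\sin\pi\lambda|\sim e^{|\nu|(|u-\pi|-\pi)}$ as $\nu\to\pm\infty$; the line integral therefore converges only for $0\leqslant u\leqslant 2\pi$ and diverges outside that window, so this one formula cannot define a holomorphic function on all of $\cI_+\setminus\dot{\Xi}_+^{(2\pi)}$. Nor is its $2\pi$-periodicity ``manifest from the integrand'': $e^{-\rmi\lambda(\tau-\pi)}$ is not periodic in $\tau$ for non-integer $\lambda$. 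The paper's fix is to use two representations, $\cG_+^{(+;\rmb)}$ with kernel $e^{-\rmi\lambda(\tau-\pi)}$ on $0\leqslant u\leqslant 2\pi$ and $\cG_-^{(+;\rmb)}$ with kernel $e^{-\rmi\lambda(\tau+\pi)}$ on $-2\pi\leqslant u\leqslant 0$ (the $\eta=\pm$ cases of \eqref{TT:4}); each continues the series from the real segment into the corresponding upper half-strip, periodicity of the full extension is then inherited from the series on $\Imag\tau\leqslant 0$ by uniqueness of analytic continuation, and --- crucially --- the jump across $u=0$ is the \emph{difference of the two distinct continuations}, $J^{(+;\rmb)}(v)=\rmi[\cG_+^{(+;\rmb)}(\rmi v)-\cG_-^{(+;\rmb)}(\rmi v)]$. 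In that difference the numerators combine to $e^{\lambda v}\bigl(e^{\rmi\pi\lambda}-e^{-\rmi\pi\lambda}\bigr)=2\rmi\sin(\pi\lambda)\,e^{\lambda v}$, the $\sin\pi\lambda$ cancels, and one lands on \eqref{4.7.bis}. With only one representation in hand you have no second boundary value to subtract, so your statement that the discontinuity ``reduces to picking up the contribution of the line $\Real\lambda=\ud$'' is asserted rather than derived; this is precisely the point your ``main obstacle'' paragraph gestures at without resolving. Introducing the second kernel closes the gap and the rest of your argument then goes through as written.
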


\begin{proof}
Since the sequence $\{f_k\}_{k=0}^\infty$ ($f_k=kg^{(+;\rmb)}_k$) satisfies condition \eqref{1.15}
with $p=2$, then, given an arbitrary constant $C$, there exists an integer $k_0$ such that,
for $k>k_0$, $|g^{(+;\rmb)}_k|\leqslant C$ (see \eqref{1.16a}). Accordingly, we have
\beq
\left|\frac{1}{2\pi}\sum_{k=k_0+1}^\infty g^{(+;\rmb)}_k e^{-\rmi k\tau}\right|
\leqslant\frac{C}{2\pi}\sum_{k=k_0+1}^\infty e^{kv}
\qquad (v\equiv\Imag\tau;~  C=\mathrm{constant}).
\label{TT:1}
\eeq
The series on the r.h.s. of \eqref{TT:1} converges uniformly on any compact subdomain
contained in the half-plane $v<0$, i.e., for $v\leqslant v_0<0$.
Since we can write:
\beq
\frac{1}{2\pi}\sum_{k=1}^\infty g^{(+;\rmb)}_k e^{-\rmi k\tau}
=\frac{1}{2\pi}\sum_{k=k_0+1}^\infty g^{(+;\rmb)}_k e^{-\rmi k\tau} + T_{k_0}(\tau),
\label{T:2}
\nonumber
\eeq
where $T_{k_0}(\tau)$ is a trigonometric polynomial,
we can say, by Weierstrass' theorem on uniformly convergent series of analytic functions,
that $\frac{1}{2\pi}\sum_{k=1}^\infty g^{(+;\rmb)}_k e^{-\rmi k\tau}$ converges uniformly 
on any compact subdomain of $\cI_-$ to a function $\cG^{(+;\rmb)}(\tau)$ holomorphic in $\cI_-$.
Furthermore, since $kg^{(+;\rmb)}_k\staccrel{\longrightarrow}{k\to+\infty}0$,
given an arbitrary constant $C'$,
there exists an integer $k_1$ such that for $v=0$ we have:
\beq
\left|\frac{1}{2\pi}\sum_{k=k_1}^\infty g^{(+;\rmb)}_k e^{-\rmi ku}\right|
\leqslant\frac{1}{2\pi}\sum_{k=k_1}^\infty \left|g^{(+;\rmb)}_k\right|
\leqslant C' \sum_{k=k_1}^\infty \frac{1}{k^{1+\varepsilon}} < \infty
\qquad (\varepsilon>0).
\label{T:3}
\nonumber
\eeq
Then, applying the Weierstrass theorem on uniformly convergent series of continuous
functions, we can say that the series converges to a continuous function. Statement (1) is thus proved.
\begin{figure}[tb]
\begin{center}
\leavevmode
\includegraphics[width=12.5cm]{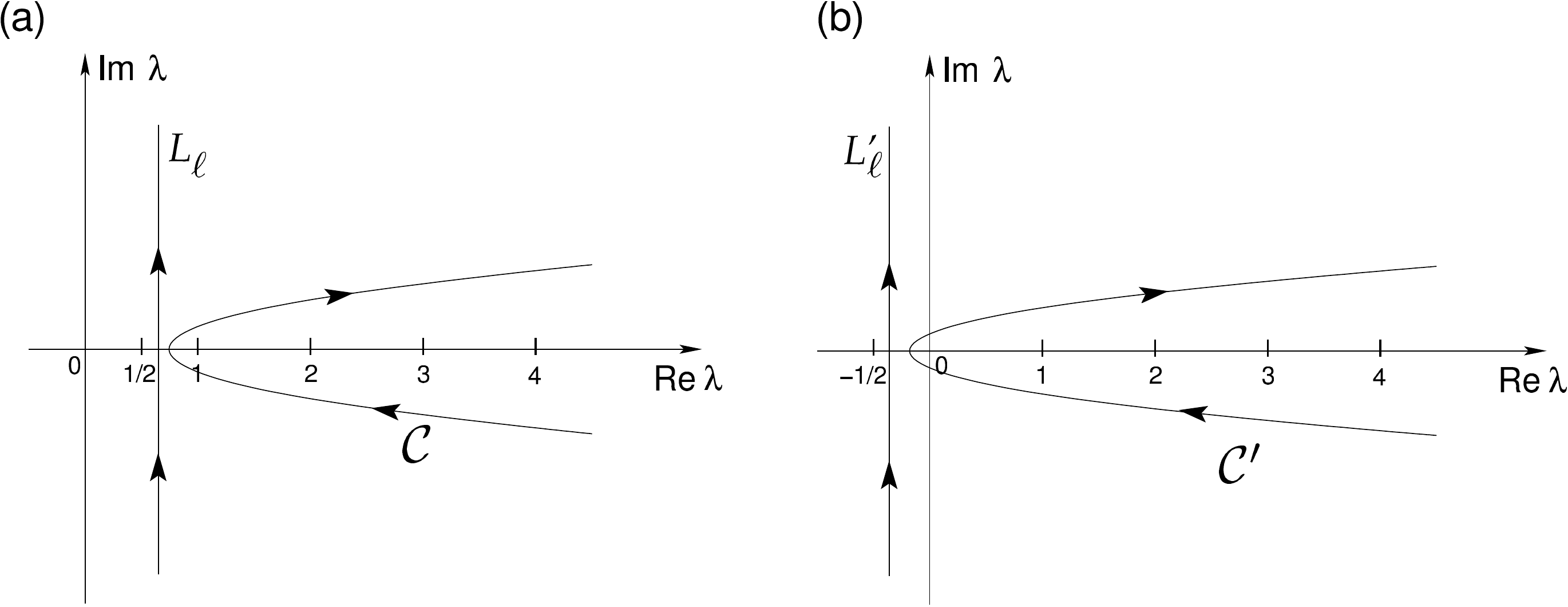}
\caption{\label{fig:1}
\small
(a) and (b): Integration paths of integrals (\protect\ref{TT:4}) and (\protect\ref{Q:1}), respectively.}
\end{center}
\end{figure}

Regarding the point (2), we write the following integral:
\beq
\cG_\eta^{(+;\rmb)}(u) \doteq
\frac{\rmi}{4\pi}\int_{\cC}\tg^{(+;\rmb)}(\lambda)\,
\frac{e^{-\rmi\lambda(u-\eta\pi)}}{\sin\pi\lambda}\,\rmd\lambda
\qquad (\eta = \pm),
\label{TT:4}
\eeq
where $\tg^{(+;\rmb)}(\lambda)$
is the unique Carlsonian interpolation of the coefficients $\{g_k^{(+;\rmb)}\}_{k=1}^\infty$,
which is holomorphic in the half-plane $\Real\lambda>\frac{1}{2}$ (see statement
(i) of Lemma \ref{lem:3}). The contour $\cC$ is contained in the half-plane $\CCp$,
encircles the real semiaxis $\Real\lambda\geqslant\frac{1}{2}$, and is chosen to cross the latter
in a point $\ell\not\in\N$ (see Figure \ref{fig:1}a). Consider now the term
$\frac{\exp[-\rmi\lambda(u-\eta\pi)]}{\sin\pi\lambda}$; 
the following inequalities hold ($\lambda=\ell+\rmi\nu$):
\beq
\left|e^{-\rmi(\ell+\rmi\nu)(u-\eta\pi)}\right|
\leqslant 2\cosh\pi\nu
\quad \mathrm{for} \quad
\left\{
\begin{array}{cc}
0 \leqslant u \leqslant 2\pi & ~ \mathrm{if} \quad \eta = +, \\
-2\pi \leqslant u \leqslant 0 & ~ \mathrm{if} \quad \eta = -,
\end{array}
\right.
\label{TT:5}
\eeq
\beq
\left|\sin\pi(\ell+\rmi\nu)\right| \geqslant \sinh\pi\nu,
\label{T:6}
\eeq
and
\beq
\left|\sin\pi(\ell+\rmi\nu)\right|\geqslant\left|\sin\pi\ell\right|\,\cosh\pi\nu.
\label{T:8}
\eeq
Therefore, from \eqref{TT:5} and \eqref{T:6} we have
\beq
\left|\frac{e^{-\rmi(\ell+\rmi\nu)(u-\eta\pi)}}{\sin\pi(\ell+\rmi\nu)}\right|
\leqslant 2\left|\frac{\cosh\pi\nu}{\sinh\pi\nu}\right|
\quad \mathrm{for} \quad
\left\{
\begin{array}{cc}
0 \leqslant u \leqslant 2\pi & ~ \mathrm{if} \quad \eta = +, \\
-2\pi \leqslant u \leqslant 0 & ~ \mathrm{if} \quad \eta = -,
\end{array}
\right.
\label{T:7}
\eeq
while, combining \eqref{TT:5} and \eqref{T:8}, we have:
\beq
\left|\frac{e^{-\rmi(\ell+\rmi\nu)(u-\eta\pi)}}{\sin\pi(\ell+\rmi\nu)}\right|
\leqslant\frac{2}{|\sin\pi\ell|}<\infty
\quad \mathrm{for} \,
\left\{
\begin{array}{cc}
0 \leqslant u \leqslant 2\pi & ~ \mathrm{if} \quad \eta = +, \\
-2\pi \leqslant u \leqslant 0 & ~ \mathrm{if} \quad \eta = -,
\end{array}
\right.
\,\mathrm{and} \ \ell\not\in\N.
\label{T:9}
\eeq
Integral \eqref{TT:4} converges since $\lambda\,\tg^{(+;\rmb)}(\lambda)$ tends uniformly
to zero as $\lambda$ tends to infinity in any fixed half-plane $\Real\lambda\geqslant\delta>\frac{1}{2}$
(see statement (iii) of Lemma \ref{lem:3}), and
furthermore, the term $\left|\frac{\exp[-\rmi(\ell+\rmi\nu)(u-\eta\pi)]}{\sin\pi(\ell+\rmi\nu)}\right|$
is bounded by a constant for $\ell\not\in\N$ (see \eqref{T:9}), and is bounded by $2$ as $\nu\to\pm\infty$
in view of inequality \eqref{T:7}.
Next, the contour $\cC$ can be distorted and replaced by a line $L_\ell$ parallel to the imaginary axis and
crossing the real axis at $\Real\lambda=\ell$
($\ell\not\in\N, \ell\geqslant\frac{1}{2}$) (see Figure \ref{fig:1}a), provided the real variable $u$ is kept 
in $[0,2\pi]$ for $\cG_+^{(+;\rmb)}(u)$ and in $[-2\pi,0]$ for $\cG_-^{(+;\rmb)}(u)$, respectively. 
Note that the integral along $L_\ell$ converges since $\tg^{(+;\rmb)}(\lambda) \in L^1(-\infty,+\infty)$ 
for any fixed value of $\Real\lambda\geqslant\frac{1}{2}$ (see statement (v) of Lemma \ref{lem:3})
and in view of inequality \eqref{T:9}.
We may now apply the Watson resummation method to integral \eqref{TT:4}.
For $u\in[0,2\pi]$ we obtain:
\beq
\frac{\rmi}{4\pi}\int_{\cC}\tg^{(+;\rmb)}(\lambda)\, \frac{e^{-\rmi\lambda(u-\pi)}}{\sin\pi\lambda}\,\rmd\lambda
=\frac{1}{2\pi}\sum_{k=1}^\infty g_k^{(+;\rmb)} e^{-\rmi ku}
\qquad \left(0\leqslant u\leqslant 2\pi\right),
\label{T:10}
\eeq
where the integration path $\cC$ crosses the real axis at a point $\ell$ with $\ud\leqslant\ell<1$.
Distorting the contour $\cC$ into the line $L_\ell$, which is admissible
as explained above, we obtain the following representation for the function $\cG_+^{(+;\rmb)}(u)$:
\beq
\begin{split}
\cG_+^{(+;\rmb)}(u) &= -\frac{1}{4\pi}\int_{-\infty}^{+\infty}\tg^{(+;\rmb)}(\ell+\rmi\nu)
\frac{e^{-\rmi(\ell+\rmi\nu)(u-\pi)}}{\sin\pi(\ell+\rmi\nu)}\,\rmd\nu \\
&= \frac{1}{2\pi}\sum_{k=1}^\infty g_k^{(+;\rmb)} e^{-\rmi ku}
\qquad\left(0\leqslant u\leqslant 2\pi; \, \frac{1}{2}\leqslant\ell<1 \right).
\end{split}
\label{T:11}
\eeq
Analogously, for $u\in[-2\pi,0]$ we obtain:
\beq
\begin{split}
\cG_-^{(+;\rmb)}(u) &= -\frac{1}{4\pi}\int_{-\infty}^{+\infty}\tg^{(+;\rmb)}(\ell+\rmi\nu)
\frac{e^{-\rmi(\ell+\rmi\nu)(u+\pi)}}{\sin\pi(\ell+\rmi\nu)}\,\rmd\nu \\
&= \frac{1}{2\pi}\sum_{k=1}^\infty g_k^{(+;\rmb)} e^{-\rmi ku}
\qquad\left(-2\pi\leqslant u \leqslant 0, \,\frac{1}{2}\leqslant\ell<1\right).
\end{split}
\label{T:12}
\eeq
We now substitute into the integral in \eqref{T:11} the real variable $u$ with the complex variable $\tau=u+\rmi v$,
and the resulting integral can be proved to provide an analytic
continuation of $\cG_+^{(+;\rmb)}(u)$ in the strip $0<u<2\pi$, $v\in\R^+$, continuous
in the closure of the latter. In fact, from the first equality in \eqref{T:11} we formally obtain
\beq
\cG_+^{(+;\rmb)}(u+\rmi v) =
\frac{1}{2\pi}\,e^{\ell v}\,\int_{-\infty}^{+\infty} H_\ell^u(\nu)\,e^{\rmi\nu v}\,\rmd\nu
\qquad \left(0\leqslant u \leqslant 2\pi;\, \frac{1}{2}\leqslant\ell<1\right),
\label{T:13}
\eeq
where
\beq
H_\ell^u(\nu) =
-\frac{\tg^{(+;\rmb)}(\ell+\rmi\nu)e^{-\rmi(\ell+\rmi\nu)(u-\pi)}}{2\sin\pi(\ell+\rmi\nu)}.
\label{T:14}
\eeq
In view of inequality \eqref{T:9} and statement (vii) of Lemma \ref{lem:3} we have:
\beq
\left|H_\ell^u(\nu)\right|
\leqslant
\frac{|\tg^{(+;\rmb)}(\frac{1}{2}+\rmi\nu)|}{|\sin\pi\ell|}
\qquad \left(0\leqslant u \leqslant 2\pi; \ell\geqslant\frac{1}{2},\ell\not\in\N\right),
\label{T:14bis}
\eeq
which, along with statement (v) of Lemma \ref{lem:3}, guarantees $H_\ell^u(\nu)\in L^1(-\infty,+\infty)$
for $0\leqslant u \leqslant 2\pi, \ell\geqslant\frac{1}{2},\ell\not\in\N$.
Therefore, formulae \eqref{T:13}, \eqref{T:14}, and \eqref{T:14bis} define $\cG_+^{(+;\rmb)}(\tau)$ as an analytic
continuation of $\cG_+^{(+;\rmb)}(u)$ in the strip
$\{\tau=u+\rmi v\,:\,0<u<2\pi,v\in\R^+\}$, continuous on the closure of the latter in view of
the Riemann-Lebesgue theorem.
Proceeding analogously, we can obtain an analytic continuation
of $\cG_-^{(+;\rmb)}(u)$ in the strip $\{\tau=u+\rmi v\,:\,-2\pi<u<0,v\in\R^+\}$, 
continuous on the closure of the latter.
Therefore, the function $\cG^{(+;\rmb)}(\tau)$ admits a holomorphic
extension to the cut-domain $\cI_+\setminus\dot{\Xi}_+^{(2\pi)}$, and statement (2) is thus proved.

The discontinuity $J^{(+;\rmb)}(v)$ of $\cG^{(+;\rmb)}(\tau)$ across the cut at $u=0$, 
which is defined in \eqref{4.6}
(for the $2\pi$-periodicity of $\cG^{(+;\rmb)}(\tau)$ we may consider only the cut at $u=0$), equals
$\rmi[\cG_+^{(+;\rmb)}(\rmi v)-\cG_-^{(+;\rmb)}(\rmi v)]$  ($v\in\R^+$). 
It can be computed by replacing $u$ by $\rmi v$
into the integrals in \eqref{T:11} and \eqref{T:12}, and then subtracting Eq. \eqref{T:12}
from Eq. \eqref{T:11} side by side:
\beq
\begin{split}
J^{(+;\rmb)}(v)=\rmi\left[\cG_+^{(+;\rmb)}(\rmi v)-\cG_-^{(+;\rmb)}(\rmi v)\right] 
& =\frac{1}{2\pi}\int_{-\infty}^{+\infty}\tg^{(+;\rmb)}(\ell+\rmi\nu)e^{(\ell+\rmi\nu)v}\,\rmd\nu \\
&\qquad\qquad\qquad\qquad \left(v\in\R^+, \ell\geqslant\frac{1}{2}\right),
\end{split}
\label{T:16}
\eeq
which yields
\beq
\left|J^{(+;\rmb)}(v)\right|\leqslant\left\|\tg^{(+;\rmb)}_\ell\right\|_1\,e^{\ell v}
\qquad \left(v\in\R^+,\ell\geqslant\frac{1}{2}\right),
\label{T:19}
\nonumber
\eeq
where $\left\|\tg^{(+;\rmb)}_\ell\right\|_1$, defined in \eqref{4.7}, is guaranteed to be finite 
by statement (v) of Lemma \ref{lem:3}; statement (3) is thus proved.
From \eqref{T:16} we have
\beq
J^{(+;\rmb)}(v)e^{-\ell v}=\frac{1}{2\pi}\int_{-\infty}^{+\infty}\tg^{(+;\rmb)}(\ell+\rmi\nu)\,e^{\rmi\nu v}\,\rmd\nu
\qquad \left(v\in\R^+,\ell\geqslant\frac{1}{2}\right).
\label{T:21}
\eeq
By the Riemann-Lebesgue theorem it follows that $J^{(+;\rmb)}(v)e^{-\ell v}$
is a continuous function tending to zero as $v\to+\infty$, and, therefore,
$J^{(+;\rmb)}(v)$ is a continuous function of $v$ ($v\in\R^+$)
with $J^{(+;\rmb)}(v) = o(e^{v/2})$ as $v\to+\infty$. Moreover,
$J^{(+;\rmb)}(0)=0$ for the continuity of $\cG^{(+;\rmb)}(\tau)$ on the real axis.
Statement (4) is thus proved. Putting $\ell=\frac{1}{2}$ in formula \eqref{T:21},
statement (5) follows. Next, inverting \eqref{T:21}, we have:
\beq
\tg^{(+;\rmb)}(\ell+\rmi\nu)=
\int_{0}^{+\infty}J^{(+;\rmb)}(v)\,e^{-(\ell+\rmi\nu) v}\,\rmd v
\qquad \left(\ell>\frac{1}{2}\right),
\label{T:22}
\nonumber
\eeq
where the integral on the r.h.s. converges for $\ell>\frac{1}{2}$.
It defines $\tJ^{(+;\rmb)}(\lambda)$, that is, the Laplace transform of $J^{(+;\rmb)}(v)$,
holomorphic in the half-plane $\Real\lambda>\frac{1}{2}$, and statement (6) follows.
Finally, recalling that $\tg^{(+;\rmb)}(\ell+\rmi\nu)$ ($\nu\in\R$) belongs to $L^2(-\infty,+\infty)$ 
for any fixed value
$\ell\geqslant\frac{1}{2}$ (see statements (ii) and (vi) of Lemma \ref{lem:3}),
we obtain the Plancherel equality:
\beq
\int_{-\infty}^{+\infty}\left|\tg^{(+;\rmb)}\left(\ell+\rmi\nu\right)\right|^2\,\rmd\nu=
2\pi\int_0^{+\infty}\left|J^{(+;\rmb)}(v)\,e^{-\ell v}\right|^2\rmd v \qquad \left(\ell\geqslant\frac{1}{2}\right),
\label{T:23}
\nonumber
\eeq
which proves statement (7).
\end{proof}

\vskip 0.2cm

We can now state the following theorem.

\begin{theorem}
\label{the:7}
Suppose that the sequence of numbers $\{f_k\}_{k=0}^\infty$, with $f_k \doteq k g_k^{(-;\rmb)}$,
satisfies condition \eqref{1.15} of Proposition \ref{pro:3} with $p=2$. Then the trigonometric 
series in \eqref{4.4b} enjoys the following properties:
\begin{enumerate}\setlength\itemsep{0.4em}
\item[(1')] It converges uniformly on any compact subdomain of $\cI_+$ to a function
$\cG^{(-;\rmb)}(\tau)$ holomorphic in $\cI_+$, continuous on the axis $\Imag\tau=0$.
\item[(2')] $\cG^{(-;\rmb)}(\tau)$ admits a holomorphic extension to $\cI_-\setminus\dot{\Xi}_-^{(2\pi)}$.
\item[(3')] The jump function across the cut $\dot{\Xi}_-^{(2\pi)}$, i.e.,
\beq
\begin{split}
J^{(-;\rmb)}(v) \doteq
-\rmi\lim_{\staccrel{\varepsilon\to 0}{\scriptstyle\varepsilon>0}}
& \left[\cG^{(-;\rmb)}(2k\pi+\varepsilon+\rmi v)-\cG^{(-;\rmb)} (2k\pi-\varepsilon+\rmi v)\right] \\
& \hspace{4.5cm}(k\in\Z,v\in\R^-),
\end{split}
\nonumber
\label{4.12}
\eeq
is a continuous function satysfying the following bound:
\beq
\left|J^{(-;\rmb)}(v)\right|\leqslant
\left\|\tg^{(-;\rmb)}_\ell\right\|_1\,e^{-\ell v}
\qquad \left(\ell\geqslant\frac{1}{2}; v\in\R^-\right), \nonumber
\label{4.11}
\eeq
where:
\beq
\left\|\tg^{(-;\rmb)}_\ell\right\|_1 \doteq
\frac{1}{2\pi}\int_{-\infty}^{+\infty}\left|\tg^{(-;\rmb)}(\ell+\rmi\nu)\right|\,\rmd\nu<\infty
\qquad \left(\ell\geqslant\frac{1}{2}\right), \nonumber
\label{4.13}
\eeq
and $\tg^{(-;\rmb)}(\lambda)$ ($\lambda=\ell+\rmi\nu$, $\ell\geqslant \ud$, $\nu\in\R$)
is the unique Carlsonian interpolation of the Fourier coefficients
$\{g^{(-;\rmb)}_k\}_{k=1}^\infty$.
\item[(4')] $J^{(-;\rmb)}(v) = o(e^{-v/2})$ as $v\to-\infty$, and $J^{(-;\rmb)}(0) = 0$.
\item[(5')] The jump function can be represented by the following transform:
\beq
J^{(-;\rmb)}(v)=
\frac{e^{-\frac{v}{2}}}{2\pi}\int_{-\infty}^{+\infty}
\tg^{(-;\rmb)}\left(\frac{1}{2}+\rmi\nu\right)e^{-\rmi\nu v}\,\rmd\nu
\qquad (v\in\R^-). \nonumber
\label{4.13.bis}
\eeq
\item[(6')] The Laplace transform of the jump function across the cut, $\tJ^{(-;\rmb)}(\lambda)$, is given by:
\beq
\tJ^{(-;\rmb)}(\lambda) \doteq \int_{-\infty}^0 J^{(-;\rmb)}(v) e^{\lambda v}\,\rmd v
= \tg^{(-;\rmb)}(\lambda)\qquad \left(\Real\lambda>\frac{1}{2}\right),
\label{4.14}
\eeq
which is holomorphic in the half-plane $\Real\lambda>\frac{1}{2}$.
\item[(7')] The following Plancherel-type equality holds true:
\beq
\int_{-\infty}^{+\infty}\left|\tg^{(-;\rmb)}(\ell+\rmi\nu)\right|^2\,\rmd\nu=
2\pi\int_{-\infty}^0\left|J^{(-;\rmb)}(v) e^{\ell v}\right|^2\,\rmd v
\qquad \left(\ell\geqslant\frac{1}{2}\right).
\label{4.15}
\nonumber
\eeq
\end{enumerate}
\end{theorem}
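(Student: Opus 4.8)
The plan is to deduce Theorem~\ref{the:7} from Theorem~\ref{the:6} by the reflection $\tau\mapsto-\tau$, which interchanges the half-planes $\cI_+$ and $\cI_-$ and turns the series \eqref{4.4b} into one of the type \eqref{4.4a}. Indeed, from \eqref{4.4b} one has $\cG^{(-;\rmb)}(\tau)=\frac{1}{2\pi}\sum_{k=1}^\infty g_k^{(-;\rmb)}e^{\rmi k\tau}$, so that $\Gamma(\tau)\doteq\cG^{(-;\rmb)}(-\tau)=\frac{1}{2\pi}\sum_{k=1}^\infty g_k^{(-;\rmb)}e^{-\rmi k\tau}$ is precisely of the form treated in Theorem~\ref{the:6}, now with Fourier coefficients $g_k^{(-;\rmb)}$ replacing $g_k^{(+;\rmb)}$. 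By hypothesis the sequence $f_k\doteq kg_k^{(-;\rmb)}$ satisfies condition \eqref{1.15} with $p=2$, so Theorem~\ref{the:6} applies to $\Gamma$ in full; in particular, by Lemma~\ref{lem:3}, the coefficients $\{g_k^{(-;\rmb)}\}_{k=1}^\infty$ admit a unique Carlsonian interpolation $\tg^{(-;\rmb)}(\lambda)$, holomorphic in $\Real\lambda>\frac{1}{2}$ and enjoying properties (i)--(vii) of that lemma.

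Then I would translate each conclusion of Theorem~\ref{the:6} for $\Gamma$ back through $\tau=-\tau'$. Since $\Imag(-\tau)=-\Imag\tau$, holomorphy of $\Gamma$ in $\cI_-$ and continuity on $\Imag\tau=0$ become holomorphy of $\cG^{(-;\rmb)}$ in $\cI_+$ and continuity on the real axis, which is (1'). The reflection maps $\dot{\Xi}_+^{(2\pi)}=\{2k\pi+\rmi v:v>0,\,k\in\Z\}$ onto $\dot{\Xi}_-^{(2\pi)}$ (indeed $-(2k\pi+\rmi v)=2(-k)\pi-\rmi v$), so the holomorphic extension of $\Gamma$ to $\cI_+\setminus\dot{\Xi}_+^{(2\pi)}$ gives a holomorphic extension of $\cG^{(-;\rmb)}$ to $\cI_-\setminus\dot{\Xi}_-^{(2\pi)}$, proving (2'). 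For the discontinuity, substituting $\tau=-\tau'$ into definition \eqref{4.6} applied to $\Gamma$ one checks that the two lips of the cut are swapped; this swap contributes a factor $-1$ that is exactly cancelled by the global sign $-\rmi$ prescribed in (3') (against the $+\rmi$ in \eqref{4.6}), so that $J^{(-;\rmb)}(v)=J_\Gamma(-v)$ for $v\in\R^-$, where $J_\Gamma$ is the jump function of $\Gamma$ furnished by Theorem~\ref{the:6}. Feeding $v\mapsto-v$ into the bound \eqref{4.5}, into the asymptotics and vanishing at the origin of Theorem~\ref{the:6}(4), into the representation \eqref{4.7.bis}, into the Laplace-transform identity \eqref{4.8} (where the change of variable $w=-v$ turns $\int_0^{+\infty}$ into $\int_{-\infty}^0$), and into the Plancherel equality \eqref{4.9} — all written for $\Gamma$ — then yields (3'), (4'), (5'), (6'), and (7') respectively; in (5') and (6') one also invokes uniqueness (Carlson's theorem) to identify the interpolant with $\tg^{(-;\rmb)}$.

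If a self-contained argument were preferred to the reflection shortcut, I would instead repeat the proof of Theorem~\ref{the:6} verbatim with $e^{-\rmi k\tau}$ replaced by $e^{\rmi k\tau}$: the Watson resummation is carried out with the kernel $e^{\rmi\lambda(u-\eta\pi)}/\sin\pi\lambda$, the contour $\cC$ now encircling the semiaxis $\Real\lambda\geqslant\frac{1}{2}$ in $\CCp$ and being deformed onto a line $L_\ell$ with the roles of the strips $0\leqslant u\leqslant2\pi$ and $-2\pi\leqslant u\leqslant0$ exchanged, while all convergence, $L^1$, and $L^2$ estimates are supplied, word for word, by Lemma~\ref{lem:3}. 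In either route I expect the only delicate point to be purely bookkeeping — tracking consistently the sign conventions in the definition of the jump (upper versus lower lip of the cut) and the orientation of $\dot{\Xi}_\pm^{(2\pi)}$ under $\tau\mapsto-\tau$, so that the coefficient $-\rmi$ in (3'), rather than $+\rmi$, comes out correctly. There is no analytic difficulty beyond what has already been established in Theorem~\ref{the:6} and Lemma~\ref{lem:3}.
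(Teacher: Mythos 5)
Your proposal is correct. The paper's own proof of Theorem \ref{the:7} is literally the single sentence ``the proof is analogous to that of Theorem \ref{the:6}'', which corresponds to your second, self-contained route (rerun the Watson resummation with $e^{\rmi k\tau}$ in place of $e^{-\rmi k\tau}$, all estimates being supplied by Lemma \ref{lem:3}). Your first route --- setting $\Gamma(\tau)\doteq\cG^{(-;\rmb)}(-\tau)$ so that $\Gamma$ is exactly a series of type \eqref{4.4a} with coefficients $g_k^{(-;\rmb)}$, applying Theorem \ref{the:6} to it, and pulling every conclusion back through $\tau\mapsto-\tau$ --- is a genuinely different and arguably preferable organization: it replaces the informal ``analogous'' by an exact formal reduction, and the only content one must verify is the bookkeeping you carry out (that $-\dot{\Xi}_+^{(2\pi)}=\dot{\Xi}_-^{(2\pi)}$, and that the swap of the two lips of the cut under reflection contributes a sign that is cancelled by the $-\rmi$ versus $+\rmi$ in the two definitions of the jump, giving $J^{(-;\rmb)}(v)=J_\Gamma(-v)$). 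I checked that computation and it is right; the substitution $w=-v$ then converts \eqref{4.5}, \eqref{4.7.bis}, \eqref{4.8}, and \eqref{4.9} into the primed statements exactly as claimed. One small remark: no separate appeal to Carlson's theorem is needed in (5') and (6'), since the interpolant furnished by Theorem \ref{the:6} applied to $\Gamma$ interpolates precisely the sequence $\{g_k^{(-;\rmb)}\}_{k=1}^\infty$ and is therefore $\tg^{(-;\rmb)}$ by definition; but this is harmless.
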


\begin{proof}
The proof is analogous to that of Theorem \ref{the:6}.
\end{proof}

\vskip 0.3cm

\subsection{Derivation of the KMS analytic structure}
\label{subse:derKMS}

We can now collect the results of Theorems \ref{the:6} and \ref{the:7},
and derive the KMS analytic structure for the series \eqref{4.1}
in the case of bosons (and temperature scaled to have $\beta=2\pi$;
the generalization to any value of $\beta$ is straightforward).
For this purpose we prove the following theorem.

\vskip 0.2cm

\begin{theorem}
\label{the:8}
If the Fourier coefficients $g_k^{(\pm;\rmb)}$ satisfy the conditions of Theorems
\ref{the:6} and \ref{the:7}, then:
\begin{itemize}\setlength\itemsep{0.4em}
\item[(i)] The functions $\cG^{(\pm;\rmb)}(u)$ ($u\in\R$) are continuous.
\item[(ii)] The function $\cG^{(+;\rmb)}(\tau)$ (resp., $\cG^{(-;\rmb)}(\tau)$)
is holomorphic in $\cI_-$ (resp., $\cI_+$), and admits a holomorphic extension to
$\cI_+\setminus\dot{\Xi}_+^{(2\pi)}$ (resp., $\cI_-\setminus\dot{\Xi}_-^{(2\pi)}$).
\item[(iii)] The jump function $J^{(+;\rmb)}(v)$ across the cut $\dot{\Xi}_+^{(2\pi)}$ 
is continuous on $[0,+\infty)$, $J^{(+;\rmb)}(v)=o(e^{v/2})$ as $v\to+\infty$, and $J^{(+;\rmb)}(0)=0$. \\
The jump function $J^{(-;\rmb)}(v)$ across the cut $\dot{\Xi}_-^{(2\pi)}$ 
is continuous on $(-\infty,0]$, $J^{(-;\rmb)}(v)=o(e^{-v/2})$ as $v\to-\infty$, and $J^{(-;\rmb)}(0)=0$.
\item[(iv)] The function $\cG^{(\rmb)}(\tau)=\cG^{(+;\rmb)}(\tau)+\cG^{(-;\rmb)}(\tau)+g_0^{(\rmb)}/2\pi$ 
$(\tau=u+\rmi v$; $u,v\in\R$) is periodic with period $2\pi$, analytic in the strips
$\{\tau=u+\rmi v\,:\,2k\pi<u<2(k+1)\pi, v\in\R; k\in\Z\}$, and continuous up to the boundaries.
\item[(v)] The Laplace transforms $\tJ^{(+;\rmb)}(\lambda)$
and $\tJ^{(-;\rmb)}(\lambda)$ of the respective jump functions
$J^{(+;\rmb)}(v)$ and $J^{(-;\rmb)}(v)$ 
are analytic in the half-plane $\Real\lambda>\frac{1}{2}$.
\item[(vi)] The Fourier coefficients $g^{(+;\rmb)}_k$
($k=1,2,\ldots$) are the restriction to the positive integers of the Laplace transform
$\tJ^{(+;\rmb)}(\lambda)$, i.e., $g^{(+;\rmb)}_k=\tJ^{(+;\rmb)}(k)$, ($k=1,2,3,\ldots$). \\
The Fourier coefficients $g^{(-;\rmb)}_k$
($k=1,2,\ldots$) are the restriction to the positive integers of the Laplace transform
$\tJ^{(-;\rmb)}(\lambda)$, i.e., $g^{(-;\rmb)}_k=\tJ^{(-;\rmb)}(k)$, ($k=1,2,\ldots$).
\end{itemize}
\end{theorem}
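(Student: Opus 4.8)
The plan is to assemble Theorem~\ref{the:8} essentially as a corollary of Theorems~\ref{the:6} and~\ref{the:7}, since almost every clause is a direct transcription of a statement already proved there. First I would dispatch (i), (ii), and (iii): statement (i) is the concatenation of statement (1) of Theorem~\ref{the:6} (continuity of $\cG^{(+;\rmb)}$ on $v=0$) and statement (1') of Theorem~\ref{the:7}; statement (ii) restates statements (1)--(2) and (1')--(2') verbatim; statement (iii) is statements (3)--(4) and (3')--(4'). Similarly, (v) is just statements (6) and (6') put side by side, and (vi) is obtained by restricting \eqref{4.8} and \eqref{4.14} to the positive integers $\lambda=k$, using that $\tg^{(+;\rmb)}$ (resp.\ $\tg^{(-;\rmb)}$) is the Carlsonian interpolation of $\{g_k^{(+;\rmb)}\}$ (resp.\ $\{g_k^{(-;\rmb)}\}$), so $\tJ^{(+;\rmb)}(k)=\tg^{(+;\rmb)}(k)=g_k^{(+;\rmb)}$ and likewise for the minus component. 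None of these require new work.

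The only clause that genuinely needs an argument is (iv): that the \emph{sum} $\cG^{(\rmb)}(\tau)=\cG^{(+;\rmb)}(\tau)+\cG^{(-;\rmb)}(\tau)+g_0^{(\rmb)}/2\pi$ is $2\pi$-periodic, analytic in each open strip $\{2k\pi<u<2(k+1)\pi,\ v\in\R\}$, and continuous up to the boundaries. Here the point is that the two pieces have \emph{complementary} analyticity domains: by (ii), $\cG^{(+;\rmb)}$ is holomorphic in all of $\cI_-$ and extends holomorphically to $\cI_+\setminus\dot\Xi_+^{(2\pi)}$, hence it is holomorphic on the full strip $\{2k\pi<u<2(k+1)\pi\}$ once the cuts $\dot\Xi_+^{(2\pi)}$ (located at $u\in 2\pi\Z$) are removed — but these cuts lie precisely on the strip boundaries, so $\cG^{(+;\rmb)}$ is holomorphic in the \emph{open} strip and continuous up to the boundary from inside. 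The same holds for $\cG^{(-;\rmb)}$ with the roles of $\cI_+$ and $\cI_-$ interchanged and cuts $\dot\Xi_-^{(2\pi)}$ again sitting at $u\in 2\pi\Z$. Adding them, $\cG^{(\rmb)}$ is holomorphic in each open strip and continuous up to the boundaries. The main obstacle, such as it is, will be making this geometric bookkeeping of the cut locations precise: I must check that the discontinuities of $\cG^{(+;\rmb)}$ and $\cG^{(-;\rmb)}$ across the lines $u\in2\pi\Z$ do \emph{not} cancel in the sum, so that $\cG^{(\rmb)}$ really has a jump there (the KMS cut structure), while still being continuous up to each boundary from within a fixed strip — i.e.\ the boundary values from the left strip and the right strip differ by $J^{(+;\rmb)}(v)+J^{(-;\rmb)}(v)$, which is generically nonzero.

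For periodicity I would argue as follows. For $u$ in a strip not containing any point of $2\pi\Z$, the representation \eqref{4.4a} converges (it is the original Fourier series $\frac{1}{2\pi}\sum_{k\ge1}g_k^{(+;\rmb)}e^{-\rmi k\tau}$ in the region $v<0$ where it converges, continued elsewhere), and each term $e^{-\rmi k\tau}$ is manifestly $2\pi$-periodic in $u$; the continued function inherits this periodicity by the identity theorem applied strip-to-strip across the (removable-from-one-side) boundary lines. The same for \eqref{4.4b} and for the constant term. Hence $\cG^{(\rmb)}(\tau+2\pi)=\cG^{(\rmb)}(\tau)$ for all $\tau\in\C$, which is clause (b) of Problem~\ref{prob:1} for bosons. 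Finally I would remark that (i)--(vi) together are exactly the assertions (a)--(b) of the KMS analytic structure (with $\beta=2\pi$), so the theorem, combined with the trivial rescaling $\frac{2\pi}{\beta}\tau\to\tau$ indicated before \eqref{4.1-boson}, resolves Problem~\ref{prob:1} in the bosonic case under the stated Hausdorff-type hypotheses on $\{kg_k^{(\pm;\rmb)}\}$.
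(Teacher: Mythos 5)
Your overall plan coincides with the paper's: clauses (i), (ii), (iii) and (v) are read off from Theorems \ref{the:6} and \ref{the:7}, clause (vi) is obtained by evaluating \eqref{4.8} and \eqref{4.14} at $\lambda=k$ and invoking the Carlsonian interpolation property $\tg^{(\pm;\rmb)}(k)=g_k^{(\pm;\rmb)}$, and all the real work is concentrated in (iv). Up to that point there is nothing to object to.

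In (iv), however, the one step that actually needs an argument is asserted rather than proved. You write that since $\cG^{(+;\rmb)}$ is holomorphic in $\cI_-$ and extends holomorphically to $\cI_+\setminus\dot{\Xi}_+^{(2\pi)}$, it is ``hence'' holomorphic on the full open strip $\{2k\pi<u<2(k+1)\pi,\ v\in\R\}$. This is a non sequitur as stated: $\cI_-$ and $\cI_+\setminus\dot{\Xi}_+^{(2\pi)}$ are disjoint open sets whose union omits the real axis entirely, so holomorphy of $\cG^{(\rmb)}$ at the interior real points $\tau=u$, $2k\pi<u<2(k+1)\pi$, does not follow from (ii) alone. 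What is needed is exactly what the paper supplies: $\cG^{(\rmb)}$ is holomorphic in the upper and lower open half-strips, continuous on their common boundary segment on $\Imag\tau=0$ (by clause (i) together with the continuity up to the closure of the half-strips established in Theorems \ref{the:6} and \ref{the:7}), and therefore holomorphic across that segment by the Schwarz reflection principle (equivalently, by Morera's theorem for a function continuous on a domain and holomorphic off a line). The same gluing argument is also what legitimizes your periodicity claim via the identity theorem, since the series representation is only available for $v\leqslant 0$ and one must know the two half-strip pieces form a single holomorphic function before propagating $\cG^{(\rmb)}(\tau+2\pi)=\cG^{(\rmb)}(\tau)$ from the real axis into the upper half-strips. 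Conversely, the issue you single out as the ``main obstacle'' --- whether the discontinuities of $\cG^{(+;\rmb)}$ and $\cG^{(-;\rmb)}$ across $u\in 2\pi\Z$ cancel in the sum --- is not required by the statement of the theorem: clause (iv) asserts analyticity in the open strips and continuity up to their boundaries, and says nothing about the jump of $\cG^{(\rmb)}$ being nonzero. So the proposal misidentifies where the difficulty lies; the fix is a one-sentence appeal to the reflection principle at the real axis.
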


\begin{proof}
Statements (i), (ii), (iii), and (v) follow directly from Theorems \ref{the:6} and \ref{the:7}.
Regarding statement (iv), the $2\pi$-periodicity of $\cG^{(\rmb)}(\tau)$ (see \eqref{4.1-boson})
follows from the one of $\cG^{(\pm;\rmb)}(\tau)$.
Next, in Theorems \ref{the:6} and \ref{the:7}, we have proved that
$\cG^{(\rmb)}(\tau)$ is analytic in the half-strips
$\{\tau=u+\rmi v\,:\,2k\pi<u<2(k+1)\pi,v>0;k\in\Z\}$ and $\{\tau=u+\rmi v\,:\,2k\pi<u<2(k+1)\pi,v<0;k\in\Z\}$; 
moreover, the function $\cG^{(\rmb)}(u)=\cG^{(+;\rmb)}(u)+\cG^{(-;\rmb)}(u)+g_0^{(\rmb)}/2\pi$ is continuous.
By Schwarz's reflection principle, it follows that $\cG^{(\rmb)}(\tau)$
is holomorphic in the strips $\{\tau=u+\rmi v\,:\,2k\pi<u<2(k+1)\pi,v\in\R;k\in\Z\}$.
Furthermore, taking into account that $\cG^{(\rmb)}(\tau)$
is continuous on the closure of the strips where it is analytic (as proved
in Theorems \ref{the:6} and \ref{the:7}), and that it is continuous at $\tau=u$, statement (iv) follows.
Finally, statement (vi) follows from \eqref{4.8} and \eqref{4.14},
recalling that $\tg^{(+;\rmb)}(\lambda)$ (resp., $\tg^{(-;\rmb)}(\lambda)$) is the unique Carlsonian 
interpolation of the Fourier coefficients $g^{(+;\rmb)}_k$ (resp., $g^{(-;\rmb)}_k$), that is, 
$\tg^{(\pm;\rmb)}(\lambda)|_{\lambda=k}=g^{(\pm;\rmb)}_k$ ($k= 1, 2,\ldots$).
\end{proof}

\vskip 0.3cm

Consider now a system of fermions.
The $\beta$-antiperiodicity of $\cG^{(\rmf)}(\tau)$ (see condition \eqref{4.2}) is satisfied
by setting into expansion \eqref{4.1} $\zeta_n=n\frac{\pi}{\beta}$, $n=(2k+1)$, $k\in\Z$.
Then, following arguments similar to those used in the case of bosons, namely,
re-scaling $\tau$ by a factor $\frac{2\pi}{\beta}$ and defining $g_k^{(\rmf)}\doteq\frac{2\pi}{\beta}g(k+\ud)$
for $k\in\Z$, series \eqref{4.1} can be written as:
$\cG^{(\rmf)}(\tau)=\cG^{(+;\rmf)}(\tau)+\cG^{(-;\rmf)}(\tau)+\frac{1}{2\pi}g_0^{(\rmf)}e^{-\rmi\frac{\tau}{2}}$
($\tau\in\C$), where:
\begin{align*}
& \cG^{(+;\rmf)}(\tau) = 
\frac{1}{2\pi}\sum_{k=1}^{+\infty}g_k^{(\rmf)}\,e^{-\rmi(k+\frac{1}{2})\tau}
= \frac{e^{-\rmi\frac{\tau}{2}}}{2\pi}\sum_{k=1}^{+\infty}g_k^{(+;\rmf)}\,e^{-\rmi k\tau};
\quad
g_k^{(+;\rmf)} \doteq g_{k}^{(\rmf)},
%\qquad (\tau=u+\rmi v;\,u,v\in\R),
\\[+5pt]
& \cG^{(-;\rmf)}(\tau) = \frac{1}{2\pi}
\sum_{k=-1}^{-\infty}g_k^{(\rmf)}\,e^{-\rmi(k+\frac{1}{2})\tau}
= \frac{e^{-\rmi\frac{\tau}{2}}}{2\pi}\sum_{k=1}^{+\infty}g_k^{(-;\rmf)}\,e^{\rmi k\tau};
\quad
g_k^{(-;\rmf)} \doteq g_{-k}^{(\rmf)}.
\end{align*}
Proceeding as in the proofs of Theorems \ref{the:6}, \ref{the:7}, and \ref{the:8} 
(with only slight modifications), we can achieve results analogous to those obtained 
in the case of bosons.

\vskip 0.4cm

\begin{remarks}
\label{rem:2}
(i) In Theorem \ref{the:6} (resp., \ref{the:7}) we have proved that the Laplace transforms
$\tJ^{(+;\rmb)}(\lambda)$ and $\tJ^{(-;\rmb)}(\lambda)$ are analytic in the half-plane 
$\Real\lambda>\frac{1}{2}$. These results follow from our assumptions on
the Fourier coefficients, which lead the jump function across the cut (e.g., $J^{(+;\rmb)}(v)$)
to satisfy an exponential-type bound of the form $C e^{\ell v}$ ($C=\mathrm{constant}$), 
for $\ell\geqslant\frac{1}{2}$, as expressed, e.g., by formula \eqref{4.5}.
If, instead, these discontinuity functions are assumed to satisfy a slow-growth condition of power-type
(i.e., they are \emph{tempered functions}), as done in \cite{Cuniberti}, then the analyticity domain 
can be enlarged to have $\tJ^{(\pm;\rmb)}(\lambda)$ holomorphic in the half-plane $\Real\lambda>0$, 
which is more realistic for physical systems.

(ii) Honesty requires to notice that to date we are not able to exhibit an example of physical system
whose thermal Green functions satisfy the conditions of Theorems
\ref{the:6}, \ref{the:7} and \ref{the:8}.
Work in order to overcome this limit is in progress.
\end{remarks}

%\vspace*{0.5cm}
\newpage

\section{Holomorphic extension associated with power series and connection between 
the jump function across the cut and the density of a probability distribution}
\label{se:taylor}

\subsection{Holomorphic extension associated with power series}
\label{subse:taylor}

We prove the following theorem.

\begin{theorem}
\label{the:1}
Consider the power series
\beq
\label{1.33}
\frac{1}{2\pi}\sum_{k=0}^\infty g_k z^k
\qquad (z\in\C; z=x+\rmi y; x,y\in\R),
\eeq
and suppose that the sequence of numbers $\{f_k\}_{k=0}^\infty$, with $f_k = (k+1) g_k$, 
satisfies condition \eqref{1.15} of Proposition \ref{pro:3} with $p=2+\varepsilon$.
Then:
\begin{itemize}\setlength\itemsep{0.4em}
\item[(i)] The series \eqref{1.33} converges uniformly on any compact subdomain of the open
unit disk $|z|<1$ to a function $G(z)$ holomorphic therein and continuous up to the unit circle $|z|=1$.
\item[(ii)] The function $G(z)$ admits a holomorphic extension to the cut-plane $\SSS$.
\item[(iii)] The jump function across the cut
$J(x) \doteq -\rmi[G_+(x)-G_-(x)]$, $x\in[1,+\infty)$, where
$G_\pm(x)\doteq\lim_{y\to 0;y>0}G(x\pm\rmi y)$, is a continuous
function, which satisfies the following bound:
\beq
\label{1.35}
|J(x)|\leqslant\left\|\tg_\ell\right\|_1 x^\ell
\qquad \left(\ell\geqslant-\frac{1}{2}; x\in[1,+\infty)\right),
\nonumber
\eeq
where
\beq
\left\|\tg_\ell\right\|_1\doteq\frac{1}{2\pi}\int_{-\infty}^{+\infty}\left|\tg(\ell+\rmi\nu)\right|\,\rmd\nu<\infty
\qquad \left(\ell\geqslant-\frac{1}{2}\right),
\label{1.36}
\eeq
and $\tg(\lambda)$ ($\Real\lambda\geqslant-\frac{1}{2}$) is the unique
Carlsonian interpolation of the coefficients $\{g_k\}_{k=0}^\infty$.
\item[(iv)] $J(x)=o(x^{-\ud})$ as $x\to+\infty$, and $J(1)=0$.
\item[(v)] The jump function $J(x)$ can be represented by the following inverse Mellin transform:
\beq
J(x)=\frac{1}{2\pi}\int_{-\infty}^{+\infty}\tg\left(\ell+\rmi\nu\right)\,x^{(\ell+\rmi\nu)}\,\rmd\nu
\qquad \left(\ell\geqslant-\frac{1}{2}; x\in[1,+\infty)\right).
\label{1.36.bis}
\nonumber
\eeq
\item[(vi)] The Mellin transform of the discontinuity function across the cut is
\beq
\hJ(\lambda)
\doteq\int_{1}^{+\infty} J(x)\,x^{-\lambda-1}\,\rmd x = \tg(\lambda)
\qquad \left(\Real\lambda>-\frac{1}{2}\right),
\label{1.37}
\nonumber
\eeq
and $\hJ(\lambda)$ is holomorphic in the half-plane $\Real\lambda>-\frac{1}{2}$.
\item[(vii)] The Plancherel formula associated with the Mellin transform reads
\beq
\int_{-\infty}^{+\infty}\left|\tg\left(\ell+\rmi\nu\right)\right|^2\,\rmd\nu
= {2\pi}
\int_{1}^{+\infty} \left|J(x)\right|^2\,x^{-2\ell-1}\,\rmd x
\qquad \left(\ell\geqslant-\frac{1}{2}\right),
\label{1.38}
\nonumber
\eeq
and, in particular, for $\ell=-\ud$ we have
\beq
\int_{-\infty}^{+\infty}\left|\tg\left(-\frac{1}{2}+\rmi\nu\right)\right|^2\,\rmd\nu
= 2\pi\int_{1}^{+\infty} \left|J(x)\right|^2\,\rmd x.
\label{1.39}
\nonumber
\eeq
\end{itemize}
\end{theorem}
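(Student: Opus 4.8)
The plan is to follow the same circuit as in the trigonometric case (Theorem~\ref{the:6}), but now working with the Mellin transform on the cut-plane $\SSS$ rather than with Fourier series on a strip, exploiting Lemma~\ref{lem:2} in place of Lemma~\ref{lem:3}. The starting point is that $\{f_k\}=\{(k+1)g_k\}$ satisfies the Hausdorff condition \eqref{1.15} with $p=2+\varepsilon$, so Lemma~\ref{lem:2} applies: there is a unique Carlsonian interpolation $\tg(\lambda)$ of $\{g_k\}_{k=0}^\infty$, holomorphic in $\CCm$, lying in $H^2(\CCm)$, with $\lambda\tg(\lambda)\in L^2$ on every vertical line $\Real\lambda=\ell\geqslant-\frac12$ and tending uniformly to zero at infinity in half-planes $\Real\lambda\geqslant\delta>-\frac12$, with $\tg(\lambda)\in L^1$ on each such line, and with $\tg(-\frac12+\rmi\nu)$ continuous and vanishing at $\pm\infty$. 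This is the analytic input for everything below, and the factor $(k+1)$ (rather than $k$) is what shifts the relevant abscissa from $\frac12$ down to $-\frac12$.

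For (i), since $(k+1)g_k\to 0$ one has in particular $g_k\to 0$ and, from Remark~\ref{rem:222}, $|g_k|\leqslant C(k+1)^{-(p-1)/p}$ for all $k$, whence $\sum|g_k|<\infty$; Weierstrass' $M$-test then gives uniform convergence on $|z|\leqslant 1$, so $G(z)$ is holomorphic in $|z|<1$ and continuous up to $|z|=1$. For (ii) and (v), I would form the Watson-type contour integral
\beq
\frac{\rmi}{4\pi}\int_{\cC}\tg(\lambda)\,\frac{(-z)^\lambda}{\sin\pi\lambda}\,\rmd\lambda,
\nonumber
\eeq
with $\cC$ encircling the half-line $\Real\lambda\geqslant0$ and crossing the real axis at a non-integer $\ell$; the convergence of this integral is guaranteed by statements (ii), (iii) and (v) of Lemma~\ref{lem:2} together with the elementary bounds on $(-z)^\lambda/\sin\pi\lambda$ in the cut-plane (exactly the analogues of \eqref{TT:5}, \eqref{T:6}, \eqref{T:8}, which here become estimates on $|(-z)^{\ell+\rmi\nu}|$ in terms of $|z|$ and $\arg(-z)$). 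Resumming residues at $\lambda=0,1,2,\dots$ recovers $\frac1{2\pi}\sum g_k z^k$ for $|z|<1$, while distorting $\cC$ to a vertical line $L_\ell$ ($-\frac12\leqslant\ell<1$, $\ell\notin\N$) and substituting $z=x\pm\rmi y$ furnishes the holomorphic extension to $\SSS$ and, in the limit $y\to0^+$ along $x>1$, an integral representation of $G_\pm(x)$. Subtracting the two boundary values kills the parts symmetric across the cut and leaves, after setting $\lambda=\ell+\rmi\nu$,
\beq
J(x)=\frac{1}{2\pi}\int_{-\infty}^{+\infty}\tg(\ell+\rmi\nu)\,x^{\ell+\rmi\nu}\,\rmd\nu
\qquad\left(\ell\geqslant-\tfrac12,\ x\in[1,+\infty)\right),
\nonumber
\eeq
which is statement (v). Continuity of $J$ and the bound $|J(x)|\leqslant\|\tg_\ell\|_1\,x^\ell$ in (iii) are then immediate from $\tg(\ell+\rmi\nu)\in L^1(\rmd\nu)$ (statement (v) of Lemma~\ref{lem:2}); writing $J(x)x^{-\ell}$ as the Fourier transform of an $L^1$ function and invoking Riemann--Lebesgue with $\ell=-\frac12$ gives $J(x)=o(x^{-1/2})$ as $x\to+\infty$, while $J(1)=0$ follows from the continuity of $G$ up to $|z|=1$, so (iv) holds. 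Statement (vi) is the Mellin inversion of the representation in (v): the integral $\int_1^\infty J(x)x^{-\lambda-1}\,\rmd x$ converges for $\Real\lambda>-\frac12$ (again by the $o(x^{-1/2})$ decay and continuity), returns $\tg(\lambda)$, and is holomorphic there. Finally (vii) is the Plancherel identity for the Mellin transform, legitimate because $\tg(\ell+\rmi\nu)\in L^2(\rmd\nu)$ for every $\ell\geqslant-\frac12$ (statements (ii), (vi) of Lemma~\ref{lem:2}), after the change of variable $x=e^{w}$ turning Mellin into Fourier--Plancherel.

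The main obstacle I anticipate is item (ii): justifying rigorously the Watson resummation and, above all, the contour distortion from $\cC$ to the vertical line $L_\ell$ uniformly in $z$ as $z$ ranges over compact subsets of $\SSS$ and as $y\to 0^+$. One must control $|(-z)^{\ell+\rmi\nu}|=|z|^{\ell}e^{-\nu\arg(-z)}$ against the decay of $1/\sin\pi(\ell+\rmi\nu)\sim e^{-\pi|\nu|}$, which forces $|\arg(-z)|<\pi$ — i.e. $z\notin[1,+\infty)$ — and gives only \emph{conditional} control as $\arg(-z)\to\pm\pi$; there the $L^1$ bound of Lemma~\ref{lem:2}(v) together with the explicit bound \eqref{T:9}-type estimate and the Riemann--Lebesgue argument must be combined carefully to get the one-sided boundary limits $G_\pm(x)$ and their continuity. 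The rest of the proof is then a transcription, mutatis mutandis, of the argument already carried out for Theorem~\ref{the:6}, with ``strip/Fourier'' replaced by ``cut-plane/Mellin'' and the abscissa $\frac12$ replaced by $-\frac12$.
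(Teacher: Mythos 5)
Your proposal follows essentially the same route as the paper's proof: the paper establishes Theorem \ref{the:1} by pulling the power series back to a trigonometric series via $z=e^{-\rmi\tau}$ and repeating the Watson resummation and contour-distortion argument of Theorem \ref{the:6}, with Lemma \ref{lem:2} in place of Lemma \ref{lem:3} and the contour now encircling $\Real\lambda\geqslant-\frac{1}{2}$ (crossing at $-\frac{1}{2}\leqslant\ell<0$); your kernel $(-z)^{\lambda}/\sin\pi\lambda$ is exactly the paper's $e^{-\rmi\lambda(u\mp\pi)}/\sin\pi\lambda$ written in the $z$-variable, and statements (iii)--(vii) are obtained by the same Riemann--Lebesgue, Laplace/Mellin inversion and Plancherel steps. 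One small slip in (i): Remark \ref{rem:222} applies to the sequence $f_k=(k+1)g_k$, so it yields $|g_k|\leqslant C(k+1)^{-1-(p-1)/p}$ rather than $C(k+1)^{-(p-1)/p}$; since the exponent $(p-1)/p=(1+\varepsilon)/(2+\varepsilon)$ is less than $1$, the bound as you wrote it would not give $\sum_k|g_k|<\infty$, whereas the corrected one does.
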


\begin{proof}
The mapping $z=e^{-\rmi\tau}$ (with inverse: $\tau=\rmi\ln z$) maps
the cut-plane $\{\tau\in\C\setminus\dot{\Xi}_+^{(2\pi)}\}$ of the variable $\tau$
into the cut-plane $\SSS$ of the variable $z$; moreover,
it transforms the trigonometric series $\sum_{k=0}^\infty g_k e^{-\rmi k\tau}$
into the power series $\sum_{k=0}^\infty g_k z^k$. Therefore, the proof of the theorem coincides,
up to some modifications, with the proof of Theorem \ref{the:6}.
The main difference between the series being considered here
and that treated in Theorem \ref{the:6} is that now the series starts from $k=0$ instead of from $k=1$.
This difference is completely irrelevant for what concerns the proof of the statement (i), 
which coincides with that of statement (1) of Theorem \ref{the:6}, by just observing that the 
half-plane $\Imag\tau\leqslant 0$ is mapped onto the domain $|z|\leqslant 1$.
For what concerns the proof of the other statements,
in strict analogy with the procedure followed in the proof of Theorem \ref{the:6}, we perform
a Watson resummation of the series $\frac{1}{2\pi}\sum_{k=0}^\infty g_k e^{-\rmi ku}$, for $u\in[0,2\pi]$, 
as follows (cf. \eqref{T:10}):
\beq
\frac{1}{2\pi}\sum_{k=0}^\infty g_k e^{-\rmi ku}
=\frac{\rmi}{4\pi}
\int_{\cC'}\tg(\lambda)\frac{e^{-\rmi\lambda(u-\pi)}}{\sin\pi\lambda}\,\rmd\lambda
\qquad \left(0\leqslant u \leqslant 2\pi\right),
\label{Q:1}
\eeq
where the integration path $\cC'$ is contained in the half-plane $\CCm$,
encircles the semiaxis $\Real\lambda\geqslant-\frac{1}{2}$, and crosses it in a point $-\frac{1}{2}\leqslant\ell<0$ 
(see Figure \ref{fig:1}b).
Note that we may use all the statements of Lemma \ref{lem:2}, in particular the one which guarantees that
$\tg(\lambda)\in H^2(\CCm)$ represents the unique Carlsonian interpolation of the coefficients
$\{g_k\}_{k=0}^\infty$. Moreover, we know that $\lambda\tg(\lambda)\to 0$ as $\lambda\to\infty$ in any fixed
half-plane $\Real\lambda\geqslant\delta>-\frac{1}{2}$, and $\tg(\lambda)\in L^1(-\infty,+\infty)$
for any fixed value of $\Real\lambda\geqslant-\frac{1}{2}$. Next, proceeding as in the proof of Theorem \ref{the:6}
(up to a few obvious modifications), the contour $\cC'$ may be distorted suitably and replaced by the line $L_\ell'$
(see Figure \ref{fig:1}b); we then have
\beq
\begin{split}
\cG_+(u) &= -\frac{1}{4\pi}\int_{-\infty}^{+\infty}\frac{\tg(\ell+\rmi\nu)e^{-\rmi(\ell+\rmi\nu)(u-\pi)}}{\sin\pi(\ell+\rmi\nu)}\,\rmd\nu \\
&=\frac{1}{2\pi}\sum_{k=0}^\infty g_k e^{-\rmi ku}
\qquad \left(0\leqslant u \leqslant 2\pi;-\frac{1}{2}\leqslant\ell<0\right).
\end{split}
\label{Q:2}
\eeq
Analogously, for $-2\pi\leqslant u\leqslant 0$ we obtain:
\beq
\begin{split}
\cG_-(u) &= -\frac{1}{4\pi}\int_{-\infty}^{+\infty}\frac{\tg(\ell+\rmi\nu)e^{-\rmi(\ell+\rmi\nu)(u+\pi)}}{\sin\pi(\ell+\rmi\nu)}\,\rmd\nu \\
&=\frac{1}{2\pi}\sum_{k=0}^\infty g_k e^{-\rmi ku}
\qquad \left(-2\pi\leqslant u \leqslant 0;-\frac{1}{2}\leqslant\ell<0\right).
\end{split}
\label{Q:3}
\eeq
It is easy to show that both integrals in \eqref{Q:2} and \eqref{Q:3} converge
since $\tg(\ell+\rmi\nu)\in L^1(-\infty,+\infty)$ for any fixed value of $\ell\geqslant-\frac{1}{2}$
(see statement (v) of Lemma \ref{lem:2}), and $\tg(-\frac{1}{2}+\rmi\nu)$ is a continuous function tending to zero
as $\nu\to\pm\infty$ (statement (vi) of Lemma \ref{lem:2}). Furthermore, bounds
\eqref{T:8}, \eqref{T:7}, and \eqref{T:9} continue to hold in the present situation. Then, following
the arguments used in Theorem \ref{the:6}, we can prove that
$\cG_+(u)$ (resp., $\cG_-(u)$) admits an analytic continuation in the half-strip
$\{\tau=u+\rmi v\,:\,0<u<2\pi, v\in\R^+\}$ (resp., $\{\tau=u+\rmi v\,:\,-2\pi<u<0, v\in\R^+\}$), 
continuous up to the border of the strip.
Accordingly, $\cG(\tau)$ (we continue to use the notation of Theorem \ref{the:6})
admits a holomorphic extension to the cut-domain $\cI_+\setminus\dot{\Xi}_+^{(2\pi)}$.
Finally, replacing $u$ by $\rmi v$ in both the integrals in \eqref{Q:2} and \eqref{Q:3}
we obtain the following expression for the jump function $J(v)$:
\beq
J(v)\doteq\rmi
\left[\cG_+(\rmi v)-\cG_-(\rmi v)\right]=
\frac{1}{2\pi}\int_{-\infty}^{+\infty}\tg(\ell+\rmi\nu)e^{(\ell+\rmi\nu)v}\,\rmd\nu<\infty
\ \left(v\in\R^+;\ell\geqslant-\frac{1}{2}\right).
\label{Q:4}
\eeq
Therefore we have
\beq
|J(v)|\leqslant \|\tg_\ell\|_1\,e^{\ell v}
\qquad\left(v\in\R^+; \ell\geqslant-\frac{1}{2}\right),
\label{Q:5}
\nonumber
\eeq
with $\|\tg_\ell\|_1$ defined in \eqref{1.36}.
In view of the Riemann-Lebesgue theorem, from formula \eqref{Q:4} it follows that
$J(v)e^{-\ell v}$ ($\ell\geqslant-\frac{1}{2}$) is a continuous function tending
to zero as $v\to+\infty$. This implies that $J(v)=o(e^{-v/2})$ as $v\to+\infty$,
and, moreover, $J(0)=0$ since $\cG(u)$ ($u\in\R$) is a continuous function.
Next, inverting Eq. \eqref{Q:4}, we get
\beq
\tg(\lambda)=
\int_{0}^{+\infty} J(v) e^{-\lambda v}\,\rmd v \doteq \tJ(\lambda)
\qquad\left(\Real\lambda>-\frac{1}{2}\right),
\label{Q:7}
\nonumber
\eeq
where $\tJ(\lambda)$ is the Laplace transform of $J(v)$, which is holomorphic in the half-plane
$\Real\lambda>-\frac{1}{2}$.
Finally, from the Plancherel theorem it follows:
\beq
\int_{-\infty}^{+\infty}|\tg(\ell+\rmi\nu)|^2\,\rmd\nu
=2\pi\int_{0}^{+\infty}\left|J(v)e^{-\ell v}\right|^2\,\rmd v
\qquad\left(\ell\geqslant-\frac{1}{2}\right).
\label{Q:8}
\nonumber
\eeq
Let us now return to the mapping $z=e^{-\rmi\tau}$ from the complex $\tau$-plane
($\tau=u+\rmi v$; $u,v\in\R$) to the complex $z$-plane ($z=x+\rmi y$; $x,y\in\R$). We have:
$z=x+\rmi y=e^v(\cos u-\rmi\sin u)$. Therefore, the cut in the $\tau$-plane along the semiaxis 
$\{\tau=\rmi v, v\in\R^+\}$
is mapped into the cut in the $z$-plane along the semiaxis $\{z=x, x\in[1,+\infty)\}$;
precisely, the left (resp., right) lip of the cut at $\{\tau=\rmi v,v\in\R^+\}$ is mapped into the upper 
(resp., lower) lip of the cut at $\{z=x,x\in[1,+\infty)\}$.
Accordingly, the jump function in the $z$-plane which corresponds
to the $\tau$-plane jump function $J(v)=\rmi[\cG_+(\rmi v)-\cG_-(\rmi v)]$ is
(by a small abuse of notation we continue to denote it by $J$): $J(x)\doteq-\rmi[G_+(x)-G_-(x)]$,
where $G_\pm(x)\doteq\lim_{y\to 0;y>0}G(x\pm\rmi y)$, ($x\in[1,+\infty)$). Statements (iii) through (vii)
then follow immediately.
In Table \ref{tab:1} the results in the $\tau$-plane, referring to the trigonometric series
$\cG(\tau)=\frac{1}{2\pi}\sum_{k=0}^\infty g_k e^{-\rmi k \tau}$,
and the results in the $z$-plane referring to the power series \eqref{1.33}, are briefly summarized.
\end{proof}

\begin{table}[tb]
\caption{Comparison between the results in the $\tau$-plane and the $z$-plane.
\label{tab:1}
}
{\scriptsize
\begin{center}
\leavevmode
\begin{tabular}{c|c}
\hline \\
{\bf\small $\boldsymbol{\tau}$-plane} & {\bf\small $\boldsymbol{z}$-plane} \\ \\
\hline \\
\hspace{-0.8cm}\begin{minipage}{6.5cm}
\begin{itemize}
\item[1)] $\cG(\tau)$ is holomorphic in $\cI_-$, continuous on the boundary $\Imag\tau=0$.
\item[2)] $\cG(\tau)$ admits a holomorphic extension to $\cI_+\setminus\dot{\Xi}_+^{(2\pi)}$.
\item[3)] $J(v)\doteq\rmi[\cG_+(\rmi v)-\cG_-(\rmi v)]$ is continuous; $J(0)=0$;
$J(v)=o(e^{-v/2})$ as $v\to+\infty$.
\item[4)] $J(v)=\frac{1}{2\pi}\int_{-\infty}^{+\infty}\tg(\ell+\rmi\nu)e^{(\ell+\rmi\nu)v}\,\rmd \nu$,\\
($\ell\geqslant-\frac{1}{2},v\in\R^+$).
\item[5)] $|J(v)|\leqslant\|\tg_\ell\|_1 e^{\ell v}$, $\|\tg_\ell\|_1<\infty$, 
($\ell\geqslant-\frac{1}{2}$, $v\in\R^+$).
\item[6)]$\tJ(\lambda)=\int_0^{+\infty}J(v)e^{-\lambda v}\,\rmd v = \tg(\lambda)$,\\
($\Real\lambda>-\frac{1}{2}$).
\item[7)] $\int_{-\infty}^{+\infty}|\tg(\ell+\rmi\nu)|^2\,\rmd\nu$ \\
$\null\quad=2\pi\int_0^{+\infty}|J(v)e^{-\ell v}|^2\,\rmd v$, \ ($\ell\geqslant-\frac{1}{2}$).
\end{itemize}
\end{minipage}
&
\hspace{-0.6cm}\begin{minipage}{6.5cm}
\begin{itemize}
\item[i)] $G(z)$ is holomorphic in $|z|<1$, continuous on the boundary $|z|=1$.
\item[ii)] $G(z)$ admits a holomorphic extension to $\SSS$.
\item[iii)] $J(x)\doteq-\rmi[G_+(x)-G_-(x)]$ is continuous; $J(1)=0$;
$J(x)=o(x^{-1/2})$ as $x\to+\infty$.
\item[iv)] $J(x)=\frac{1}{2\pi}\int_{-\infty}^{+\infty}\tg(\ell+\rmi\nu)x^{(\ell+\rmi\nu)}\,\rmd\nu$,\\
($\ell\geqslant-\frac{1}{2},x\in[1,+\infty)$).
\item[v)] $|J(x)|\leqslant\|\tg_\ell\|_1 x^{\ell}$,~~
$|\tg_\ell\|_1<\infty$,~ ($\ell\geqslant-\frac{1}{2},$ \\ $x\in[1,+\infty)$).
\item[vi)]$\hJ(\lambda)=\int_1^{+\infty}J(x)x^{-\lambda-1}\,\rmd x=\tg(\lambda)$, \\
($\Real\lambda>-\frac{1}{2}$).
\item[vii)] $\int_{-\infty}^{+\infty}|\tg(\ell+\rmi\nu)|^2\,\rmd\nu$ \\
$\null\quad=2\pi\int_1^{+\infty}|J(x)|^2 x^{-2\ell-1}\,\rmd x$,\ ($\ell\geqslant-\frac{1}{2}$).
\end{itemize}
\end{minipage}
\\ \\ \hline
\end{tabular}
\end{center}
}
\vspace{0.5cm}
\end{table}

\subsection{Connection between the Laplace transform of the jump function and the coefficients of the power series}
\label{subse:connection}

We assume that the coefficients $\{g_k\}_{k=0}^\infty$
$G(z)=\frac{1}{2\pi}\sum_{k=0}^\infty g_k z^k$ ($z=x+\rmi y$) satisfy the conditions of Theorem \ref{the:1}.
Again, by the transformation $z=e^{-\rmi\tau}$ $(\tau=u+\rmi v)$, $x+\rmi y=e^v(\cos u-\rmi\sin u)$,
we may consider the trigonometric series $\cG(\tau)=\frac{1}{2\pi}\sum_{k=0}^\infty g_k e^{-\rmi k\tau}$.
As it has been proved in Subsection \ref{subse:taylor}, the function $\cG(\tau)$ is holomorphic
in the cut-plane $\dot{\Pi}_+^{(2\pi)} \doteq \C\setminus\dot{\Xi}_+^{(2\pi)}$,
where $\dot{\Xi}_+^{(2\pi)} \doteq\{\tau\in\C\,:\,\tau=2k\pi+\rmi v;\,k\in\Z,v>0\}$;
moreover, we denote by $\Xi_{+,0}\doteq\{\tau\in\C\,:\,\tau=\rmi v;v>0\}$ the cut at $u=0$.

\vskip 0.3cm 

Consider now the integral
\beq
\label{N:4}
I_\gamma (\lambda) \doteq
\int_\gamma \cG(\tau)\,e^{\rmi\lambda\tau}\,\rmd\tau \qquad \left(\lambda\in\CCm\right),
\eeq
$\CCm\doteq\{\lambda\in\C:\Real\lambda>-\frac{1}{2}\}$, with the prescription for the contour $\gamma$
shown in Figure \ref{fig:2}, i.e., $\gamma$ encircles the cut $\Xi_{+,0}$ and belongs to the half-strip 
$-\pi\leqslant u \leqslant\pi$,
$v\in\R^+$. Note that the integral \eqref{N:4} is well-defined for $\lambda\in\CCm$. Indeed,
from formula \eqref{Q:2} we can derive formulae analogous to equality \eqref{T:13}.
In fact, we can write (see also \eqref{Q:2}):
\beq
\label{N:5}
\begin{split}
e^{-\ell v}\cG_+(u+\rmi v) &=\frac{1}{2\pi}\int_{-\infty}^{+\infty} H_\ell^u(\nu)\,e^{\rmi\nu v}\,\rmd\nu \\
& \hspace{2.5cm} \left(0\leqslant u\leqslant 2\pi,v\in\R^+,\ell\geqslant-\frac{1}{2},\ell\not\in\N\right),
\end{split}
\eeq
where $\cG_+(u+\rmi v)$ is the analytic continuation of $\cG_+(u)$ in the strip $0<u< 2\pi, v\in\R^+$, and
\beq
\label{N:6}
H_\ell^u(\nu)=-\frac{\tg(\ell+\rmi\nu)e^{-\rmi(\ell+\rmi\nu)(u-\pi)}}{2\sin\pi(\ell+\rmi\nu)},
\eeq
with
\beq
\begin{split}
|\sin\pi\ell|e^{-\ell v}\left|\cG_+(u+\rmi v)\right|
& \leqslant\frac{1}{2\pi}\int_{-\infty}^{+\infty}\!\left|\tg\left(-\frac{1}{2}+\rmi\nu\right)\right|\,\rmd\nu <\infty \\
&\qquad\left(0\leqslant u\leqslant 2\pi,v\in\R^+,\ell\geqslant-\frac{1}{2},\ell\not\in\N\right).
\end{split}
\label{N:7}
\eeq
\begin{figure}[tb]
\begin{center}
\leavevmode
\includegraphics[width=11cm]{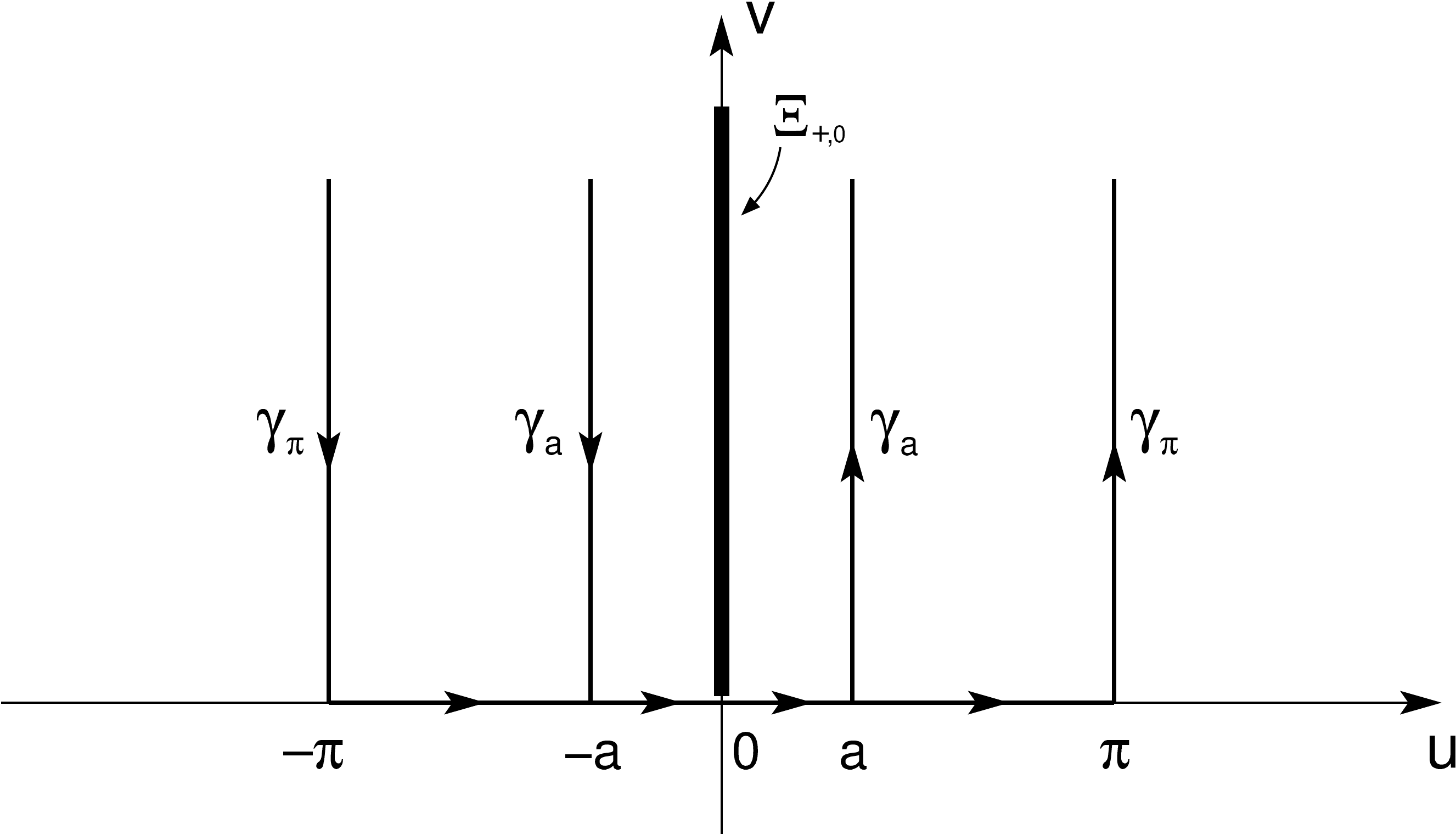}
\caption{\label{fig:2}
\small Integration paths.}
\end{center}
\end{figure}
Next, applying the Riemann-Lebesgue theorem, from formulae \eqref{N:5}, \eqref{N:6}, and \eqref{N:7},
it follows that $\lim_{v\to+\infty}\cG(u+\rmi v)e^{v/2}=0$ for $0<u<2\pi$.
Analogously, it can be proved that $\lim_{v\to+\infty}\cG(u+\rmi v)e^{v/2}=0$ for $-2\pi<u<0$. Indeed,
as we proved in the previous subsection, $\cG(\tau)$ is holomorphic in the half-strips
$-2\pi<u<0$ and $0<u<2\pi$, $v\in\R^+$, and admits continuous boundary values (from both sides) on the
cut $\Xi_{+,0}$. Furthermore, inequality \eqref{Q:4} and the Riemann-Lebesgue theorem guarantee
that $\lim_{v\to+\infty}J(v)e^{v/2}=0$, where $J(v)$ denotes the jump function across the cut $\Xi_{+,0}$.

According to the standard Cauchy distortion argument, $I_\gamma(\lambda)$ is independent of the path $\gamma$.
By choosing for $\gamma$ the path $\gamma_a$ (see Figure \ref{fig:2}), with support
$(-a+\rmi\infty,-a]\cup[-a,a]\cup[a,a+\rmi\infty)$ $(0<a<\pi)$, positively oriented,
we obtain as $a\to 0$:
\beq
\lim_{a\to 0}I_{\gamma_a}(\lambda)=\int_0^{+\infty} e^{-\lambda v}\,J(v)\rmd v =
(\cL J)(\lambda)
\qquad \left(\lambda\in\CCm\right),
\nonumber
\eeq
where $\cL$ denotes the Laplace transform operator, and $J(v)$ is the jump function across the
cut $\Xi_{+,0}$.
Next, choosing for $\gamma$ the path $\gamma_\pi$ (see Figure \ref{fig:2}), and exploiting the
$2\pi$-periodicity of the integrand $e^{\rmi\lambda\tau}\cG(\tau)$ when $\lambda=k$ is an integer, we can write:
\beq
\left. I_{\gamma_\pi}(\lambda)\right|_{\lambda=k}
= \int_{\gamma_\pi} e^{\rmi k\tau}\cG(\tau)\,\rmd\tau
=\int_{-\pi}^{+\pi}e^{\rmi ku}\cG(u)\,\rmd u = g_k
\qquad (k=0,1,2,\ldots),
\label{N:9}
\nonumber
\eeq
which are the Fourier coefficients of the expansion $\cG(u)=\frac{1}{2\pi}\sum_{k=0}^\infty g_k\,e^{-\rmi ku}$.

Applying once again the Cauchy theorem, we obtain the following equalities, analogous to those derived
in Section \ref{subse:derKMS} for the trigonometric series (see statement (vi) of Theorem \ref{the:8}):
\beq
\label{N:10}
\left.(\cL J)(\lambda)\right|_{\lambda=k} = g_k
\qquad (k=0,1,2,\ldots),
\nonumber
\eeq
which, finally, lead us to state the following proposition.

\begin{proposition}
\label{pro:5}
Consider the power series $\frac{1}{2\pi}\sum_{k=0}^\infty g_k z^k$, and suppose that all the
conditions required by Theorem \ref{the:1} are satisfied. Then the restriction to non-negative 
integers of the Laplace transform of the jump function across the cut coincides with the 
coefficients $\{g_k\}_{k=0}^\infty$.
\end{proposition}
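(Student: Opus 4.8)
The proof amounts to concatenating the contour--integral computation displayed in the paragraphs preceding the statement, so the plan is to lay out that chain of equalities explicitly and evaluate it at $\lambda=k\in\{0,1,2,\ldots\}$. The central object is $I_\gamma(\lambda)=\int_\gamma\cG(\tau)\,e^{\rmi\lambda\tau}\,\rmd\tau$ for $\lambda\in\CCm$, where $\cG(\tau)=\frac{1}{2\pi}\sum_{k=0}^\infty g_k e^{-\rmi k\tau}$ is the trigonometric transcription of the power series under $z=e^{-\rmi\tau}$. First I would record that $I_\gamma(\lambda)$ converges and is independent of the admissible path $\gamma$ encircling the cut $\Xi_{+,0}$ inside the half-strip $-\pi\leqslant u\leqslant\pi$, $v\in\R^+$: convergence as $v\to+\infty$ comes from the bounds \eqref{N:5}--\eqref{N:7} and the Riemann--Lebesgue theorem, which force $\cG(u+\rmi v)=o(e^{-v/2})$ uniformly on bounded $u$--intervals, while $|e^{\rmi\lambda\tau}|$ decays like $e^{-\Real\lambda\, v}$ up to a bounded factor; since $\Real\lambda>-\frac12$ the integrand is then exponentially small. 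Path independence is the standard Cauchy argument, using holomorphy of $\cG$ in the cut half-strip (Theorem \ref{the:1}(ii)) and the vanishing of the integrand on the auxiliary segments pushed to $v=+\infty$.

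Next I would evaluate $I_\gamma(\lambda)$ along two distinguished contours. Collapsing $\gamma$ onto $\gamma_a$ with support $(-a+\rmi\infty,-a]\cup[-a,a]\cup[a,a+\rmi\infty)$ and letting $a\to 0$, the two vertical legs merge into the discontinuity of $\cG$ across $\Xi_{+,0}$, while the bottom segment $[-a,a]$ contributes nothing in the limit by continuity of $\cG$ up to $u=0$ (Theorem \ref{the:1}(i)--(ii)), which yields
\beq
\lim_{a\to 0}I_{\gamma_a}(\lambda)=\int_0^{+\infty}e^{-\lambda v}\,J(v)\,\rmd v=(\cL J)(\lambda)\qquad(\lambda\in\CCm),
\nonumber
\eeq
$J$ being the jump function of Theorem \ref{the:1}. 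Choosing instead $\gamma_\pi$ with vertical legs at $u=\pm\pi$ and specialising to $\lambda=k\in\N\cup\{0\}$, the integrand $e^{\rmi k\tau}\cG(\tau)$ is $2\pi$--periodic in $\tau$ (the series $\cG$ is $2\pi$--periodic and $e^{2\pi\rmi k}=1$), so the two vertical contributions cancel and only
\beq
I_{\gamma_\pi}(\lambda)\big|_{\lambda=k}=\int_{-\pi}^{+\pi}e^{\rmi ku}\cG(u)\,\rmd u=g_k\qquad(k=0,1,2,\ldots)
\nonumber
\eeq
survives, by orthogonality of the exponentials in $\cG(u)=\frac{1}{2\pi}\sum_{k=0}^\infty g_k e^{-\rmi ku}$. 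Path independence then forces $(\cL J)(k)=g_k$ for every non-negative integer $k$, which is the assertion; this is the power--series counterpart of statement (vi) of Theorem \ref{the:8}.

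The step demanding the most care is the limit $a\to 0$ in the first evaluation: one has to control the contour integral uniformly near the branch point $\tau=0$, where $\cG$ need not be bounded, while keeping the exponential decay at $v\to+\infty$ under control, so that dominated convergence applies and the bottom segment drops out. The cancellation of the vertical legs of $\gamma_\pi$ is the other point needing a word of justification, since it relies on the $2\pi$--periodicity holding for the \emph{analytically continued} $\cG$ on both lips of the cut rather than as a mere formal identity between series --- this is exactly what the holomorphic extension in Theorem \ref{the:1}(ii) provides. With these two points secured, the proposition follows immediately by concatenating the displayed identities.
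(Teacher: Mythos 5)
Your proposal reproduces the paper's own argument: the same contour integral $I_\gamma(\lambda)=\int_\gamma\cG(\tau)e^{\rmi\lambda\tau}\,\rmd\tau$, the same convergence control via \eqref{N:5}--\eqref{N:7} and the Riemann--Lebesgue theorem, and the same two evaluations (collapsing onto $\gamma_a$ to get $(\cL J)(\lambda)$, and using $\gamma_\pi$ with $2\pi$-periodicity at $\lambda=k$ to get $g_k$), concluded by Cauchy path-independence. It is correct and essentially identical to the paper's proof, with the added merit of flagging explicitly the two points (the $a\to 0$ limit near the branch point and the periodicity of the continued function) that the paper leaves implicit.
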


\vspace*{0.2cm}

\subsection{Connection between the jump function across the cut and the density 
of a probability distribution}
\label{subse:connection-jump}

We begin by proving the following lemma.

\begin{lemma}
\label{lem:N1}
Consider the power series $\frac{1}{2\pi}\sum_{k=0}^\infty g_k z^k$ ($z\in\C$, $|z|<1$), and suppose
that the coefficients $\{g_k\}_{k=0}^\infty$ satisfy condition \eqref{1.15} of Proposition \ref{pro:3} with
$p=2+\varepsilon$ ($\varepsilon>0$ arbitrarily small). Moreover, in equality \eqref{1.16}
(which follows from \eqref{1.15}) we restrict the class of functions $\varphi(t)$ to non-negative functions:
$\varphi(t)\geqslant 0$. Then there exists a unique Carlsonian interpolation $\tg(\ell)$
($\ell\geqslant 0$) of the coefficients $g_k$ ($k=0,1,2,\ldots$) such that:
\begin{itemize}\setlength\itemsep{0.5em}
\item[(i)] $\sds (-1)^i \, \frac{\rmd^{i}\tg(\ell)}{\rmd\ell^{i}}\geqslant 0 \quad (i=0,1,2,\ldots)$, i.e., 
the function $\tg(\ell)$ ($\ell\geqslant 0$) is completely monotonic.
\item[(ii)] $\tg(\ell)$ can be represented by the following Laplace-Stieltjes transform:
\beq
\label{Q1}
\tg(\ell)=\int_0^{+\infty} e^{-\ell v}\,\rmd\nu(v) \qquad (\ell\geqslant 0),
\eeq
where $\nu(v)$ is a bounded measure on $\left([0,+\infty),\cB_{[0,+\infty)}\right)$,
$\cB_{[0,+\infty)}$ being the topological Borel field of $[0,+\infty)$. Moreover, $\nu(v)$ is an integral,
and $\rmd\nu(v)=e^{-v}\varphi(e^{-v})\,\rmd v$.
\end{itemize}
\end{lemma}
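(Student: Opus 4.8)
The plan is to start from the representation already available from Proposition~\ref{pro:3}: since the coefficients $\{g_k\}_{k=0}^\infty$ satisfy condition~\eqref{1.15} with $p=2+\varepsilon$, we have $g_k=\int_0^1 t^k\varphi(t)\,\rmd t$ with $\varphi\in L^{2+\varepsilon}(0,1)$, and now, by hypothesis, $\varphi(t)\geqslant 0$. Setting $t=e^{-v}$ as in Lemma~\ref{lem:1} gives $g_k=\int_0^{+\infty}e^{-kv}\bigl(e^{-v}\varphi(e^{-v})\bigr)\,\rmd v$. The natural candidate for the Carlsonian interpolation is then $\tg(\ell)\doteq\int_0^{+\infty}e^{-\ell v}\bigl(e^{-v}\varphi(e^{-v})\bigr)\,\rmd v$ for $\ell\geqslant 0$; one first checks that $e^{-v}\varphi(e^{-v})\in L^1(0,+\infty)$, which follows exactly as the $L^1$ estimate in the proof of Lemma~\ref{lem:1} (indeed $\int_0^{+\infty}e^{-v}|\varphi(e^{-v})|\,\rmd v=\int_0^1|\varphi(t)|\,\rmd t<\infty$ since $\varphi\in L^{2+\varepsilon}\subset L^1$ on the bounded interval). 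This guarantees the integral defining $\tg(\ell)$ converges for all $\ell\geqslant 0$, and in fact for $\Real\lambda\geqslant 0$; uniqueness of the Carlsonian interpolation is then inherited from Lemma~\ref{lem:1} (or Lemma~\ref{lem:2}), via the Carlson theorem applied to the Hardy-space membership. So I would first dispatch the ``unique Carlsonian interpolation'' clause by invoking these earlier lemmas almost verbatim.

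For statement~(i), the complete monotonicity, I would differentiate under the integral sign: formally $\frac{\rmd^i}{\rmd\ell^i}\tg(\ell)=\int_0^{+\infty}(-v)^i e^{-\ell v}\bigl(e^{-v}\varphi(e^{-v})\bigr)\,\rmd v$, hence $(-1)^i\frac{\rmd^i}{\rmd\ell^i}\tg(\ell)=\int_0^{+\infty}v^i e^{-\ell v}\bigl(e^{-v}\varphi(e^{-v})\bigr)\,\rmd v\geqslant 0$ because the integrand is non-negative ($\varphi\geqslant 0$). The only thing requiring care is the justification of differentiation under the integral: for any fixed $\ell>\delta>0$ the factor $v^i e^{-\ell v}$ is dominated uniformly by an integrable function times $e^{-v}\varphi(e^{-v})$ (e.g.\ $v^i e^{-\delta v}$ is bounded), so dominated convergence applies and the derivatives exist and have the claimed sign for $\ell>0$; continuity up to $\ell=0$ of $\tg$ itself comes from the $L^1$ bound. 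This is routine and I would state it compactly.

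For statement~(ii), the Laplace--Stieltjes representation, the cleanest route is simply to \emph{read off} the measure from the representation just written: define $\nu(v)\doteq\int_0^v e^{-s}\varphi(e^{-s})\,\rmd s$, which is a bounded, non-decreasing, absolutely continuous function on $[0,+\infty)$ with total mass $\nu(+\infty)=\int_0^1\varphi(t)\,\rmd t=g_0<\infty$; then $\rmd\nu(v)=e^{-v}\varphi(e^{-v})\,\rmd v$ and $\tg(\ell)=\int_0^{+\infty}e^{-\ell v}\,\rmd\nu(v)$ by construction. Thus $\nu$ is the desired bounded Borel measure on $[0,+\infty)$, and it is an integral (of the $L^1$ function $e^{-v}\varphi(e^{-v})$), which is precisely the assertion. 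Alternatively, one could cite the Bernstein--Widder theorem: a completely monotonic function on $[0,+\infty)$ is the Laplace--Stieltjes transform of a unique bounded non-decreasing function, and then identify the measure by uniqueness with the explicit one above — but since we already possess the explicit $\nu$, invoking Bernstein--Widder is unnecessary except perhaps to confirm uniqueness of $\nu$.

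The main obstacle, such as it is, is not conceptual but bookkeeping: making sure that the $L^1$-integrability of $e^{-v}\varphi(e^{-v})$ — which is what legitimizes both the convergence of $\tg(\ell)$ at $\ell=0$ and the differentiation under the integral — is properly harvested from the Hölder estimate already carried out in Lemma~\ref{lem:1}, and that one does not accidentally need the stronger weighted bound $\int_0^1|\varphi(t)|/\sqrt t\,\rmd t<\infty$ used there (here the plain $\int_0^1|\varphi|<\infty$ suffices, since the weight is $e^{-v}$, i.e.\ $t$, not $t^{-1/2}$). Once that is in hand, everything else is a one-line application of non-negativity of the integrand plus the fundamental theorem of calculus.
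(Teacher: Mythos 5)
Your proposal is correct, and for statement (ii) it takes a genuinely more direct route than the paper. The paper first establishes (i) exactly as you do (rewriting the interpolation of Lemma \ref{lem:1} as $\tg(\ell)=\int_0^{+\infty}e^{-\ell v}e^{-v}\varphi(e^{-v})\,\rmd v$ and differentiating under the integral, with the sign coming from $\varphi\geqslant 0$), but then obtains the Laplace--Stieltjes representation \eqref{Q1} by \emph{invoking} a Bernstein--Widder-type theorem (Watanabe, Theorem 4.2): complete monotonicity yields the existence of a bounded measure $\nu$, and only afterwards is $\rmd\nu(v)=e^{-v}\varphi(e^{-v})\,\rmd v$ identified by comparing the two representations --- a step that tacitly uses uniqueness of the representing measure. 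You instead construct $\nu(v)=\int_0^v e^{-s}\varphi(e^{-s})\,\rmd s$ explicitly, check boundedness from $\nu(+\infty)=\int_0^1\varphi=g_0<\infty$, and read off \eqref{Q1} by construction, which makes the appeal to Watanabe's theorem superfluous (as you correctly note, it would only be needed to confirm uniqueness of $\nu$). Your route is more elementary and self-contained; the paper's route emphasizes the conceptual bridge between complete monotonicity and Laplace--Stieltjes transforms of measures, which fits the probabilistic theme of the section. You are also somewhat more careful than the paper on two minor points: the dominated-convergence justification for differentiating under the integral (the paper says ``formally derive'' and asserts finiteness of the derivatives without detail), and the observation that only $\int_0^1|\varphi(t)|\,\rmd t<\infty$ is needed here rather than the stronger weighted bound $\int_0^1|\varphi(t)|/\sqrt{t}\,\rmd t<\infty$ established in Lemma \ref{lem:1}.
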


\begin{proof}
Since the set $\{g_k\}_{k=0}^\infty$ satisfies condition \eqref{1.15} of Proposition \ref{pro:3} 
with $p=2+\varepsilon$ (and, accordingly, in \eqref{1.16} the function $\varphi(t)$ belongs 
to $L^{2+\varepsilon}(0,1)$), then the coefficients $g_k$ ($k=0,1,2,\ldots$) admit a unique Carlsonian 
interpolation $\tg(\lambda)$, holomorphic in the half-plane
$\Real\lambda>-\frac{1}{2}$ and represented by formula \eqref{1.17}, where the function 
$e^{-v/2}\varphi(e^{-v})\in L^1(0+\infty)\cap L^2(0,+\infty)$ (see Lemma \ref{lem:1}). Formula \eqref{1.17} 
can be rewritten as follows
for real non-negative values of $\lambda$, i.e., on the semiaxis $\ell\geqslant 0$:
\beq
\label{Q2}
\tg(\ell) = \int_0^{+\infty} e^{-\ell v} e^{-v} \varphi(e^{-v})\,\rmd v \qquad (\ell\geqslant 0),
\eeq
from which we formally derive:
\beq
\label{Q3}
(-1)^i \, \frac{\rmd^{i}\tg(\ell)}{\rmd\ell^{i}}
= \int_0^{+\infty} e^{-\ell v} \, v^i \,e^{-v} \varphi(e^{-v})\,\rmd v \qquad (i=0,1,2,\ldots).
\nonumber
\eeq
It can be easily proved that $\left|\frac{\rmd^{i}\tg(\ell)}{\rmd\ell^{i}}\right|<\infty$ ($i=0,1,2,\ldots$) 
for the class of functions which satisfy the requirements imposed by \eqref{1.15} with $p=2+\varepsilon$. 
Now, if we assume that $\varphi(t)$ is non-negative, statement (i) is proved.
In view of statement (i) representation \eqref{Q1} follows (see \cite[Theorem 4.2]{Watanabe}). 
Finally, since both representations \eqref{Q1} and \eqref{Q2} hold, 
it follows that $\rmd\nu(v)=e^{-v}\varphi(e^{-v})\,\rmd v$, and
$\nu(v)$ is an integral (i.e., $\nu(v)=\int_0^v e^{-u}\varphi(e^{-u})\,\rmd u$).
The lemma is then proved.
\end{proof}

Then we can prove the following theorem.

\begin{theorem}
\label{the:N2}
Consider the power series $\frac{1}{2\pi}\sum_{k=0}^\infty g_k z^k$ ($z\in\C$, $|z|<1$), and suppose that all the
conditions required in Lemma \ref{lem:N1} are satisfied. Moreover, suppose that $\tg(0)=1$, then $\tg(\ell)$
is the Laplace transform of a probability distribution $F$, and the function $e^{-v}\varphi(e^{-v})$ is the 
density of a probability distribution.
\end{theorem}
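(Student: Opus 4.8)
The plan is to read off the conclusion from Lemma \ref{lem:N1} together with the single extra normalization $\tg(0)=1$; the analytic content has already been done. By Lemma \ref{lem:N1} the unique Carlsonian interpolation $\tg(\ell)$ of the coefficients $\{g_k\}_{k=0}^\infty$ is completely monotonic on $[0,+\infty)$ and admits the Laplace--Stieltjes representation $\tg(\ell)=\int_0^{+\infty}e^{-\ell v}\,\rmd\nu(v)$, $\ell\geqslant 0$, where $\nu$ is a bounded measure on $\left([0,+\infty),\cB_{[0,+\infty)}\right)$ which is moreover the integral with $\rmd\nu(v)=e^{-v}\varphi(e^{-v})\,\rmd v$. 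Since in Lemma \ref{lem:N1} the density $\varphi$ is assumed non-negative, $\nu$ is a non-negative measure, so the only thing left to verify is that its total mass equals $1$.

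First I would compute the total mass of $\nu$. Setting $\ell=0$ in \eqref{Q1} — which is legitimate since that representation holds on the closed half-line $\ell\geqslant 0$ — gives $\nu\bigl([0,+\infty)\bigr)=\tg(0)$, so the hypothesis $\tg(0)=1$ says precisely that $\nu$ is a probability measure. Equivalently, performing the change of variable $t=e^{-v}$ in \eqref{Q2} one gets $\tg(0)=\int_0^1\varphi(t)\,\rmd t=g_0$, so the assumption $\tg(0)=1$ is nothing but the normalization $g_0=\mu_0=1$; either way the mass is $1$. Consequently $\tg(\ell)=\int_0^{+\infty}e^{-\ell v}\,\rmd F(v)$ with $F(v)\doteq\nu([0,v])$ the distribution function of a probability distribution concentrated on $[0,+\infty)$, i.e. $\tg$ is the Laplace transform of $F$; and since $\rmd\nu(v)=e^{-v}\varphi(e^{-v})\,\rmd v$ with $\varphi\geqslant 0$ and $\int_0^{+\infty}e^{-v}\varphi(e^{-v})\,\rmd v=\tg(0)=1$, the non-negative integrable function $v\mapsto e^{-v}\varphi(e^{-v})$ is the associated probability density.

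I would then close by recording the connection with the jump function via Proposition \ref{pro:5} and Theorem \ref{the:1}: since $(\cL J)(\lambda)=\tJ(\lambda)=\tg(\lambda)$ for $\Real\lambda>-\frac12$ and $\tg$ is the Laplace transform of $e^{-v}\varphi(e^{-v})$, uniqueness of the Laplace transform yields $J(v)=e^{-v}\varphi(e^{-v})$ a.e., so the jump function across the cut is itself exactly the probability density in question.

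As for the main obstacle: there is essentially none beyond bookkeeping. All the hard analytic facts (complete monotonicity, the Hausdorff--Widder-type representation, and the absolute continuity $\rmd\nu(v)=e^{-v}\varphi(e^{-v})\,\rmd v$) are supplied by Lemma \ref{lem:N1}; the present statement is merely the remark that non-negativity of $\varphi$ makes $\nu$ a non-negative measure while $\tg(0)=1$ normalizes its mass. The only points demanding a line of care are the legitimacy of evaluating the Laplace--Stieltjes integral at $\ell=0$ and the elementary substitution $t=e^{-v}$ identifying $\tg(0)$ with $g_0$.
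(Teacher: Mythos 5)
Your argument is correct and reaches the same conclusion as the paper, but by a slightly more elementary route. The paper's proof invokes Feller's characterization (Chapter XIII, Section 4, Theorem 1 of \cite{Feller}): complete monotonicity of $\tg(\ell)$ on $[0,+\infty)$ together with $\tg(0)=1$ implies that $\tg$ is the Laplace transform of a probability distribution, after which the identification of the density $e^{-v}\varphi(e^{-v})$ follows from comparing \eqref{Q1} and \eqref{Q2}. You bypass Feller entirely: since Lemma \ref{lem:N1} already gives $\rmd\nu(v)=e^{-v}\varphi(e^{-v})\,\rmd v$ with $\varphi\geqslant 0$, the representing measure is non-negative, and evaluating at $\ell=0$ (legitimate, since $e^{-v}\varphi(e^{-v})\in L^1(0,+\infty)$) shows its total mass is $\tg(0)=g_0=1$. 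This is a clean, self-contained verification that buys independence from the external reference, at the cost of not exhibiting the general principle (completely monotone plus normalization equals Laplace transform of a probability law) that the paper wants on record. One caution about your closing paragraph: the identification $J(v)=e^{-v}\varphi(e^{-v})$ via Proposition \ref{pro:5} requires the hypotheses of Theorem \ref{the:1} on the sequence $f_k=(k+1)g_k$, which are \emph{not} assumed in Theorem \ref{the:N2}; that link is only made later, in Theorem \ref{the:N3}, where the extra condition is explicitly added. As an aside it is harmless, but it should not be presented as a consequence of the present hypotheses.
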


\begin{proof}
Since $\tg(\ell)$ is completely monotonic (as proved in Lemma \ref{lem:N1}) and $\tg(0)=1$, it follows 
that $\tg(\ell)$ is the Laplace transform of a probability distribution $F$ 
(see \cite[Chapter XIII, Section 4, Theorem 1]{Feller}). Therefore, in view of the
fact that both representations \eqref{Q1} and \eqref{Q2} hold, it follows that $e^{-v}\varphi(e^{-v})$ 
is the density of a  probability distribution, and the theorem is proved.
\end{proof}

\begin{remarks}
(i) The result of Theorem \ref{the:N2} can be obtained more rapidly by assuming that the set 
$\{g_k\}_{k=0}^\infty$ of the coefficients of the power series $\frac{1}{2\pi}\sum_{k=0}g_k z^k$
($z\in\C$, $|z|<1$) satisfies the conditions of Proposition \ref{pro:1} (along with $g_0=1$) and also the conditions
of Proposition \ref{pro:3} with $p=2+\varepsilon$. In such a case, by Proposition \ref{pro:4} we can 
say that $\varphi(t)$ in \eqref{1.16} is the density function of a probability distribution concentrated on the 
closed interval $[0,1]$. Then, through the standard transformation $t=e^{-v}$, the result of 
Theorem \ref{the:N2} is rapidly obtained. 

(ii) Let us, however, point out that the situation is radically different for distributions that are concentrated 
on some finite interval with respect to those which are not. In fact, in general, a distribution is not uniquely 
determined by its moments (see \cite[Chapter VII, p. 224]{Feller}). In our case, however, the conditions required by 
Proposition \ref{pro:3} with $p=2+\varepsilon$ allow us to invoke Carlson's theorem, and, accordingly, to conclude 
that the Carlsonian interpolation of the coefficients is unique, which, in turn, guarantees the uniqueness of the 
distribution in the present case. 
\end{remarks}

\begin{theorem}
\label{the:N3}
Consider the power series $\frac{1}{2\pi}\sum_{k=0}^\infty g_k z^k$ ($z=x+\rmi y\in\C$, $x,y\in\R$, $|z|<1$),
and suppose that all the conditions required by Theorem \ref{the:N2} are satisfied.
Moreover, let us assume that the set of numbers $\{f_k\}_{k=0}^\infty$, where $f_k=(k+1)g_k$,
satisfies condition \eqref{1.15} of Proposition \ref{pro:3} with $p=2+\varepsilon$.
Then, this series admits a holomorphic extension to the cut-domain $\SSS=\{z\in\C\setminus[1,+\infty)\}$.
Moreover, the function $\oJ(x)\doteq\frac{J(x)}{x}$ ($x\in[1,+\infty)$), associated with the jump
function $J(x)$ across the cut $x\in[1,+\infty)$, is the density of a probability distribution. 
\end{theorem}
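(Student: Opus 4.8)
The plan is to combine the already-established holomorphic extension result (Theorem \ref{the:1}) with the probabilistic interpretation of the Carlsonian interpolation (Theorem \ref{the:N2}) by carefully translating between the $\tau$-plane Laplace-transform picture and the $z$-plane Mellin-transform picture. First I would invoke Theorem \ref{the:1}: since $\{f_k\}_{k=0}^\infty$ with $f_k=(k+1)g_k$ satisfies \eqref{1.15} with $p=2+\varepsilon$, the series $\frac{1}{2\pi}\sum_{k=0}^\infty g_k z^k$ converges in $|z|<1$ to a function $G(z)$ which admits a holomorphic extension to $\SSS$, and the jump function $J(x)$ across $[1,+\infty)$ is continuous, satisfies $J(1)=0$, $J(x)=o(x^{-1/2})$, and has Mellin transform $\hJ(\lambda)=\tg(\lambda)$ for $\Real\lambda>-\frac12$ (statement (vi) of Theorem \ref{the:1}). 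This gives the holomorphic-extension half of the claim immediately.

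Next I would identify the density. Under the conditions of Theorem \ref{the:N2} (Hausdorff positivity of $\{g_k\}$ via Proposition \ref{pro:1}, the $L^{2+\varepsilon}$ condition \eqref{1.15}, and the normalization $\tg(0)=1$), Theorem \ref{the:N2} tells us that $e^{-v}\varphi(e^{-v})$ is the density of a probability distribution on $[0,+\infty)$, i.e. $\int_0^{+\infty}e^{-v}\varphi(e^{-v})\,\rmd v=1$ and $e^{-v}\varphi(e^{-v})\geqslant 0$. The task is then to show that this density, re-expressed in the variable $x$, is precisely $\oJ(x)=J(x)/x$. To do this I would use the change of variables $x=e^{v}$ (equivalently $t=e^{-v}=1/x$) in the inverse-Mellin representation of $J(x)$ from statement (v) of Theorem \ref{the:1}, namely $J(x)=\frac{1}{2\pi}\int_{-\infty}^{+\infty}\tg(\ell+\rmi\nu)\,x^{\ell+\rmi\nu}\,\rmd\nu$; comparing with the $\tau$-plane formula $J(v)=\frac{1}{2\pi}\int_{-\infty}^{+\infty}\tg(\ell+\rmi\nu)e^{(\ell+\rmi\nu)v}\,\rmd\nu$ from \eqref{Q:4} (valid under the mapping $z=e^{-\rmi\tau}$, $x=e^{v}$ on the cut), one sees that $J$ as a function of $x$ equals $J$ as a function of $v$ under $x=e^v$. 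On the other hand, inverting the Laplace-Stieltjes representation \eqref{Q1}–\eqref{Q2}, Mellin inversion gives that the inverse Mellin transform of $\tg(\lambda)$ evaluated at $x$ equals $x\cdot e^{-v}\varphi(e^{-v})\big|_{v=\ln x}$, because $\hJ(\lambda)=\tg(\lambda)=\int_0^{+\infty}e^{-\lambda v}e^{-v}\varphi(e^{-v})\,\rmd v$ means $e^{-v}\varphi(e^{-v})$ is the inverse Laplace transform of $\tg$, and the extra Jacobian factor $\rmd v=\rmd x/x$ in passing from the Laplace inversion to the Mellin inversion of $\int_1^\infty J(x)x^{-\lambda-1}\rmd x$ produces exactly the relation $J(x)=x\cdot e^{-v}\varphi(e^{-v})|_{v=\ln x}$, i.e. $\oJ(x)=J(x)/x=e^{-v}\varphi(e^{-v})|_{v=\ln x}$.

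The bookkeeping I would emphasize is matching the two Laplace/Mellin inversions so that the factor of $x$ comes out correctly: on the cut $\Xi_{+,0}$ the map $z=e^{-\rmi\tau}$ with $\tau=\rmi v$ gives $z=e^{v}=x$, and the $\tau$-plane jump $J(v)=\int_0^{+\infty}e^{-\lambda v}J(v)\rmd v\big|^{-1}$ corresponds under $\cL J(\lambda)=\tg(\lambda)$ (statement (vi)/\eqref{Q:7}) to $J(v)=e^{-v}\varphi(e^{-v})\cdot(\text{something})$; but here one must be slightly careful, because \eqref{Q:4} gives $\cL J = \tg$ while \eqref{Q2} gives $\tg(\ell)=\int_0^\infty e^{-\ell v}e^{-v}\varphi(e^{-v})\,\rmd v=\cL\{e^{-v}\varphi(e^{-v})\}(\ell)$, so by uniqueness of the Laplace transform $J(v)=e^{-v}\varphi(e^{-v})$ as functions of $v$ — then translating to $x=e^v$ introduces the Jacobian giving $\oJ(x)=J(x)/x$. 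Once $\oJ(x)=e^{-v}\varphi(e^{-v})|_{v=\ln x}$ is established, non-negativity follows from $\varphi\geqslant 0$, and the total-mass-one property follows from $\int_1^{+\infty}\oJ(x)\,\rmd x=\int_1^{+\infty}\frac{J(x)}{x}\rmd x=\int_0^{+\infty}e^{-v}\varphi(e^{-v})\,\rmd v=\tg(0)=1$ by the change of variables and Theorem \ref{the:N2}.

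The main obstacle I anticipate is exactly this Jacobian/normalization chase: making sure the factor $1/x$ in $\oJ(x)=J(x)/x$ is the correct one that converts the Laplace-transform density $e^{-v}\varphi(e^{-v})$ (natural in the $\tau$/$v$ variable) into a genuine probability density in the $x$ variable, and that no spurious factor of $2\pi$ or sign survives from the definitions $G(z)=\frac{1}{2\pi}\sum g_k z^k$ and $J(x)=-\rmi[G_+(x)-G_-(x)]$. All the analytic input — continuity and decay of $J$, existence and uniqueness of the Carlsonian interpolation, the representations of $\tg$ — is already supplied by Theorem \ref{the:1}, Lemma \ref{lem:N1}, and Theorem \ref{the:N2}, so the proof itself should be short once the variable changes are aligned correctly; the only genuinely new observation is the identification $\oJ(x)=e^{-v}\varphi(e^{-v})|_{v=\ln x}$ and the consequent transfer of the probabilistic interpretation from Theorem \ref{the:N2} to the jump function across the cut.
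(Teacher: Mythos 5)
Your proposal is correct and follows essentially the same route as the paper: combine the holomorphic extension and the identity $\cL J=\tg$ (the paper cites Proposition \ref{pro:5}, you cite Theorem \ref{the:1}(vi) and \eqref{Q:4}--\eqref{Q:7}, which is the same content) with Theorem \ref{the:N2}, identify $J(v)=e^{-v}\varphi(e^{-v})$ by uniqueness of the Laplace transform, and convert to the $z$-plane via $x=e^{v}$. Your explicit Jacobian bookkeeping ($\rmd v=\rmd x/x$, whence $\oJ(x)=J(x)/x$ has total mass $\tg(0)=1$) merely spells out what the paper compresses into ``passing from the Laplace transform to the Mellin transform.''
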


\begin{proof}
Since the numbers $\{f_k\}_{k=0}^\infty$ ($f_k=(k+1)g_k$) satisfy condition \eqref{1.15} of 
Proposition \ref{pro:3} with $p=2+\varepsilon$, it follows from
Proposition \ref{pro:5} that the coefficients $g_k$ are the restriction 
to non-negative integers of the Laplace transform  of the jump function across the cut. 
On the other hand, by Theorem \ref{the:N2}, the unique Carlsonian interpolation $\tg(\ell)$ of the set 
$\{g_k\}_{k=0}^\infty$ is the Laplace transform of a probability distribution. It follows that the jump 
function across the cut is the density of a probability distribution.
Passing from the geometry of the $\tau$-plane to that of the $z$-plane and, accordingly,
from the Laplace transform to the Mellin transform, it follows that
the function $\oJ(x)\doteq\frac{J(x)}{x}$ ($x\in[1,+\infty)$), 
where $J(x)$ denotes the jump function across the cut,
is the density of a probability distribution (see the example below).
\end{proof}

\begin{example}
Consider the sequence $\{g_k\}_{k=0}^\infty$, with $g_k=6[(k+3)(k+2)]^{-1}$. We have
\beq
\tg(\ell)=6\int_0^{+\infty} e^{-\ell v}\,e^{-v}\,(e^{-v}-e^{-2v})\,\rmd v;
\quad \tg(0)=1; \quad \tg(\ell)|_{\ell=k}=g_k.
\label{2.34}
\nonumber
\eeq
Moreover, $\varphi(t)=6(t-t^2)\geqslant 0$ for $t\in[0,1]$. 
For what regards the sequence $\{f_k\}_{k=0}^\infty$, with $f_k=6(k+1)[(k+2)(k+3)]^{-1}=(k+1)\,g_k$, we have
\beq
\tf(\ell)=6\int_0^{+\infty} e^{-\ell v}\,e^{-v}\,(2e^{-2v}-e^{-v})\,\rmd v \quad
\mathrm{and} \quad \tf(\ell)|_{\ell=k}=f_k.
\label{2.34bis}
\nonumber
\eeq
Therefore all the requirements of Theorems \ref{the:N2} and \ref{the:N3} are satisfied. We can thus state
that the function $6(e^{-2v}-e^{-3v})$ is the density of a probability distribution.
The standard transformation from the $\tau$-plane to the $z$-plane, and, accordingly,
the passage from the Laplace transform to the Mellin transform leads to the jump function
$J(x)=6(x^{-2}-x^{-3})$ for $x\in [1,+\infty)$ (and $J(x)=0$ for $x\in[0,1)$).
Finally, $\oJ(x)=6(x^{-3}-x^{-4})$ is the density of a probability distribution. 
\end{example}

\vspace*{0.1cm}

\section{Numerical approximation of the jump function across the cut}
\label{se:reconstruction}

\subsection{Preliminaries}
\label{subse:reconprelim}

As we have already remarked in the Introduction, a very relevant problem,
connected with the holomorphic extension associated with series expansions, is the
reconstruction of the discontinuity function across the cut starting
from the series coefficients. In these preliminary considerations we focus
on the power series, but the same arguments hold also for trigonometric series.
In particular, from the results of Theorem \ref{the:1} we know that the
function $G(z)$ belongs to the space $ H^2(\SSS)$ (see the Introduction
and, more specifically, Stein-Wainger's Theorem 3 \cite{Stein}, where
this result is proved in detail). Accordingly, $G(z)$ and the jump function
$J(x)=-\rmi[G_+(x)-G_-(x)]$ ($G_\pm(x)\doteq\lim_{y\to 0;y>0}G(x\pm\rmi y)$, $x\in[1,+\infty)$) are
related by the following Cauchy-type integral equation:
\beq
G(z)=\int_1^{+\infty}\frac{J(x)}{x-z}\,\rmd x
\qquad (G(z)\in H^2(\SSS)).
\label{V:1}
\nonumber
\eeq
One is then naturally led to the problem of reconstructing the jump function
$J(x)$ starting from $G(z)$, or more precisely, from the approximate knowledge
of a finite number of coefficients in the expansion of $G(z)$ within the open unit disk $|z|<1$.
This problem is obviously ill-posed in the sense of Hadamard \cite{Hadamard} since only a finite number of data 
are used in the calculations and, moreover, these data are necessarily affected by errors
(at least, roundoff numerical errors). This question is strictly connected to the numerical
analytic continuation from a region entirely contained in the unit disk up to
the unit circle. Indeed, the cut $z$-plane can be conformally mapped onto the
unit disk in the $\zeta$-plane; in this map the upper (resp., lower) lip of the cut
is mapped in the upper (resp., lower) half of the unit circle. Therefore, the continuation
up to the cut corresponds to the continuation up to the unit circle $|\zeta|=1$.

Analogous to the previous one, it is the so-called \emph{moment problem}, i.e.,
reconstructing the function $\varphi(t)$ from a finite number of moments $\{\mu_k\}_{k=0}^{N_0}$
(see Propositions \ref{pro:3} and \ref{pro:4}, and formula \eqref{1.16}).

All these problems are ill-posed in the sense of Hadamard \cite{Hadamard}, and their numerical
solution requires regularization \cite{Tikhonov}.
The standard regularizing procedure is based on the following two relevant results:
\begin{itemize}\setlength\itemsep{0.5em}
\item[(a)] \textit{A theorem on compactness}. Let $\sigma$ be a continuous map on a compact
topological space into a Hausdorff topological space; if $\sigma$ is one-to-one, then its
inverse map $\sigma^{-1}$ is continuous \cite[p. 141]{Kelley}. By this theorem the
compactness of the solution space is sufficient to guarantee the continuity
of $\sigma^{-1}$ if $\sigma$ is one-to-one (as in the case of the analytic continuation
in view of the uniqueness of the holomorphic continuation).
\item[(b)] \textit{The dependence of the regularized solution on the data accuracy}. From this
viewpoint two cases must be distinguished: (i) the reconstructed solution presents
a H\"older-type dependence on the data accuracy or noise (denoted here by $\varepsilon$),
i.e., it is of the form $\varepsilon^\gamma$ ($0<\gamma<1$);
(ii) the best possible stability estimate is at most only logarithmic, that is, proportional
to $|\log\varepsilon|^{-1}$.
\end{itemize}
For what concerns the first point it can be noted that the compactness of the solution space
is usually obtained by means of \emph{a-priori} global bounds on the solution, which are appropriate
to restrict suitably the solution space to a compact subspace, where the regularized
solution is looked for. As a typical example, remind the analytic
continuation: Denote by $D$ the closed unit disk $|z|\leqslant 1$ ($z\in\C$); then a global
\emph{a-priori} bound of the following type: $\sup_{z\in\partial D}|G(z)|\leqslant E$,
where $\partial D$ is the boundary of $D$, and $E$ is a constant independent of the
functions $G(z)$ (which are supposed to be holomorphic in $|z|<1$ and continuous on $|z|=1$)
is sufficient to restore the stability in the continuation from a region $\Gamma$ completely
contained in the unit disk, where the function $G(z)$ is supposed to be approximately known, up
to any compact domain entirely contained in $D$. Let us, however, note that the condition
above is not sufficient to perform a continuation up to the boundary of $D$.
In such a case a bound of the following form:
$\sup_{z\in\partial D} |\rmd G/\rmd z|\leqslant E'$ ($E'$ constant independent of $G$)
must also be required.

Now, we come to the point (b). It can be shown that, in the continuation up to a compact subdomain of the
unit disk, the restored stability is of H\"older-type, precisely, of the form:
$\varepsilon^{\omega(z)}E^{(1-\omega(z))}$, where $\omega(z)$ is the harmonic function on
$D\setminus\Gamma$, continuous on $\overline{D}$, which is equal to 1 on $\Gamma$, and equal to zero
on $\partial D$ \cite{Miller}. In the continuation up to the boundary the dependence on the data accuracy 
$\varepsilon$ is of logarithmic type \cite{John}, i.e., the error function is bounded by 
$C|\log(\varepsilon)|^{-1}$ ($C$ being a constant).
From this analysis it could appear reasonable to conclude that in certain cases
the restored stability presents a dependence on the data accuracy which is so poor
that the reconstruction from the data should never be attempted. This, in particular, would seem to be the
case of the analytic continuation up to the boundary of the unit disk, which is equivalent
to the reconstruction of the jump function across the cut. If such a conclusion would be
effectively true, then any attempt to reconstruct the discontinuity across the cut
should fail. However, let us observe that the results exposed in the points (a) and (b)
are sometimes taken in a rather dogmatic form, so that we think it is worth presenting
in the next subsection an alternative approach to this problem. We shall show, also
by numerical examples, that at least in a significant number of cases the reconstruction of
the jump functions, and the solution of the \emph{moment problem} as well, can be effectively performed.

First, observe that the requirement of compactness of the solution space,
indicated in the point (a), is sufficient to regularize the solution, but it is
not necessary. Secondly, we note that in practical problems
(and specifically in those suggested by physical problems) one does not know, in general,
appropriate \emph{a-priori} global bounds necessary for obtaining a compact subspace of the
solution space. It follows that the so-called standard approach to regularization
is often impracticable. Instead, the method we shall present in the next subsection
does not require any \emph{a-priori} global bound on the solution.

For what concerns the point (b), a great attention has been usually
devoted to the dependence on the data accuracy $\varepsilon$, but it has not been taken
enough heed of the class of functions that one wants to reconstruct. More precisely,
the smoothness of the functions to be reconstructed (at least
the $C^0$-continuity) is very relevant. We shall show in Section \ref{subse:numerical}
by numerical examples that, if the jump function is continuous and the
data accuracy is \emph{sufficiently good}, then the reconstruction is satisfactory, while it is
not in the case of discontinuous functions.

\vspace*{0.2cm}

\subsection{Power series: numerical approximation of the jump function across the cut}
\label{subse:jump}

In \cite{DeMicheli1,DeMicheli2} we proved the following results\footnote{In \cite{DeMicheli1,DeMicheli2} some errors 
and misprints must be corrected:
(i) In \cite{DeMicheli1,DeMicheli2} we must write: $f_n=(n+1)^p a_n$ ($p\geqslant 1$) instead of
$f_n=n^p a_n$ ($p\geqslant 1$). (ii) In \cite{DeMicheli2}, on the r.h.s.
of formula (33) one must read $\phi_m(x)$ instead of the erroneously written $\psi_m(x)$.
(iii) In formula (35) of \cite{DeMicheli2} one must read `$\leqslant$` instead of `$=$`.
These modifications do not change all the results of those papers.}.

\begin{theorem}
\label{the:4}
Consider the power series $\frac{1}{2\pi}\sum_{k=0}^\infty g_k z^k$
($z\in\C, |z|<1$), and suppose that the coefficients $\{g_k\}_{k=0}^\infty$
satisfy the conditions of Proposition \ref{pro:3} with $p=2+\varepsilon$
($\varepsilon>0$ arbitrarily small). Then, the jump function $J(x)$ across the cut $x\in[1,+\infty)$
can be represented by the following expansion that converges in the $L^2$-norm:
\beq
J(x) = \sum_{n=0}^\infty c_n\,\Phi_n(x)
\qquad (x\in[1,+\infty)),
\label{3.1}
\eeq
where the coefficients $c_n$ are given by
\beq
c_n = \sqrt{2}\sum_{k=0}^\infty\frac{(-1)^k}{k!} \,g_k\,P_n\left[-\rmi\left(k+\frac{1}{2}\right)\right],
\label{3.2}
\eeq
$P_n(\cdot)$ being the Meixner-Pollaczek polynomials \cite{Bateman,Szego}. The functions $\Phi_n(x)$ form 
an orthonormal basis in $L^2(0,+\infty)$, and are given by
\beq
\Phi_n(x) = \rmi^n\sqrt{2}\,L_n\left(\frac{2}{x}\right) \, \frac{e^{-\frac{1}{x}}}{x},
\label{3.3}
\eeq
$L_n(\cdot)$ being the Laguerre polynomials.
\end{theorem}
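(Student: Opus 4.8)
The plan is to read \eqref{3.1} as the expansion of $J$ in the orthonormal basis $\{\Phi_n\}_{n\geqslant0}$ of $L^2(0,+\infty)$ and to obtain \eqref{3.2} by a generating-function computation of the Fourier coefficients $c_n=\langle J,\Phi_n\rangle=\int_1^{+\infty}J(x)\,\overline{\Phi_n(x)}\,\rmd x$. First I would record the three inputs that make the skeleton of the argument routine: (a) under the present hypothesis Lemma \ref{lem:1} gives $\tg\in H^2(\CCm)$, hence $G\in H^2(\SSS)$ (Stein--Wainger \cite{Stein}), so that the jump function $J$ exists in $L^2$ and, vanishing on $[0,1)$, belongs to $L^2(0,+\infty)$; (b) the Mellin transform of $J$ reproduces $\tg$, whence $\int_1^{+\infty}J(x)\,x^{-k-1}\,\rmd x=\tg(k)=g_k$ for every integer $k\geqslant0$; (c) the substitution $t=2/x$ carries $\Phi_n$, up to the phase $\rmi^n$, into the classical Laguerre function $L_n(t)\,e^{-t/2}$, so that $\{\Phi_n\}$ is indeed an orthonormal basis of $L^2(0,+\infty)$ and \eqref{3.1} converges in $L^2$ by Parseval. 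Thus the real content is formula \eqref{3.2}.

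To get \eqref{3.2} I would start from $c_n/\big((-\rmi)^n\sqrt2\big)=\int_1^{+\infty}J(x)\,L_n(2/x)\,e^{-1/x}\,x^{-1}\,\rmd x$, multiply by $w^n$, sum, and apply the Laguerre generating function $\sum_{n\geqslant0}L_n(y)w^n=(1-w)^{-1}\exp\big(-yw/(1-w)\big)$ ($|w|<1$) with $y=2/x$, obtaining, for $|w|$ small,
\beq
\sum_{n=0}^\infty \frac{c_n}{(-\rmi)^n\sqrt2}\,w^n
=\int_1^{+\infty}J(x)\,\frac{1}{1-w}\,\exp\!\left(-\frac{1}{x}\cdot\frac{1+w}{1-w}\right)\frac{\rmd x}{x}.
\nonumber
\eeq
The interchange of $\sum_n$ and $\int_1^{+\infty}$ is legitimate because $J\in L^2(1,+\infty)$ forces $x\mapsto J(x)/x$ to lie in $L^1(1,+\infty)$ (Cauchy--Schwarz), so the integrand is dominated by $|J(x)|\,x^{-1}\,e^{|1+w|/|1-w|}$. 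Expanding $\exp(-a/x)$ with $a\doteq(1+w)/(1-w)$ as $\sum_m(-a)^m x^{-m}/m!$, integrating termwise (again justified by $\int_1^{+\infty}|J(x)|\,x^{-1}e^{|a|/x}\,\rmd x<\infty$), and using $\int_1^{+\infty}J(x)\,x^{-m-1}\,\rmd x=g_m$, one reaches the closed form
\beq
\sum_{n=0}^\infty \frac{c_n}{(-\rmi)^n\sqrt2}\,w^n
=\sum_{m=0}^\infty \frac{(-1)^m}{m!}\,g_m\,(1+w)^m(1-w)^{-m-1}.
\nonumber
\eeq

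It then remains to recognise the $w^n$-coefficient on the right as a Meixner--Pollaczek polynomial. For this I would use the generating function $\sum_{n\geqslant0}P_n^{(\lambda)}(\xi;\phi)\,t^n=(1-e^{\rmi\phi}t)^{-\lambda+\rmi\xi}(1-e^{-\rmi\phi}t)^{-\lambda-\rmi\xi}$: with $\lambda=\ud$ and $\xi=-\rmi(k+\ud)$ the exponents collapse to the integers $k$ and $-k-1$, and with $\phi=\pi/2$ the right-hand side becomes $(1-\rmi t)^k(1+\rmi t)^{-k-1}$, which under $t=\rmi w$ is exactly $(1+w)^k(1-w)^{-k-1}$; hence $(1+w)^k(1-w)^{-k-1}=\sum_{n\geqslant0}P_n^{(1/2)}\!\big(-\rmi(k+\ud);\pi/2\big)\,(\rmi w)^n$. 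Substituting this into the closed form above and rearranging the double series---absolutely convergent for $|w|$ small---one matches coefficients of $w^n$ and recovers \eqref{3.2} with $P_n(\cdot)=P_n^{(1/2)}(\cdot;\pi/2)$, the Meixner--Pollaczek polynomials \cite{Bateman,Szego}; the functions \eqref{3.3} are, as noted, the images of the classical Laguerre functions under $t=2/x$. An equivalent route makes the substitution $t=2/x$ first, reducing $c_n$ to an ordinary Fourier--Laguerre coefficient which is then written through the values of the Mellin transform of $J$ at the non-negative integers, i.e.\ through the $g_k$.

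I expect the main obstacle to be the careful bookkeeping of the last step: recognising that the exponents in the Meixner--Pollaczek generating function collapse precisely at $\lambda=\ud$, $\xi=-\rmi(k+\ud)$, and then tracking the factors $\rmi^n$, $(-\rmi)^n$, $(-1)^k/k!$ and $\sqrt2$ through the double-series rearrangement so that \eqref{3.2}--\eqref{3.3} emerge with the normalisation of \cite{Bateman,Szego}. Everything else --- the $L^2$-convergence of \eqref{3.1}, the orthonormality of the $\Phi_n$, and the two interchanges of sum and integral --- is routine once one has $\tg\in H^2(\CCm)$ from Lemma \ref{lem:1}, $J\in L^2(1,+\infty)$, and $\int_1^{+\infty}J(x)\,x^{-k-1}\,\rmd x=g_k$.
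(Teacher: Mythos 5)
Your proposal is correct: the phase bookkeeping you were worried about does close up (the factor $(-\rmi)^n\rmi^n=1$ cancels exactly, and the exponents in the Meixner--Pollaczek generating function do collapse to $k$ and $-k-1$ at $\lambda=\ud$, $\xi=-\rmi(k+\ud)$, so that \eqref{3.2} emerges with the right normalisation), and the two interchanges of sum and integral are justified as you say, using $|L_n(y)|\leqslant e^{y/2}$ together with $J(x)/x\in L^1(1,+\infty)$. The paper itself gives no argument for this theorem --- it only cites Theorems 2 and 2$'$ of \cite{DeMicheli1} --- and the route taken there (and in \cite{Cuniberti}; compare Lemma \ref{lem:7} and Theorem \ref{the:11} of the present paper) is genuinely different from yours: one first expands the boundary value $\tg(-\ud+\rmi\nu)$ in the orthonormal basis of Meixner--Pollaczek \emph{functions} $\psi_n(\nu)\propto\Gamma(\ud+\rmi\nu)P_n(\nu)$ of $L^2(-\infty,+\infty)$, computes the coefficients by a Watson--Sommerfeld contour deformation that picks up the residues of $\Gamma(\ud-\rmi\nu)$ at $\nu=-\rmi(k+\ud)$ (where $\tg$ takes the values $g_k$, producing the factors $(-1)^k/k!$), and then transports the expansion to $J$ through the inverse Mellin/Fourier transform, which carries the $\psi_n$ into the Laguerre-type functions $\Phi_n$. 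Your generating-function computation of $\langle J,\Phi_n\rangle$ reaches the same formula without any contour deformation or half-plane decay estimates for $\tg$, at the price of taking as input the moment identity $\int_1^{+\infty}J(x)\,x^{-k-1}\,\rmd x=g_k$; note that the present paper establishes that identity (statement (vi) of Theorem \ref{the:1}, Proposition \ref{pro:5}) only under the stronger hypothesis on $(k+1)g_k$, so under the hypothesis of Theorem \ref{the:4} you must, as you indicate, pull it from the Stein--Wainger theory and the Cauchy representation of $H^2(\SSS)$ functions quoted in Section \ref{subse:reconprelim}, rather than from Theorem \ref{the:1}. With that one caveat made explicit, your argument is a complete and somewhat more elementary proof of the same statement.
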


\begin{proof}
The proof is given in Theorems 2 and 2' of \cite{DeMicheli1} (see also \cite{DeMicheli2}).
\end{proof}

\noindent
The polynomials $P_n(\cdot)$, introduced in \eqref{3.2}, are a
particular case of the Pollaczek polynomials $P_n^{(\alpha)}(y)$,
orthonormal with respect to the following weight function:
$w(y) = \frac{1}{\pi}2^{(2\alpha-1)}|\Gamma(\alpha+\rmi y)|^2$, ($\alpha>0, y\in\R$),
i.e., 
\beq
\int_{-\infty}^{+\infty}w(y)P_n^{(\alpha)}(y)P_m^{(\alpha)}(y)\,\rmd y =
\delta_{n,m}\frac{\Gamma(n+2\alpha)}{\Gamma(n+1)\Gamma(2\alpha)}. \nonumber
\label{add6}
\eeq
Here we take $\alpha=\frac{1}{2}$, which, for simplicity, is hereafter omitted in the notation.

\vspace{0.5cm}

Obviously, in actual numerical computation use cannot be made of the whole sequence $\{g_k\}_{k=0}^\infty$, 
but the input data are given by only a finite number of coefficients $\{g_k\}_{k=0}^{N}$ which, 
in addition, are necessarily affected by noise. 
The noisy coefficients will be denoted by $g_k^{(\varepsilon)}$, and we assume that the noise is such that
$|g_k-g_k^{(\varepsilon)}|\leqslant\varepsilon$ ($k=0,1,2,\ldots, N$; $\varepsilon>0$). 
Next, we introduce the following approximate coefficients, which can be actually computed 
from the input data $\{g_k^{(\varepsilon)}\}_{k=0}^{N}$:
\beq
c_n^{\,(\varepsilon,N)} \doteq \sqrt{2}\sum_{k=0}^N\frac{(-1)^k}{k!}
\,g_k^{(\varepsilon)}\,P_n\left[-\rmi\left(k+\frac{1}{2}\right)\right].
\label{3.4}
\nonumber
\eeq
Evidently, we have: $c_n^{\,(0,\infty)}=c_n$. We can now state the following lemma.

\begin{lemma}
\label{lem:6}
Consider the power series $\frac{1}{2\pi}\sum_{k=0}^\infty g_k z^k$ ($z\in\C, |z|<1$),
and suppose that the coefficients $\{g_k\}_{k=0}^\infty$ satisfy the conditions of
Proposition \ref{pro:3} with $p=2+\varepsilon$ ($\varepsilon>0$ arbitrarily small). Then
the following statements hold:
\begin{flalign*}
(\mathrm{i}) &&&
\sum_{n=0}^\infty \left|c_n^{\,(0,\infty)}\right|^2 = \left\|J\right\|^2_{L^2(1,+\infty)}
=K \qquad (K = ~\mathrm{constant}).
\qquad\qquad\qquad\qquad\qquad\qquad \\
(\mathrm{ii}) &&&
\sum_{n=0}^\infty \left|c_n^{\,(\varepsilon,N)}\right|^2 = +\infty. \\
(\mathrm{iii}) &&&
\lim_{\staccrel{N\to\infty}{\scriptstyle\varepsilon\to 0}} c_n^{\,(\varepsilon,N)}
=c_n^{\,(0,\infty)}=c_n \qquad (n=0,1,2,\ldots), \mathit{the~ order~ of~ the~ limits} \\
   &&&
\mathit{being~ relevant}\!:
\mathit{first} \lim_{N\to\infty}, \mathit{then} \lim_{\varepsilon\to 0}.
\end{flalign*}
$\mathrm{(iv)}$ \ If $m_0(\varepsilon,N)$ is defined as
\beq
m_0(\varepsilon,N)\doteq\max\left\{m\in\N\,:\,\sum_{n=0}^m\left|c_n^{\,(\varepsilon,N)}\right|^2
\leqslant K\right\},
\label{3.8}
\eeq
\indent\indent then
\beq
\lim_{\staccrel{N\to\infty}{\scriptstyle\varepsilon\to 0}}
m_0(\varepsilon,N) = +\infty.
\label{3.9}
\eeq
$\mathrm{(v)}$ The sum
\beq
M_m^{\,(\varepsilon,N)}\doteq\sum_{n=0}^m\left|c_n^{\,(\varepsilon,N)}\right|^2 \qquad (m\in\N)
\label{3.10}
\eeq
\indent satisfies the following properties:
\begin{itemize}
\item[(a)] it does not decrease for increasing values of $m$;
\item[(b)] the following relationship holds:
\beq
M_m^{\,(\varepsilon,N)}\geqslant\left|c_m^{\,(\varepsilon,N)}\right|^2
\staccrel{\sim}{\scriptstyle m\to\infty} A^{(\varepsilon,N)} \cdot (2m)^{2N} \qquad (N~\mathrm{fixed}),
\label{3.11}
\eeq
where $A^{(\varepsilon,N)}=2\left|g_N^{(\varepsilon)}\right|^2/(N!)^4$ is a quantity independent of $m$.
\end{itemize}
\end{lemma}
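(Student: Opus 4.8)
The plan is to obtain statement~(i) directly from Theorem~\ref{the:4}, and to reduce (ii)--(v) to a single asymptotic estimate for the Meixner--Pollaczek polynomials. For~(i), recall that $c_n^{(0,\infty)}=c_n$ and that, by Theorem~\ref{the:4}, the expansion $J=\sum_{n\geqslant0}c_n\Phi_n$ holds in $L^2$-norm with $\{\Phi_n\}$ an orthonormal basis of $L^2(0,+\infty)$; since $J$ is supported on $[1,+\infty)$, where it is square-integrable because statement~(vii) of Theorem~\ref{the:1} gives $\int_1^{+\infty}|J(x)|^2\,\rmd x=\frac{1}{2\pi}\int_{-\infty}^{+\infty}\bigl|\tg(-\tfrac12+\rmi\nu)\bigr|^2\,\rmd\nu<\infty$, Parseval's identity yields $\sum_n|c_n^{(0,\infty)}|^2=\|J\|^2_{L^2(1,+\infty)}=:K<\infty$.

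I would next establish the behaviour of $P_n[-\rmi(k+\tfrac12)]$ as $n\to\infty$, for fixed $k\in\N$. For $\alpha=\tfrac12$ the Pollaczek polynomials $P_n(\cdot)$ coincide, up to a normalisation immaterial for the asymptotics, with the Meixner--Pollaczek polynomials $P_n^{(1/2)}(\cdot;\tfrac\pi2)$, whose generating function evaluated at $-\rmi(k+\tfrac12)$ reduces to
\[
\sum_{n=0}^\infty P_n\!\left[-\rmi\!\left(k+\tfrac12\right)\right]t^n=(1-\rmi t)^{k}\,(1+\rmi t)^{-k-1},
\]
a function whose only singularity in the closed unit disk is a pole of order $k+1$ at $t=\rmi$. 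A standard singularity (transfer) analysis then gives $P_n[-\rmi(k+\tfrac12)]=\omega_{n,k}\,(2n)^{k}/k!\,(1+o(1))$ with $|\omega_{n,k}|=1$. Consequently, in the finite sum defining $c_m^{(\varepsilon,N)}$ the term $k=N$ dominates all the others (which are $O((2m)^{N-1})$), so that
\[
\left|c_m^{(\varepsilon,N)}\right|^2\staccrel{\sim}{\scriptstyle m\to\infty}
2\,\frac{\left|g_N^{(\varepsilon)}\right|^2}{(N!)^4}\,(2m)^{2N}=A^{(\varepsilon,N)}(2m)^{2N}
\]
(assuming $g_N^{(\varepsilon)}\neq0$; otherwise replace $N$ by the largest index with non-vanishing data coefficient). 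This is the asymptotics in~(v)(b); the monotonicity in~(v)(a) and the bound $M_m^{(\varepsilon,N)}\geqslant|c_m^{(\varepsilon,N)}|^2$ are immediate from~\eqref{3.10}. Summing over $m$, the polynomial growth of $|c_m^{(\varepsilon,N)}|^2$ forces $\sum_m|c_m^{(\varepsilon,N)}|^2=+\infty$, which is~(ii).

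For~(iii) I would fix $n$ and split $c_n^{(\varepsilon,N)}-c_n=\bigl(c_n^{(\varepsilon,N)}-c_n^{(0,N)}\bigr)+\bigl(c_n^{(0,N)}-c_n\bigr)$. The first bracket equals $\sqrt2\sum_{k=0}^N\frac{(-1)^k}{k!}\bigl(g_k^{(\varepsilon)}-g_k\bigr)P_n[-\rmi(k+\tfrac12)]$ and is therefore bounded in modulus by $\varepsilon\,D_n$, where $D_n\doteq\sqrt2\sum_{k\geqslant0}\frac{1}{k!}\bigl|P_n[-\rmi(k+\tfrac12)]\bigr|<\infty$ (the series converges since $|P_n[-\rmi(k+\tfrac12)]|=O(k^{n})$ as $k\to\infty$, $P_n$ being a polynomial of degree $n$); the second bracket is the tail of the convergent series~\eqref{3.2}, which tends to $0$ as $N\to\infty$ (its terms decaying by~\eqref{1.16a} together with $\tfrac1{k!}|P_n[-\rmi(k+\tfrac12)]|\to0$). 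Letting first $N\to\infty$ and then $\varepsilon\to0$ yields $c_n^{(\varepsilon,N)}\to c_n$; the prescribed order is what one is forced to use, since by~(ii) the sequence $\{c_n^{(\varepsilon,N)}\}_n$ never lies in $\ell^2$ for $\varepsilon>0$, so the convergence is not uniform in $n$. For~(iv), by~(v) the map $m\mapsto M_m^{(\varepsilon,N)}$ is non-decreasing and diverges, hence $m_0(\varepsilon,N)$ in~\eqref{3.8} is a well-defined finite integer; and given any $M\in\N$, statement~(iii) gives $M_M^{(\varepsilon,N)}=\sum_{n=0}^M|c_n^{(\varepsilon,N)}|^2\to\sum_{n=0}^M|c_n|^2<K$ as $N\to\infty,\varepsilon\to0$ (strict inequality, since the expansion~\eqref{3.1} does not terminate), so $m_0(\varepsilon,N)\geqslant M$ for $\varepsilon$ small and $N$ large, which gives~\eqref{3.9}.

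The hard part will be the precise singularity/transfer analysis yielding $|P_n[-\rmi(k+\tfrac12)]|\sim(2n)^{k}/k!$ and, with it, the identification of $k=N$ as the dominant contribution to $c_m^{(\varepsilon,N)}$ (one must check that the subleading terms and the oscillating phases $\omega_{n,k}$ cannot conspire to cancel the leading growth); once this is in hand, everything else is elementary bookkeeping. In fact, all of these facts are worked out in~\cite{DeMicheli1,DeMicheli2}, modulo the misprints corrected in the footnote above.
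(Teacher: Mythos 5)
Your proposal is correct, but be aware that the paper itself gives no in-text proof of this lemma: it simply cites Lemma~1 of \cite{DeMicheli2}, and what you have written is essentially a reconstruction of that reference's argument. The two pillars are exactly the right ones: Parseval's identity in the orthonormal basis $\{\Phi_n\}$ of $L^2(0,+\infty)$ (with $J$ extended by zero on $[0,1)$) for (i), and the generating-function identity $\sum_n P_n[-\rmi(k+\tfrac12)]\,t^n=(1-\rmi t)^{k}(1+\rmi t)^{-k-1}$, whose pole of order $k+1$ at $t=\rmi$ yields $|P_n[-\rmi(k+\tfrac12)]|\sim(2n)^k/k!$, for (ii), (v)(b) and hence (iv). Your identification of the $k=N$ term as dominant is sound because it is a \emph{single} term, so the only possible failure is $g_N^{(\varepsilon)}=0$, which you correctly flag (the lemma as stated tacitly assumes $g_N^{(\varepsilon)}\neq 0$; the same caveat applies to the reference). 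Two small points worth making explicit: in (iv) the strict inequality $\sum_{n=0}^M|c_n|^2<K$ does need the non-termination of \eqref{3.1}, which you can justify by noting that a nonzero finite combination of the $\Phi_n$ is $e^{-1/x}/x$ times a polynomial in $1/x$ and so cannot vanish on $(0,1)$, whereas $J$ must; and in (iii) the finite sum defining $c_n^{(\varepsilon,N)}-c_n^{(0,N)}$ makes the convergence of your majorant $D_n$ a convenience rather than a necessity, though your bound $|P_n[-\rmi(k+\tfrac12)]|=O(k^n)$ is what is genuinely needed for the tail $c_n^{(0,N)}-c_n\to 0$. With those remarks your argument is complete and matches the route of the cited source.
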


\begin{proof}
See Lemma 1 of \cite{DeMicheli2}. 
\end{proof}

\begin{remark}
From statement (v) and formula \eqref{3.9} it follows that, for large values of $N$
and small values of $\varepsilon$, the sum $M_m^{\,(\varepsilon,N)}$
exhibits (as a function of $m$) a \emph{plateau}, i.e., a range of $m$-values
where it is nearly constant. The upper limit of this range is given by $m_0(\varepsilon,N)$ 
(see the next subsection for numerical examples).
\label{rem:4}
\end{remark}

\vskip 0.2cm

\noindent
We now introduce the following approximation of the jump function across the cut:
\beq
J^{(\varepsilon,N)}(x) \doteq \sum_{n=0}^{m_0(\varepsilon,N)} c_n^{\,(\varepsilon,N)}\,\Phi_n(x)
\qquad (x\in[1,+\infty)),
\label{3.12}
\eeq
and state the following theorem.

\begin{theorem}
\label{the:5}
Consider the power series $\frac{1}{2\pi}\sum_{k=0}^\infty g_k z^k$ ($z\in\C, |z|<1$),
and suppose that the coefficients $\{g_k\}_{k=0}^\infty$ satisfy the conditions of
Proposition \ref{pro:3} with $p=2+\varepsilon$ ($\varepsilon>0$ arbitrarily small). Then
the following equality holds:
\beq
\lim_{\staccrel{N\to\infty}{\scriptstyle\varepsilon\to 0}}
\left\|J-J^{(\varepsilon,N)}\right\|_{L^2(1,+\infty)} = 0.
\label{3.13}
\eeq
\end{theorem}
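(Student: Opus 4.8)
The plan is to estimate the error directly in the $L^2$-norm by means of Parseval's identity and then to split it into two pieces: a ``bias'' term produced by truncating the exact expansion \eqref{3.1} at the level $m_0(\varepsilon,N)$, and a ``noise-propagation'' term produced by the discrepancy between the exact coefficients $c_n$ and the computed ones $c_n^{\,(\varepsilon,N)}$ up to that level. Since $\|\cdot\|_{L^2(1,+\infty)}\leqslant\|\cdot\|_{L^2(0,+\infty)}$ and $\{\Phi_n\}_{n=0}^\infty$ is an orthonormal basis of $L^2(0,+\infty)$, Parseval's identity gives
\beq
\left\|J-J^{(\varepsilon,N)}\right\|_{L^2(1,+\infty)}^2
\leqslant\sum_{n=0}^{m_0(\varepsilon,N)}\left|c_n-c_n^{\,(\varepsilon,N)}\right|^2
+\sum_{n=m_0(\varepsilon,N)+1}^{\infty}\left|c_n\right|^2,
\nonumber
\eeq
where $c_n=c_n^{\,(0,\infty)}$. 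The second (bias) sum is the tail of the convergent series $\sum_{n}|c_n|^2=K$ from statement (i) of Lemma \ref{lem:6}, and therefore tends to $0$ because $m_0(\varepsilon,N)\to+\infty$ by \eqref{3.9}.

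The first sum is the delicate one, because it ranges over an index set whose size diverges, so the termwise convergence in statement (iii) of Lemma \ref{lem:6} cannot be applied directly. I would fix an arbitrary integer $m^\ast$, note that by \eqref{3.9} one has $m_0(\varepsilon,N)>m^\ast$ for $N$ large and $\varepsilon$ small, and split
\beq
\sum_{n=0}^{m_0(\varepsilon,N)}\left|c_n-c_n^{\,(\varepsilon,N)}\right|^2
=\sum_{n=0}^{m^\ast}\left|c_n-c_n^{\,(\varepsilon,N)}\right|^2
+\sum_{n=m^\ast+1}^{m_0(\varepsilon,N)}\left|c_n-c_n^{\,(\varepsilon,N)}\right|^2.
\nonumber
\eeq
The first group is a \emph{fixed} finite sum, hence it tends to $0$ in the prescribed order of the limits (first $N\to\infty$, then $\varepsilon\to0$) by statement (iii) of Lemma \ref{lem:6}. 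For the second group I would combine $|c_n-c_n^{\,(\varepsilon,N)}|^2\leqslant 2|c_n|^2+2|c_n^{\,(\varepsilon,N)}|^2$ with the defining property \eqref{3.8} of $m_0(\varepsilon,N)$, namely $\sum_{n=0}^{m_0(\varepsilon,N)}|c_n^{\,(\varepsilon,N)}|^2\leqslant K$, to get
\beq
\sum_{n=m^\ast+1}^{m_0(\varepsilon,N)}\left|c_n-c_n^{\,(\varepsilon,N)}\right|^2
\leqslant 2\sum_{n=m^\ast+1}^{\infty}\left|c_n\right|^2
+2\left(K-\sum_{n=0}^{m^\ast}\left|c_n^{\,(\varepsilon,N)}\right|^2\right).
\nonumber
\eeq

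Passing to the limit first $N\to\infty$ and then $\varepsilon\to0$, statement (iii) of Lemma \ref{lem:6} yields $\sum_{n=0}^{m^\ast}|c_n^{\,(\varepsilon,N)}|^2\to\sum_{n=0}^{m^\ast}|c_n|^2$, whence $K-\sum_{n=0}^{m^\ast}|c_n^{\,(\varepsilon,N)}|^2\to\sum_{n>m^\ast}|c_n|^2$; collecting the three estimates I would conclude
\beq
\limsup_{\varepsilon\to0}\ \limsup_{N\to\infty}\ \left\|J-J^{(\varepsilon,N)}\right\|_{L^2(1,+\infty)}^2
\leqslant 4\sum_{n=m^\ast+1}^{\infty}\left|c_n\right|^2.
\nonumber
\eeq
Since $m^\ast$ is arbitrary and $\sum_{n}|c_n|^2=K<\infty$, letting $m^\ast\to\infty$ forces the right-hand side to $0$, which is \eqref{3.13}. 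The main obstacle is exactly the control of the block $\sum_{n=m^\ast+1}^{m_0(\varepsilon,N)}|c_n-c_n^{\,(\varepsilon,N)}|^2$ over the growing range of indices: the whole regularization mechanism hinges on the fact that the stopping index $m_0(\varepsilon,N)$ is chosen so as to keep the $\ell^2$-norm of the computed coefficients bounded by the \emph{a priori} known value $K=\|J\|_{L^2(1,+\infty)}^2$ — without this cutoff the high-order noisy coefficients $c_n^{\,(\varepsilon,N)}$ diverge (statements (ii) and (v) of Lemma \ref{lem:6}), and no convergence could be expected.
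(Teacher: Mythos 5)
Your proof is correct. Note first that the paper does not actually prove Theorem \ref{the:5} in situ: its ``proof'' is a citation of Theorem 3 of \cite{DeMicheli2}, so your argument supplies what the paper omits, and it is (as far as the structure of the cited work goes) essentially the standard argument for this kind of stopping-rule regularization. The three ingredients you isolate are exactly the right ones: (a) Parseval in $L^2(0,+\infty)$ applied to $J$ extended by zero on $[0,1)$ (legitimate by statement (i) of Lemma \ref{lem:6} and Remark \ref{rem:last}, which identify the $c_n$ with the Fourier coefficients of the extended $J$), giving the exact decomposition into the truncation tail $\sum_{n>m_0}|c_n|^2$ and the coefficient-error block $\sum_{n\leqslant m_0}|c_n-c_n^{\,(\varepsilon,N)}|^2$; (b) the splitting of the latter at a fixed $m^\ast$, with the low block killed by the termwise iterated limit of statement (iii) and the high block controlled by the inequality $\sum_{n=0}^{m_0}|c_n^{\,(\varepsilon,N)}|^2\leqslant K$ built into definition \eqref{3.8} — this is indeed the crux, since without the cutoff statement (ii)/(v) shows the noisy coefficients carry infinite $\ell^2$ mass; (c) the divergence \eqref{3.9} of $m_0(\varepsilon,N)$, which makes both tails vanish as $m^\ast\to\infty$. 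Your bookkeeping of the constant ($\leqslant 4\sum_{n>m^\ast}|c_n|^2$ in the iterated $\limsup$) and your respect for the order of limits (first $N\to\infty$, then $\varepsilon\to 0$) are both correct. The only point worth stating explicitly is the one you use tacitly, namely that $K-\sum_{n=0}^{m^\ast}|c_n^{\,(\varepsilon,N)}|^2$ dominates $\sum_{n=m^\ast+1}^{m_0}|c_n^{\,(\varepsilon,N)}|^2$ precisely because of the defining inequality in \eqref{3.8}; with that made explicit the proof is complete.
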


\begin{proof}
See Theorem 3 of \cite{DeMicheli2}.
\end{proof}

\vskip 0.3cm

\begin{remark}
In Theorem \ref{the:4} (see formula \eqref{3.1}) and in Theorem \ref{the:5} (formulae \eqref{3.12} 
and \eqref{3.13}), we present the reconstruction of the jump function on the semiaxis $x\in[1,+\infty)$. 
But we could consider as well the reconstruction of the discontinuity function on the semiaxis 
$x\in[0,+\infty)$, noting that the functions $\Phi_n(x)$ form a basis in $L^2(0,+\infty)$ 
(see \eqref{3.3} in Theorem \ref{the:4}). We should expect that in the interval $x\in[0,1)$ 
the reconstruction of the jump function converges to zero in the sense of the $L^2$-norm, as we actually 
observe, e.g., in Figures \ref{fig:3} and \ref{fig:4}.
\label{rem:last}
\end{remark}

\vspace*{0.2cm}

At this point we notice that on one hand we obtain in Theorem \ref{the:4} (see expansion \eqref{3.1})
the reconstruction of the jump function across the cut in the ideal case of an infinite number
of noiseless input data. Expansion \eqref{3.1} converges in the $L^2$-norm. On the other hand, 
if we realistically assume that the data are perturbed by noise and finite in number, we can prove
Theorem \ref{the:5} without the use of any \emph{a-priori} global bound. Further,
by using Schwarz's inequality, statements (iv) and (v) of Lemma \ref{lem:6}, and equality \eqref{3.12}
(see also Remark \ref{rem:last}) we can also state that the error 
\beq
\Delta J^{(\varepsilon,N)}\doteq J(x)-J^{(\varepsilon,N)}(x) \nonumber
\label{add7}
\eeq
is bounded as follows:
$\|\Delta J^{(\varepsilon,N)}\|_{L^2(0,+\infty)}\leqslant 2\|J\|_{L^2(0,+\infty)}$. The last inequality 
guarantees that the root mean square error is bounded; furthermore, Theorem \ref{the:5}
(see also Remark \ref{rem:last}) shows that it tends to zero as $N\to\infty$ and $\varepsilon\to 0$.
Unfortunately, this theorem does not indicate how fast $\|\Delta J^{(\varepsilon,N)}\|_{L^2(0,+\infty)}$
runs to zero. Anyway, it is worth recalling that the present problem is very close
to the problem of the analytic continuation from a region entirely contained in the unit disk
up to the unit circle (see Section \ref{subse:reconprelim}). Then we can reasonably argue
that the stability estimate which holds in that case remains true in the present one.
Now, a classical result due to F. John (see \cite{John}) states that in the analytic continuation
up to the boundary the dependence on the data accuracy is of logarithmic type (see again Section \ref{subse:reconprelim}),
as it is typical of severely ill-posed problems.
Translating this result to our case, we can argue that 
$\|\Delta J^{(\varepsilon,N)}\|_{L^2(0,+\infty)}$ tends to zero, for $\varepsilon$ tending to zero,
only logarithmically, i.e., as $|\log\varepsilon|^{-1}$. This behavior has been also shown 
by numerically examples for this type of problem (namely, the approximation of the jump 
function across the cut) in \cite{DeMicheli2} (see, in particular, Figure 4C).
In the present case we must also take into account the dependence on $N$. Now, we conjecture
that the stability estimate in our problem is of the following type: for small $\varepsilon$ and
large $N$, 
$\|\Delta J^{(\varepsilon,N)}\|_{L^2(0,+\infty)}\leqslant C \left(|\log\varepsilon|\log N\right)^{-1}
\longrightarrow 0$ for $N\to\infty$ and $\varepsilon\to 0$ ($C= \mathrm{constant}$).
Let us note that the last conjecture is enforced by several numerical examples 
performed in \cite{DeMicheli2}, which the interested reader is refereed to.

We see that if the bound illustrated above is correct, the decrease of the root mean square error
for $N\to\infty$ and $\varepsilon\to 0$ is slow, and this behaviour is a consequence of the fact that 
the current problem is severely ill-posed. Therefore we can conjecture that the numerical
approximation of the jump function across the cut (starting from a finite number of noisy 
coefficients $\{g_k^{(\varepsilon)}\}_{k=0}^N$) can be satisfactory only if the 
following conditions are satisfied:
\begin{itemize}\setlength\itemsep{0.4em}
\item[(a)] The jump function must be at least of class $C^0$ and of bounded variation.
\item[(b)] The bound $\varepsilon$ on the noise must be very small.
\item[(c)] The number $(N+1)$ of data must be large.  
\end{itemize}
We shall see in the numerical examples that in order to obtain a satisfactory approximation 
all these three conditions should be fulfilled.

\vskip 0.4cm

\subsection{Numerical examples}
\label{subse:numerical}

Consider the power series $\frac{1}{2\pi}\sum_{k=0}^\infty g_k z^k$
($z\in\C, |z|<1$) with coefficients $g_k=\frac{6}{(k+2)(k+3)}$ ($k=0,1,2,\ldots$). 
They are the restriction to the integers of the following Laplace transform:
$\tg(\ell)=6\int_{0}^{+\infty}e^{-\ell v}e^{-v}(e^{-v}-e^{-2v})\,\rmd v$;
$\tg(\ell)|_{\ell=k}=g_k$ for $k=0,1,2,\ldots$.
Moreover, the following equality holds: $\tg(0)=1$.
The corresponding jump function is then:
$J(v)=6(e^{-2v}-e^{-3v})$, and $J(0)=0$.
Moving from the $2\pi$-periodic $\tau$-plane
($\tau=u+\rmi v$; $u,v\in\R$) to the complex $z$-plane by means of the mapping $z=e^{-\rmi\tau}$, 
we obtain the jump function
$J(x)=6(x^{-2}-x^{-3})$ for $x\in[1,+\infty)$, and $J(x)=0$ for $x\in[0,1)$. Notice that $J(1)=0$, so that,
in $x=1$, $J(x)$ is continuous but not differentiable.
By using the numerical procedure outlined in the previous subsection,
the jump function $J(x)$ can be approximated starting from a finite
number $(N+1)$ of coefficients, and the results are summarized in Figure \ref{fig:3}.
In Figure \ref{fig:3}a the plot of the sum $M_m^{\,(\varepsilon,N)}$ (see \eqref{3.10}) against $m$ is given
for various values of $N$. In this case $\varepsilon=0$, which means that no noise has been added
to the input coefficients $g_k$, and that the only source of
inevitable noise is given by the numerical roundoff error. As stated in Remark \ref{rem:4},
the sum $M_m^{\,(\varepsilon,N)}$ manifests a plateau, which in this example begins at $m \simeq 10$ and whose length
varies according to the number $N$ of input coefficients. Moreover, it can be seen that, after the end of the plateau,
$M_m^{\,(\varepsilon,N)}$ starts to diverge as a power of $m$ (see \eqref{3.11}) for the presence
of the roundoff noise and for the finiteness of the number of input coefficients.
Varying $N$ from $N=20$ to $N=60$ the length of the plateau increases correspondingly,
reflecting the increase of information available for the reconstruction. For $N \gtrsim 60$
the range of the plateau does not vary appreciably, being limited superiorly by the effect
of the roundoff noise. 
Actually, the sum in the approximation \eqref{3.12} can be truncated at a value 
$m_\mathrm{t}(\varepsilon,N)\leqslant m_0(\varepsilon,N)$, provided it belongs to the plateau.
Indeed, from definition \eqref{3.8} and in view of Remark \ref{rem:4}, it results that
the choice of the truncation number $m_\mathrm{t}(\varepsilon,N)$ (provided it lies within the plateau)
is not critical for the accuracy of the final result.
Figure \ref{fig:3}b shows the approximation $J^{(0,N)}(x)$ of the jump function (solid line),
computed by means of formula \eqref{3.12} with $N=60$ noiseless input coefficients $g_k$, 
and $m_\mathrm{t}(0,60)=100$.
It is evident the high quality of the reconstruction, the latter being almost
indistinguishable from the true jump function $J(x)$ (dashed line) for the most part of the $x$-range.
This excellent reconstruction has been obtained also with a smaller number ($N\gtrsim 20$) of input coefficients
(not shown, for brevity).
It is worth observing that for $x\in[0,1]$, where the true jump function is null, the reconstruction
oscillates evidently as a consequence of the $L^2$-character of the approximation of $J^{(\varepsilon,N)}(x)$
to $J(x)$ (see Theorems \ref{the:4} and \ref{the:5}).

In Figure \ref{fig:3}c the sum $M_m^{\,(\varepsilon,N)}$ is plotted against $m$ for various values of $\varepsilon$
(see the figure legend for numerical details). Similarly to what shown in panel (a), it can be seen that the length
of the plateau increases as the noise level decreases (for $\varepsilon=10^{-7}$ the plateau ranges from $m \simeq 10$
through $m \simeq 40$). An example of reconstruction of the jump function in the case of noisy input coefficients
($N=60$) is given in panel (d) for $\varepsilon =10^{-6}$; the truncation number we used is
$m_\mathrm{t}(10^{-6},60)=26$, which lies within the plateau visible in panel (c). Even in this case the reconstruction 
of the jump function across the cut is quite satisfactory.

\begin{figure}[tb]
\begin{center}
\leavevmode
\includegraphics[width=10.6cm]{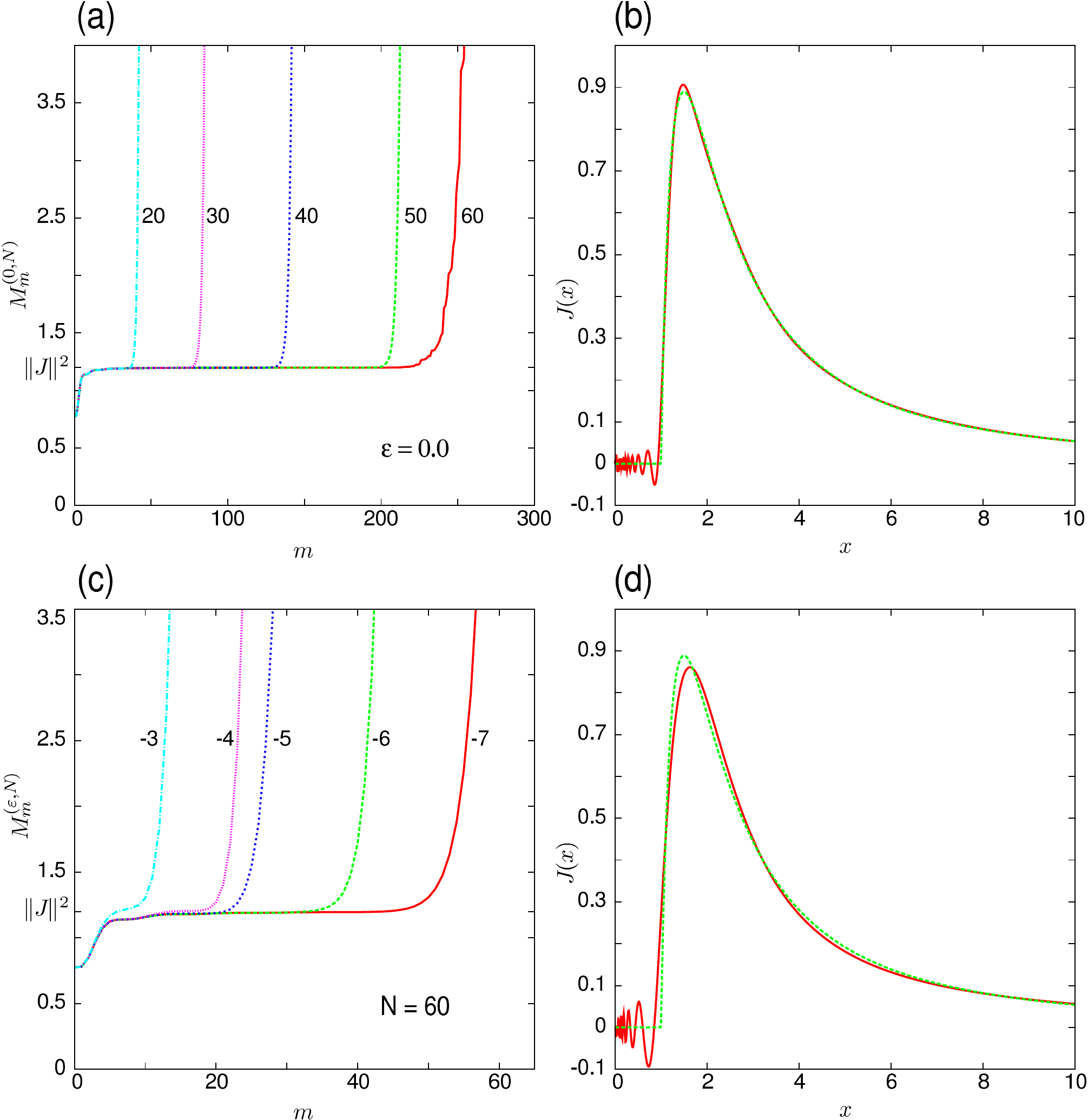}
\caption{\label{fig:3}
\small Numerical approximation of the jump function across the cut. Coefficients $g_k =6[(k+2)(k+3)]^{-1}$.
(a) Plot of the sum $M_m^{(0,N)}$ vs. $m$ (see \eqref{3.10}) at various values of the number $N$ of input noiseless
coefficients: $N=20,30,40,50,60$.
(b) Approximation (solid line) of the jump function, computed by using equation \eqref{3.12} with $N=60$, 
and truncation number $m_\mathrm{t}(0,60)=100$.
The true jump function (dashed line) is: $J(x)= 6(x^{-2}-x^{-3})$ for $x\in[1,+\infty)$ and null elsewhere.
(c) Plot of the sum $M_m^{(\varepsilon,60)}$ vs. $m$ (see \eqref{3.10}) at various values of noise level:
$\varepsilon=10^{-3},10^{-4},10^{-5},10^{-6},10^{-7}$.
The noisy coefficients $g_k^{(\varepsilon)}$ are computed by adding to the noiseless coefficient $g_k$ 
a random variable uniformly distributed in the interval $[-\varepsilon,\varepsilon]$.
(d) Approximation (solid line) of the jump function, computed by using equation \eqref{3.12} 
with $N=60$, $\varepsilon=10^{-6}$, $m_\mathrm{t}(10^{-6},60)=25$.}
\end{center}
\end{figure}

Analogously, starting from another finite sequence of coefficients $g_k=\frac{1}{k+1}$, we can reconstruct the
jump function, which in this case is: $J(x)=x^{-1}$ for $x\in[1,+\infty)$, and $J(x) = 0$ for $x\in[0,1)$. 
Notice that $J(1)=1$, and
therefore, in this case, $J(x)$ is discontinuous in $x=1$.
The plot of the sum $M_m^{\,(\varepsilon,N)}$, given in Figure \ref{fig:4}a, shows that even when $\varepsilon=0$ 
a range of $m$-values where it is nearly constant is not neatly displayed.
As expected, the approximation of the jump function, which is plotted in Figure \ref{fig:4}b, is not very satisfactory,
in agreement with the remarks illustrated in Section \ref{subse:reconprelim}.

\begin{figure}[tb]
\begin{center}
\leavevmode
\includegraphics[width=10.5cm]{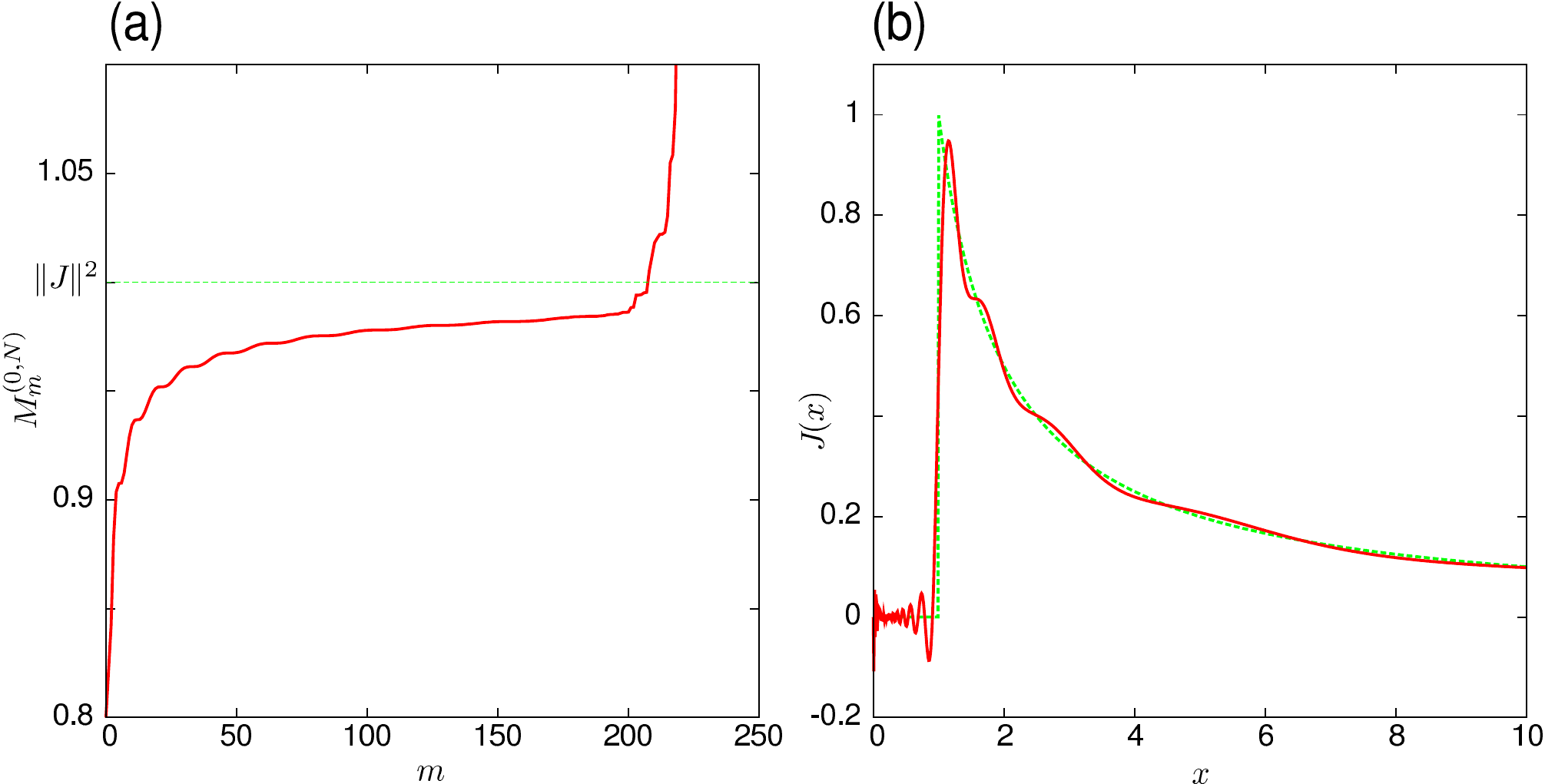} %10.5  \textwidth
\caption{\label{fig:4}
\small Numerical approximation of the jump function across the cut. Coefficients $g_k =(k+1)^{-1}$.
(a) Plot of the sum $M_m^{(0,N)}$ vs. $m$ (see \eqref{3.10}) with $N=60$ noiseless coefficients.
(b) Approximation (solid line) of the jump function, computed by using equation \eqref{3.12} with $N=60$, 
and truncation number $m_\mathrm{t}(0,60)=190$.
The true jump function (dashed line) is: $J(x)= x^{-1}$ for $x\in[1,+\infty)$ and null elsewhere.}
\end{center}
\end{figure}

\vskip 0.4cm

\subsection{Numerical approximation of the thermal Green functions at real time 
from those at imaginary time}
\label{subse:real}

In this subsection we consider only a system of bosons (see Subsection \ref{subse:derivation});
the case of fermions can be treated similarly.
The reconstruction procedure we are going to outline can be split in three parts along lines similar
to those followed in Section \ref{subse:jump}: (1) First, we reconstruct the function
$\tg^{(+;\rmb)}(\ell+\rmi\nu)$ ($\nu\in\R$) on the line $\ell=\frac{1}{2}$,
assuming that the Fourier coefficients $\{g_k^{(+;\rmb)}\}_{k=1}^\infty$
are known and noiseless. (2) From the knowledge of $\tg^{(+;\rmb)}(\frac{1}{2}+\rmi\nu)$
we can then reconstruct $J^{(+;\rmb)}(v)$ ($v\in\R^+$). (3) The whole procedure
is then reconsidered taking into account that the data set (i.e., the set of Fourier coefficients
which can be actually used in the computation) has a finite cardinality, and the coefficients are also 
necessarily affected by noise.

We now suppose that the Fourier coefficients $\{g_k^{(+;\rmb)}\}_{k=1}^\infty$
of the series \eqref{4.4a} satisfy the conditions of Theorem \ref{the:6}; then we can state the following
results, which have been essentially proved in \cite{Cuniberti}.

\begin{lemma}
\label{lem:7}
The function $\tg^{(+;\rmb)}(\frac{1}{2}+\rmi\nu)$ ($\nu\in\R$)
can be represented by the following series that converges in the $L^2$-norm:
\beq
\tg^{(+;\rmb)}\left(\frac{1}{2}+\rmi\nu\right)
=\sum_{n=0}^\infty d_n\,\psi_n(\nu), \nonumber
\label{4.25}
\eeq
where $\psi_n(\nu)$ denotes the Meixner-Pollaczek functions
\beq
\psi_n(\nu) \doteq \frac{1}{\sqrt{\pi}}\,\Gamma\left(\frac{1}{2}+\rmi\nu\right)\,P_n(\nu), \nonumber
\label{4.26}
\eeq
$\Gamma(\cdot)$ being the Euler gamma function, and $P_n(\nu)$ are the
Meixner-Pollaczek polynomials (introduced in Section \ref{subse:jump}). The functions $\psi_n(\nu)$ 
form an orthonormal basis in $L^2(-\infty,+\infty)$. The coefficients $d_n$ are given by
\beq
d_n=2\sqrt{\pi}\sum_{k=0}^\infty\frac{(-1)^k}{k!}g_{k+1}^{(+;\rmb)}
P_n\left[-\rmi\left(k+\frac{1}{2}\right)\right]. \nonumber
\label{4.27}
\eeq
\end{lemma}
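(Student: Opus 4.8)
The plan is to obtain the expansion from two facts: that the Meixner-Pollaczek functions $\{\psi_n\}_{n=0}^\infty$ form an orthonormal basis of $L^2(-\infty,+\infty)$, and that the boundary function $\tg^{(+;\rmb)}(\frac{1}{2}+\rmi\nu)$ belongs to $L^2(-\infty,+\infty)$, a fact already contained in Lemma \ref{lem:3}. Granting these, $\tg^{(+;\rmb)}(\frac{1}{2}+\rmi\cdot)$ automatically admits an $L^2$-convergent expansion in the $\psi_n$, and the only substantive step is to identify its Fourier coefficients $d_n=\int_{-\infty}^{+\infty}\tg^{(+;\rmb)}(\frac{1}{2}+\rmi\nu)\,\overline{\psi_n(\nu)}\,\rmd\nu$ with the stated series in the Fourier coefficients $g_k^{(+;\rmb)}$.

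First I would record that $\tg^{(+;\rmb)}(\frac{1}{2}+\rmi\nu)$, regarded as a function of $\nu\in\R$, lies in $L^2(-\infty,+\infty)$: this is statement (iv) of Lemma \ref{lem:3}, since $\tg^{(+;\rmb)}\in H^2(\CCp)$ has $L^2$ boundary values on the line $\Real\lambda=\frac{1}{2}$, and statements (vi)--(vii) of that lemma show that these boundary values are continuous, vanish at infinity, and realize the supremum norm of $\tg^{(+;\rmb)}$ over the half-plane (equivalently this is Theorem \ref{the:6}(7) with $\ell=\frac{1}{2}$). Next I would recall the classical fact that the normalized Meixner-Pollaczek functions $\psi_n(\nu)=\frac{1}{\sqrt\pi}\,\Gamma(\frac{1}{2}+\rmi\nu)\,P_n(\nu)$ are orthonormal with respect to Lebesgue measure --- which is merely the orthogonality of the polynomials $P_n=P_n^{(1/2)}$ with weight $\frac{1}{\pi}|\Gamma(\frac{1}{2}+\rmi\nu)|^2$ rewritten --- and, moreover, \emph{complete} in $L^2(\R)$, completeness being a consequence of the determinacy of the associated moment problem (Carleman's criterion), which makes the orthogonal polynomials dense in the weighted $L^2$-space; see \cite{Bateman,Szego} and \cite{Cuniberti}. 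Completeness then yields $\tg^{(+;\rmb)}(\frac{1}{2}+\rmi\nu)=\sum_n d_n\psi_n(\nu)$ with convergence in $L^2(-\infty,+\infty)$.

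To compute $d_n$ I would insert an integral representation of $\tg^{(+;\rmb)}$ on the line $\Real\lambda=\frac{1}{2}$ --- available from Remark \ref{rem:1} (the Laplace-type formula \eqref{KL.31bis}) or equivalently from statements (5)--(6) of Theorem \ref{the:6} --- together with the Euler integral for the Gamma factor appearing in $\overline{\psi_n}$, interchange the orders of integration and summation, and recognize the resulting finite expressions through a generating-function identity for the Meixner-Pollaczek polynomials, which is what produces the values $P_n[-\rmi(k+\frac{1}{2})]$. Using finally that $\tg^{(+;\rmb)}(\lambda)$ is the unique Carlsonian interpolation of $\{g_k^{(+;\rmb)}\}_{k=1}^\infty$, so that $\tg^{(+;\rmb)}(k)=g_k^{(+;\rmb)}$, one collects everything into $d_n=2\sqrt\pi\sum_{k=0}^\infty\frac{(-1)^k}{k!}\,g_{k+1}^{(+;\rmb)}\,P_n[-\rmi(k+\frac{1}{2})]$. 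This is exactly the computation already carried out in \cite{Cuniberti} (and it runs parallel to the $z$-plane computation behind Theorem \ref{the:4}), so I would present it compactly rather than in full.

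The part that needs care, and which I expect to be the main obstacle, is the rigorous justification of these interchanges together with the absolute convergence of the series defining $d_n$. Convergence of the $k$-series is secured by the hypothesis: the assumption that $f_k=kg_k^{(+;\rmb)}$ satisfies condition \eqref{1.15} with $p=2$ gives, through the bound \eqref{1.16a} of Remark \ref{rem:222}, the decay $|g_k^{(+;\rmb)}|=O(k^{-3/2})$, while for each fixed $n$ the number $P_n[-\rmi(k+\frac{1}{2})]$ grows only polynomially in $k$, so the factor $1/k!$ forces absolute convergence. The interchange of integration and summation can then be handled by dominated convergence, using the pointwise bound on $\tg^{(+;\rmb)}(\ell+\rmi\nu)$ from statement (vii) of Lemma \ref{lem:3} and the $L^1(\R)$ bound from statement (v); alternatively, one may sidestep these estimates entirely by transporting, via the unitary correspondence between the Laguerre and Meixner-Pollaczek systems, the already-proved expansion of the jump function in the spirit of Theorem \ref{the:4} to the line $\Real\lambda=\frac{1}{2}$.
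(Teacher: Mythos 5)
Your proposal is correct and follows essentially the same route as the paper, which itself disposes of this lemma in one line by citing Lemma~1 of \cite{Cuniberti}: the ingredients you isolate ($L^2$ boundary values from Lemma~\ref{lem:3}, completeness of the Meixner--Pollaczek system, and the residue-type computation of the coefficients $d_n$ from the Laplace/Carlsonian representation of $\tg^{(+;\rmb)}$) are precisely what that cited proof supplies. The only cosmetic remark is that in \cite{Cuniberti} the values $P_n[-\rmi(k+\frac{1}{2})]$ and the factors $(-1)^k/k!$ arise from a contour-shift picking up the poles of $\Gamma(\frac{1}{2}-\rmi\nu)$ rather than from a generating-function identity, but since you defer that computation to the reference exactly as the paper does, this does not affect the argument.
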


\begin{proof}
The proof is analogous, up to slight modifications, to that of Lemma 1 in \cite{Cuniberti}.
\end{proof}

In view of statement (5) of Theorem \ref{the:6},
$e^{-v/2}J^{(+;\rmb)}(v)$ can be expressed as the inverse Fourier transform of
the function $\tg^{(+;\rmb)}\left(\frac{1}{2}+\rmi\nu\right)\in L^1(-\infty,+\infty)$
(see \eqref{4.7.bis}). Next, we introduce the following weighted $L_w^2$-space,
whose norm is defined by
\beq
\left\|f\right\|_{L_w^2(0,+\infty)} \doteq
\left(\int_0^{+\infty}w(v) \left|f(v)\right|^2\,\rmd v\right)^{\!\!\ud}
\qquad \mathrm{with} \quad w(v)\doteq e^{-v}. \nonumber
\label{4.28}
\eeq
We can then state the following theorem.

\begin{theorem}
\label{the:11}
The jump function $J^{(+;\rmb)}(v)$ ($v\in\R^+$) can be represented by the following expansion:
\beq
J^{(+;\rmb)}(v) = e^{\frac{v}{2}}\sum_{n=0}^\infty \oc_n\,\Psi_n(v)  \qquad (v\in\R^+),
\label{4.29}
\eeq
where
\beq
\oc_n=\sqrt{2}\sum_{k=0}^\infty\frac{(-1)^k}{k!}\, g_{k+1}^{(+;\rmb)}
\, P_n\left[-\rmi\left(k+\frac{1}{2}\right)\right], \nonumber
\label{4.30}
\eeq
$P_n(\cdot)$ being the Meixner-Pollaczek polynomials, and the functions $\Psi_n(v)$ are
\beq
\Psi_n(v) = \rmi^n\sqrt{2}\ L_n(2e^{-v})\,e^{-e^{-v}}\,e^{-\frac{v}{2}}, \nonumber
\label{4.31}
\eeq
$L_n(\cdot)$ being the Laguerre polynomials. The functions $\Psi_n(v)$ form an
orthonormal basis in $L^2(-\infty,+\infty)$. The convergence of the expansion
\eqref{4.29} is in the $L^2_w(0,+\infty)$-norm with weight function $w(v)=e^{-v}$.
\end{theorem}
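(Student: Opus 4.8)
The plan is to obtain the expansion by transporting the Meixner--Pollaczek series for $\tg^{(+;\rmb)}(\ud+\rmi\nu)$ supplied by Lemma~\ref{lem:7} through the inverse Fourier transform, relying on statements (5)--(7) of Theorem~\ref{the:6}. First I would set $F(v)\doteq h(v)\,e^{-v/2}J^{(+;\rmb)}(v)$, with $h$ the Heaviside function. By statements (5)--(7) of Theorem~\ref{the:6} one has $F\in L^2(-\infty,+\infty)$ and $\cF\{F\}(\nu)=\int_0^{+\infty}e^{-v/2}J^{(+;\rmb)}(v)\,e^{-\rmi\nu v}\,\rmd v=\tg^{(+;\rmb)}(\ud+\rmi\nu)$, so that $F$ is, up to the factor $1/\sqrt{2\pi}$, the inverse Fourier transform of $\tg^{(+;\rmb)}(\ud+\rmi\cdot)$. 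Substituting the expansion $\tg^{(+;\rmb)}(\ud+\rmi\nu)=\sum_{n=0}^\infty d_n\psi_n(\nu)$ of Lemma~\ref{lem:7}, which converges in $L^2(-\infty,+\infty)$, and using that the inverse Fourier transform is a bounded (indeed, up to the constant $1/\sqrt{2\pi}$, unitary) operator on $L^2(-\infty,+\infty)$, I may apply it term by term:
\beq
F(v)=\sum_{n=0}^\infty d_n\,\frac{1}{2\pi}\int_{-\infty}^{+\infty}\psi_n(\nu)\,e^{\rmi\nu v}\,\rmd\nu,
\qquad\text{converging in }L^2(-\infty,+\infty). \nonumber
\eeq

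The crux is to identify $\frac{1}{2\pi}\int_{-\infty}^{+\infty}\psi_n(\nu)\,e^{\rmi\nu v}\,\rmd\nu$ with $\frac{1}{\sqrt{2\pi}}\Psi_n(v)$. For $n=0$, where $P_0\equiv 1$ and $\psi_0(\nu)=\frac{1}{\sqrt{\pi}}\Gamma(\ud+\rmi\nu)$, this is the Mellin--Barnes identity $\frac{1}{2\pi\rmi}\int_{c-\rmi\infty}^{c+\rmi\infty}\Gamma(s)\,x^{-s}\,\rmd s=e^{-x}$ evaluated on the line $s=\ud+\rmi\nu$ with $x=e^{-v}$: it gives $\frac{1}{2\pi}\int\Gamma(\ud+\rmi\nu)e^{\rmi\nu v}\,\rmd\nu=e^{-v/2}e^{-e^{-v}}$, hence $\frac{1}{2\pi}\int\psi_0(\nu)e^{\rmi\nu v}\,\rmd\nu=\frac{1}{\sqrt{\pi}}e^{-v/2}e^{-e^{-v}}=\frac{1}{\sqrt{2\pi}}\Psi_0(v)$. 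For general $n$ this is exactly the Fourier pair between the Meixner--Pollaczek functions $\psi_n$ and the Laguerre functions $\Psi_n$ that already underlies Theorem~\ref{the:4} and Lemma~\ref{lem:7}; it can be derived, e.g., by inserting the generating function of the Laguerre polynomials into the Mellin--Barnes identity above, and I would cite \cite{DeMicheli1,DeMicheli2,Cuniberti,Bateman,Szego} for it. Granting this identity, the displayed series becomes $F(v)=\sum_{n=0}^\infty\frac{d_n}{\sqrt{2\pi}}\Psi_n(v)$; since $d_n=2\sqrt{\pi}\sum_{k=0}^\infty\frac{(-1)^k}{k!}g_{k+1}^{(+;\rmb)}P_n[-\rmi(k+\ud)]$, one reads off $\oc_n=\frac{d_n}{\sqrt{2\pi}}=\sqrt{2}\sum_{k=0}^\infty\frac{(-1)^k}{k!}g_{k+1}^{(+;\rmb)}P_n[-\rmi(k+\ud)]$, as asserted. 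Moreover $\{\Psi_n\}$ is an orthonormal basis of $L^2(-\infty,+\infty)$, being the image of the orthonormal basis $\{\psi_n\}$ under the unitary normalization of the Fourier transform.

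Finally, to get the stated mode of convergence, I would multiply through by $e^{v/2}$ and restrict to $(0,+\infty)$; since $w(v)=e^{-v}$,
\beq
\Big\|J^{(+;\rmb)}-e^{v/2}\textstyle\sum_{n=0}^m\oc_n\Psi_n\Big\|^2_{L^2_w(0,+\infty)}
=\int_0^{+\infty}\Big|F(v)-\textstyle\sum_{n=0}^m\oc_n\Psi_n(v)\Big|^2\,\rmd v
\leqslant\Big\|F-\textstyle\sum_{n=0}^m\oc_n\Psi_n\Big\|^2_{L^2(-\infty,+\infty)}, \nonumber
\eeq
and the right-hand side tends to $0$ as $m\to\infty$, which yields $J^{(+;\rmb)}(v)=e^{v/2}\sum_{n=0}^\infty\oc_n\Psi_n(v)$ with convergence in $L^2_w(0,+\infty)$. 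The main obstacle is precisely the Meixner--Pollaczek/Laguerre Fourier-pair identity; everything else is soft functional analysis --- Plancherel together with statements (5)--(7) of Theorem~\ref{the:6}, continuity of the Fourier transform on $L^2$, and the elementary comparison of the weighted and unweighted $L^2$-norms.
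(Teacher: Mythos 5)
Your proposal is correct and follows essentially the same route as the paper, which itself only defers to Lemma 2 and Theorem 2 of \cite{Cuniberti}: there, exactly as you do, the Meixner--Pollaczek expansion of $\tg^{(+;\rmb)}(\frac{1}{2}+\rmi\nu)$ is transported through the (unitary) inverse Fourier transform, using the Fourier pair between the functions $\psi_n$ and the Laguerre functions $\Psi_n$, to yield the $L^2_w$-convergent expansion of $J^{(+;\rmb)}$. Your normalization bookkeeping ($\oc_n=d_n/\sqrt{2\pi}$), the $n=0$ Mellin--Barnes check, and the reduction of the weighted norm to the unweighted one are all in order, and leaving the general-$n$ Fourier-pair identity to the cited references matches the level of detail of the paper's own proof.
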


\begin{proof}
The proof is analogous, up to minor modifications, to that of Lemma 2 and Theorem 2 of \cite{Cuniberti}.
\end{proof}

We now suppose that only a finite number of Fourier coefficients are known with a certain
degree of approximation.
Then we denote by $g_k^{(+;\rmb;\varepsilon)}$ the Fourier coefficients $g_k^{(+;\rmb)}$ perturbed by noise.
We assume that only $(N+1)$ Fourier coefficients are known within a certain approximation error
of order $\varepsilon$, i.e., the noise is assumed to be such that 
$|g_k^{(+;\rmb;\varepsilon)}-g_k^{(+;\rmb)}|\leqslant\varepsilon$
($\varepsilon>0; k=0,1,2,\ldots,N$). We introduce the finite sum
\beq
\oc_n^{\,(\varepsilon,N)}\doteq\sqrt{2}\sum_{k=0}^N\frac{(-1)^k}{k!}g_{k+1}^{(+;\rmb;\varepsilon)}
P_n\left[-\rmi\left(k+\frac{1}{2}\right)\right]. \nonumber
\label{4.32}
\eeq
Accordingly, for every $n\in\N$ we have: $\oc_n^{\,(0,\infty)}=\oc_n$. Next, we define the approximation
\beq
J^{(+;\rmb)}_{(\varepsilon,N)}(v) \doteq e^{\frac{v}{2}}
\sum_{n=0}^{\om_0(\varepsilon,N)} \oc_n^{\,(\varepsilon,N)}\,\Psi_n(v) \qquad (v\in\R^+), \nonumber
\label{4.33}
\eeq
where $\om_0(\varepsilon,N)$ is defined by
\beq
\om_0(\varepsilon,N)=\max\left\{m\in\N\,:\,\sum_{n=0}^m\left|\oc_n^{\,(\varepsilon,N)}\right|^2\leqslant \oC\right\},
\nonumber
\label{4.34}
\eeq
where the constant $\oC$ is given by
\beq
\oC = \sum_{n=0}^\infty\left|\oc_n^{\,(0,\infty)}\right|^2 = \left\|e^{-v/2}J^{(+;\rmb)}(v)\right\|^2_{L^2(0,+\infty)}.
\nonumber
\label{4.35}
\eeq
We can then state the following result.

\begin{theorem}
\label{the:12}
The following equality holds:
\beq
\lim_{\staccrel{N\to+\infty}{\scriptstyle\varepsilon\to 0}}
\left\|J^{(+;\rmb)}(v) - e^{\frac{v}{2}} \sum_{n=0}^{\om_0(\varepsilon,N)} \oc_n^{\,(\varepsilon,N)}\,\Psi_n(v) 
\right\|_{L^2_w(0,+\infty)}=0. \nonumber
\label{4.36}
\eeq
\end{theorem}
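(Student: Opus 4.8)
The plan is to mirror the proof of Theorem \ref{the:5} (i.e., Theorem 3 of \cite{DeMicheli2}) and of the companion result of \cite{Cuniberti}, exploiting that, with $w(v)=e^{-v}$, the map $J\mapsto h(v)\,e^{-v/2}J(v)$ (with $h$ the Heaviside function) is an isometry of $L^2_w(0,+\infty)$ onto the subspace of $L^2(\R)$ of functions supported on $[0,+\infty)$: indeed $\int_0^{+\infty}e^{-v}\,\bigl|J^{(+;\rmb)}(v)-e^{v/2}g(v)\bigr|^2\,\rmd v=\bigl\|h\,e^{-v/2}J^{(+;\rmb)}-g\bigr\|^2_{L^2(\R)}$ for any $g$ supported on $[0,+\infty)$. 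Under this identification, statement (5) of Theorem \ref{the:6} says that $e^{-v/2}J^{(+;\rmb)}(v)$ is the (inverse) Fourier transform of $\tg^{(+;\rmb)}(\ud+\rmi\nu)\in L^1(-\infty,+\infty)\cap L^2(-\infty,+\infty)$, the Plancherel equality (statement (7) of Theorem \ref{the:6}) guarantees $h\,e^{-v/2}J^{(+;\rmb)}\in L^2(\R)$, and Theorem \ref{the:11} says that its expansion coefficients with respect to the orthonormal basis $\{\Psi_n\}$ of $L^2(\R)$ are exactly the $\oc_n$, so that $\sum_{n=0}^\infty|\oc_n|^2=\oC=\|e^{-v/2}J^{(+;\rmb)}\|^2_{L^2(0,+\infty)}<\infty$.

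The first block of the argument is to establish, for the present coefficients, the exact analogues of Lemma \ref{lem:6}: (a) $\oc_n^{(0,\infty)}=\oc_n$ and $\sum_n|\oc_n|^2=\oC$; (b) for every $N$ and every $\varepsilon>0$ the truncated, noisy sequence satisfies $\sum_{n=0}^\infty\bigl|\oc_n^{(\varepsilon,N)}\bigr|^2=+\infty$, since $\oc_n^{(\varepsilon,N)}$ is a finite combination of the values $P_n[-\rmi(k+\ud)]$, $k\leq N$, whose leading term grows like a fixed power of $n$, giving $\bigl|\oc_n^{(\varepsilon,N)}\bigr|^2\sim 2\bigl|g_{N+1}^{(+;\rmb;\varepsilon)}\bigr|^2(N!)^{-4}(2n)^{2N}$ as $n\to\infty$ (the analogue of statement (v)(b) of Lemma \ref{lem:6}); (c) $\oc_n^{(\varepsilon,N)}\to\oc_n$ for each fixed $n$, with the order of the limits as in Lemma \ref{lem:6}(iii), i.e. $N\to\infty$ first and then $\varepsilon\to 0$; and (d) consequently $\om_0(\varepsilon,N)\to+\infty$. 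Points (a)--(d) are obtained exactly as in \cite[Lemma 1]{DeMicheli2} and \cite{Cuniberti}, via the connection between the $g_k^{(+;\rmb)}$, the associated moment sequence, and the Meixner--Pollaczek transform; (d) follows from (b), (c) and the definition of $\om_0$, since for every fixed $\bar n$ one has $\sum_{n\leq\bar n}\bigl|\oc_n^{(\varepsilon,N)}\bigr|^2\to\sum_{n\leq\bar n}|\oc_n|^2\leq\oC$, whence $\om_0(\varepsilon,N)\geq\bar n$ for $N$ large and $\varepsilon$ small.

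With these facts the convergence becomes a soft Hilbert--space argument. Set $S_{\varepsilon,N}\doteq\sum_{n=0}^{\om_0(\varepsilon,N)}\oc_n^{(\varepsilon,N)}\,\Psi_n\in L^2(\R)$. By the defining property of $\om_0$, $\|S_{\varepsilon,N}\|^2_{L^2(\R)}=\sum_{n=0}^{\om_0(\varepsilon,N)}\bigl|\oc_n^{(\varepsilon,N)}\bigr|^2\leq\oC$, so the family $\{S_{\varepsilon,N}\}$ is norm-bounded; moreover, for each fixed $m$, once $\om_0(\varepsilon,N)\geq m$ one has $\langle S_{\varepsilon,N},\Psi_m\rangle=\oc_m^{(\varepsilon,N)}\to\oc_m=\langle e^{-v/2}J^{(+;\rmb)},\Psi_m\rangle$ by (c)--(d), and since $\{\Psi_m\}$ is an orthonormal basis this yields $S_{\varepsilon,N}\rightharpoonup e^{-v/2}J^{(+;\rmb)}$ weakly. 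Finally $\|S_{\varepsilon,N}\|^2_{L^2(\R)}\to\oC$: the bound $\leq\oC$ is automatic, while $\liminf\|S_{\varepsilon,N}\|^2\geq\sum_{n\leq\bar n}|\oc_n|^2$ for every $\bar n$ by (c)--(d), so the limit equals $\oC=\|e^{-v/2}J^{(+;\rmb)}\|^2_{L^2(\R)}$. Weak convergence together with convergence of the norms to the norm of the weak limit forces strong convergence $S_{\varepsilon,N}\to e^{-v/2}J^{(+;\rmb)}$ in $L^2(\R)$; restricting to $[0,+\infty)$ and undoing the isometry gives $\bigl\|J^{(+;\rmb)}-e^{v/2}S_{\varepsilon,N}\bigr\|_{L^2_w(0,+\infty)}\to 0$, which is the assertion.

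The step I expect to demand the most care is the interplay in (d) and in the limit $\|S_{\varepsilon,N}\|^2\to\oC$: this is precisely where the data-dependent cutoff $\om_0(\varepsilon,N)$ performs the regularization. Because the coefficients $\oc_n^{(\varepsilon,N)}$ diverge polynomially in $n$ by (b), one cannot control $\sum_{n\leq\om_0}\bigl|\oc_n-\oc_n^{(\varepsilon,N)}\bigr|^2$ termwise --- a priori it is only bounded by $4\oC$, which merely reproduces the crude estimate $\|J^{(+;\rmb)}-J^{(+;\rmb)}_{(\varepsilon,N)}\|\leq 2\|J^{(+;\rmb)}\|$ recorded in the text. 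The point is that $\om_0$ is placed exactly at the edge of the plateau, so that $S_{\varepsilon,N}$ stays in the ball of radius $\sqrt{\oC}$ while its low-frequency part converges to that of the target; the weak-plus-norm (Radon--Riesz) argument then upgrades this to strong convergence with no a priori smoothness bound on $J^{(+;\rmb)}$. One should also verify at the outset that the $\{\Psi_n\}$-coefficients of $h\,e^{-v/2}J^{(+;\rmb)}$ are indeed the $\oc_n$ of Theorem \ref{the:11}; this is immediate from statements (5)--(6) of Theorem \ref{the:6} together with the construction carried out in the proof of Theorem \ref{the:11}.
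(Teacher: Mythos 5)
Your proposal is correct and follows essentially the same route as the paper, which simply defers to the proofs of Lemma 3, Theorem 5 and Corollary 1 of \cite{Cuniberti} (and Lemma 1, Theorem 3 of \cite{DeMicheli2}): those arguments are precisely the plateau properties (a)--(d) you establish, followed by the Parseval/weak-plus-norm convergence step, which in the cited papers is written out coordinatewise as $\sum_{n\le\om_0}|\oc_n-\oc_n^{(\varepsilon,N)}|^2=\sum|\oc_n|^2-2\Real\sum\overline{\oc}_n\oc_n^{(\varepsilon,N)}+\sum|\oc_n^{(\varepsilon,N)}|^2\to\oC-2\oC+\oC=0$ but is the same Radon--Riesz mechanism you invoke. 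Your preliminary identification of $L^2_w(0,+\infty)$ with the subspace of $L^2(\R)$ supported on $[0,+\infty)$, and of the $\oc_n$ with the $\{\Psi_n\}$-coefficients of $h(v)e^{-v/2}J^{(+;\rmb)}(v)$ via statements (5)--(7) of Theorem \ref{the:6}, correctly supplies the details the paper leaves to the citation.
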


\begin{proof}
The statement follows easily by adapting the proofs of Lemma 3, Theorem 5, and Corollary 1 of 
\cite{Cuniberti}\footnote{In formula (116) of \cite{Cuniberti} one must read `$\leqslant$' 
instead of `$=$'.}.
\end{proof}

For what concerns the stability estimates in the numerical approximation of the thermal
Green function at real time we can repeat, up to some modifications essentially due to
the fact that we are treating the problem in a weighted $L^2_w$-space, the same considerations
we developed at the end of Subsection \ref{subse:jump}.

\vspace{0.5 cm}

\setcounter{equation}{0}
\renewcommand{\theequation}{A.\arabic{equation}}
\setcounter{section}{0}
\renewcommand{\thesection}{A}

\section*{Appendix: The KMS analytic structure}
\label{appendix:I}

Consider the algebra $\cA$ generated by the observables of a quantum
system. Denoting by $A,B,\ldots$ arbitrary elements of $\cA$ and by
$A\to A(t)$ ($A=A(0)$) the action of the (time-evolution) group of automorphisms on this
algebra, we study the analytic structure of two-point correlation functions
$\langle A(t_1) B(t_2) \rangle_{\Omega_\beta}$, in a thermal equilibrium state $\Omega_\beta$ of the
system at temperature $T=\beta^{-1}$.
By time-evolution invariance, these quantities only depend on $t=t_1-t_2$, and we shall put
\begin{align*}
\cW_{AB}(t) &= \langle A(t) B\rangle_{\Omega_\beta}, \\
\cW^{\,'}_{AB}(t) &= \langle B A(t) \rangle_{\Omega_\beta}.
\end{align*}
In finite volume approximations, the time-evolution is represented by a unitary group $e^{\rmi H'\,t}$,
so that
\beq
A(t)=e^{\rmi H'\,t}\,A\,e^{-\rmi H'\,t}, \nonumber
\label{A.6}
\eeq
where $H' = H-\mu N$, $H$ being the Hamiltonian, $\mu$ the chemical potential,
and $N$ the particle number; under general conditions, the operators
$e^{-\beta H'}$ have finite traces for all $\beta>0$ (see \cite{Haag}). Then the
correlation functions are given, correspondingly, by the formulae
\begin{align*}
\cW_{AB}(t) &= \frac{1}{\cZ_\beta}\mathrm{Tr}\left\{e^{-\beta H'} A(t) B\right\}, \\ 
\cW^{\,'}_{AB}(t) &= \frac{1}{\cZ_\beta}\mathrm{Tr}\left\{e^{-\beta H'} B A(t) \right\}, 
\end{align*}
where $\cZ_\beta = \mathrm{Tr}\,e^{-\beta H'}$.
One then introduces the following holomorphic functions of the complex
time variable $t+\rmi\gamma$:
\begin{subequations}
\label{A.8}
\begin{align}
& \mathrm{G}_{AB}(t+\rmi\gamma) = \frac{1}{\cZ_\beta}\mathrm{Tr}\left\{e^{-(\beta+\gamma) H'}A(t)e^{\gamma H'}B \right\},
\quad \mathrm{analytic ~in~}\{t\in\R, -\beta<\gamma<0\}, \label{A.8a} \\
& \mathrm{G}'_{AB}(t+\rmi\gamma)=
\frac{1}{\cZ_\beta}\mathrm{Tr}\left\{e^{-(\beta-\gamma) H'}B e^{-\gamma H'}A(t) \right\},
\quad \mathrm{analytic ~in~}\{t\in\R, 0<\gamma<\beta\}, \label{A.8b}
\end{align}
\end{subequations}
which are such that
\begin{align*}
& \lim_{\staccrel{\gamma\to 0}{\scriptstyle\gamma<0}} \mathrm{G}_{AB}(t+\rmi\gamma) = \cW_{AB}(t), \\
& \lim_{\staccrel{\gamma\to 0}{\scriptstyle\gamma>0}} \mathrm{G}'_{AB}(t+\rmi\gamma) = \cW^{\,'}_{AB}(t).
\end{align*}
From \eqref{A.8} and by the cyclic property of $\mathrm{Tr}$, we then obtain
the KMS relation\footnote{In \cite{Cuniberti} a factor $\cZ^{-1}_\beta$
has been erroneously omitted in all the equalities (10).}
\beq
\cW_{AB}(t)=\frac{1}{\cZ_\beta}\mathrm{Tr}\left[e^{-\beta H'} A(t) B\right]
=\frac{1}{\cZ_\beta}\mathrm{Tr}\left[B e^{-\beta H'} A(t) \right]=
\frac{1}{\cZ_\beta}\mathrm{G}'_{AB}(t+\rmi\beta), \nonumber
\label{A.10}
\eeq
which implies the identity of holomorphic functions (in the strip $0<\gamma<\beta$)
\beq
\mathrm{G}_{AB}(t+\rmi(\gamma-\beta))=\mathrm{G}'_{AB}(t+\rmi\gamma).
\label{A.11}
\eeq
According to the analysis of \cite{Haag}, in the Quantum Mechanical framework, and of
\cite{Buchholz} in the Field Theoretical framework, this KMS analytic structure is preserved in the
thermodynamic limit under rather general conditions.

In the case when the algebra $\cA$ is generated by smeared-out bosonic or fermionic field
operators (field theory at finite temperature), the principle of relativistic
causality of the theory implies additional relations for the corresponding pairs
of analytic functions $(\mathrm{G},\mathrm{G}')$. In fact, this principle of relativistic
causality is expressed by the commutativity (resp., anticommutativity) relations for
the boson field $\bPhi(x)$ (resp., fermion field $\bPsi(x)$) at space-like separation,
i.e., for $(t-t')^2<(\bx-\bx')^2$:
\beq
\left[\bPhi(t,\bx),\bPhi(t',\bx')\right]=0
\qquad\left(\mathrm{resp.},\left\{\bPsi(t,\bx),\bPsi(t',\bx')\right\}=0\right).
\label{A.12}
\eeq
In view of the properties \eqref{A.11} and \eqref{A.12} it is possible to prove
\cite{Cuniberti,Haag} that this thermal two-point function (of the field $\bPhi$ or $\bPsi$) can be
fully characterized in terms of an analytic function $\cG(t+\rmi\gamma,\bx)$
(with regular dependence in the space variables) enjoying the following properties:
\begin{itemize}
\item[(a)] $\cG(t+\rmi\gamma,\bx)=\varepsilon\,\cG(t+\rmi(\gamma-\beta),\bx)$, where
$\varepsilon=+$ for a boson field, and $\varepsilon=-$ for a fermion field;
\item[(b)] for each $\bx$, the domain of $\cG$ in the complex variable $t$ is
$\C\setminus\{t+\rmi\gamma\,:\,|t|>|\bx|,\gamma=k\beta,k\in\Z,\beta=T^{-1}\}$;
\item[(c)] the boundary values of $\cG$ at real times are the thermal correlations of the
field, namely:
\begin{align*}
& \lim_{\staccrel{\gamma\to 0}{\scriptstyle\gamma<0}} \cG(t+\rmi\gamma,\bx)=\cW(t,\bx), \\
& \lim_{\staccrel{\gamma\to 0}{\scriptstyle\gamma>0}} \cG(t+\rmi\gamma,\bx)=\cW^{\,'}(t,\bx). 
\end{align*}
\end{itemize}
In this analytic structure two quantities play an important role:
\begin{itemize}
\item[(i)] the restriction $\cG(\rmi\gamma,\bx)$ of the function $\cG$
to the imaginary axis is a $\beta$-periodic (or antiperiodic) function of $\gamma$,
which must be identified with the ``time-ordered product at
imaginary times'', considered in Matsubara's approach of imaginary time formalism \cite{Cuniberti,Haag}.
\item[(ii)] The \emph{jumps} of the function $\cG$ across the real cuts $\{t\,:\,t\geqslant|\bx|\}$
and $\{t\,:\,t\leqslant -|\bx|\}$.
\end{itemize}
In \cite{Cuniberti} we have shown a procedure able to recover the ``jumps'' of the function $\cG$ across the
real cuts from the Fourier coefficients $g(\zeta_n;\cdot)$ of the function $\cG$ at imaginary time
regarded as initial data.

In \cite{Cuniberti} and in the present paper we replace the complex time variable $t+\rmi\gamma$ by 
$\tau=\rmi(t+\rmi\gamma)$,
and then putting $\tau=u+\rmi v$ ($u,v\in\R$), the initial data of the function
$\cG(\tau,\cdot)$ correspond to real values of $\tau$, and the real time $t$ corresponds
to $v$. Up to this change of notation, this analytic function $\cG(\tau,\cdot)$
plays the same role of the previously described two-point function of a boson or fermion field at
fixed $\bx$; the sole caution has to be taken concerning the ``jumps'' across the cuts,
noting that $u=-\gamma$. The only variable involved are $\tau$ and its conjugate
variable $\zeta$, the extra ``spectator variables'', denoted by the point $(\cdot)$,
may as well represent a fixed momentum (after Fourier transformation with respect to
the space variables) or the action on a test-function $f$.

In \cite{Cuniberti} we assumed as hypotheses the analytic structure of KMS condition, i.e., \\[+10pt]
{\bf Hypotheses:} The function $\cG(\tau,\cdot)$ ($\tau=u+\rmi v, u,v\in\R$) satisfies the
following properties:
\begin{itemize}\setlength\itemsep{0.5em}
\item[(a)] it is analytic in the open strips $k\beta<u<(k+1)\beta$
($v\in\R,k\in\Z,\beta=T^{-1}$) and continuous at the boundaries;
\item[(b)] it is periodic (antiperiodic) for bosons (fermions) with period $\beta$, i.e.,
\beq
\cG(\tau+\beta,\cdot)=\left\{
\begin{array}{ll}
\cG(\tau,\cdot) & ~ \mathrm{for~bosons,~} (\tau\in\C), \\[+6pt]
-\cG(\tau,\cdot) & ~ \mathrm{for~fermions,~} (\tau\in\C).
\end{array}
\right. \nonumber
\eeq
\item[(c)] \
$\sds\sup_{k\beta<u<(k+1)\beta}\left|\cG(u+\rmi v)\right|\leqslant C|v|^\alpha
\qquad (v\in\R;\, C,\alpha~\mathrm{constants}).
$
\end{itemize}

\begin{remark}
Strictly speaking condition (c) does not derive from the KMS conditions;
however, in \cite{Cuniberti} it plays a relevant role in order to derive the
``Froissart-Gribov'' equalities and, accordingly, the reconstruction of the
thermal Green functions at real times from those at imaginary times.
\end{remark}

In the present paper we invert the question, and pose the following problem.

\vspace*{0.2cm}

\begin{problem}
Given the trigonometric series
\beq
\cG(\tau,\cdot)=\frac{1}{\beta}
\sum_{n=-\infty}^{+\infty}g(\zeta_n;\cdot)\,e^{-\rmi\zeta_n\tau}
\quad \left(\zeta_n=n\,\frac{\pi}{\beta},\ n\ \mathrm{being\ even\ or\ odd\ integer}\right), \nonumber
\eeq
is it possible to find conditions on the coefficients $g(\zeta_n;\cdot)$ sufficient
to guarantee that $\cG(\tau,\cdot)$ satisfies conditions (a) and (b), i.e., the KMS analytic
structure?
\end{problem}

\newpage

\end{document}